\documentclass[a4paper,10pt, fleqn]{article}

\usepackage{etex}
\usepackage[T1]{fontenc}
\usepackage[utf8]{inputenc}

\usepackage[a4paper,tmargin=3truecm,bmargin=3truecm,hmargin=1.6truecm]{geometry}

\usepackage{xcolor}

\usepackage[square,numbers,sort,compress,semicolon,merge]{natbib}
\usepackage[french, english]{babel}

\usepackage[bitstream-charter]{mathdesign}


\usepackage{array,booktabs}
\usepackage{enumitem}

\usepackage[colorlinks=true, pdfstartview=FitV, linkcolor=blue, citecolor=blue, urlcolor=blue]{hyperref}
\usepackage{notoccite}
\usepackage{amsmath,mathtools,bm}
\usepackage{amsthm}
\usepackage{dsfont}

\usepackage{cancel}
\usepackage{tikz}
\usepackage{tikz-cd}
\usetikzlibrary{cd}

\usepackage[shadow,backgroundcolor=red!20]{todonotes}

\newcounter{mnotecount}[section]
\renewcommand{\themnotecount}{\thesection.\arabic{mnotecount}}
\newcommand{\mnote}[1]%
{\protect{\stepcounter{mnotecount}}${}^{\text{\footnotesize$\bullet$\themnotecount}}$%
\marginpar{\raggedright\tiny$\!\!\!\!\!\!\,\bullet$\themnotecount: #1}}

\numberwithin{equation}{section}


\newcommand{\defeq}{\vcentcolon=}
\newcommand{\rdefeq}{=\vcentcolon}
\newcommand{\connexionShift}{{\boldsymbol\alpha}}
\newcommand{\hconnexionShift}{\widehat{\boldsymbol\alpha}}

\newcommand{\grast}{\bullet} 
\newcommand{\omi}[1]{\buildrel { \buildrel{#1}\over{\vee} } \over .} 

\newcommand\dd{\text{\textup{d}}} 
\newcommand\ds{\text{\textup{s}}} 

\newcommand\exter{{\textstyle\bigwedge}} 
\newcommand\locto{\rightsquigarrow} 

\newcommand{\algzero}{\mathsf{0}}

\newcommand\calH{\mathcal{H}}
\newcommand\calL{\mathcal{L}}
\newcommand\calM{\mathcal{M}}
\newcommand\calP{\mathcal{P}}
\newcommand\calQ{\mathcal{Q}}
\newcommand\calU{\mathcal{U}}


\newcommand\bbR{\mathbb{R}}

\newcommand\frakg{\mathfrak{g}}
\newcommand\frakh{\mathfrak{h}}
\newcommand\frakm{\mathfrak{m}}
\newcommand\frakp{\mathfrak{p}}

\newcommand\frakso{\mathfrak{so}}

\newcommand\frakX{\mathfrak{X}}
\newcommand\frakY{\mathfrak{Y}}
\newcommand\frakZ{\mathfrak{Z}}


\newcommand{\tv}{\tilde{v}}
\newcommand{\tw}{\tilde{w}}
\newcommand{\tvarpi}{\widetilde{\varpi}}

\newcommand{\tX}{\widetilde{X}}

\newcommand\hd{\widehat{\dd}} 
\newcommand{\hg}{\widehat{g}}
\newcommand{\hh}{\widehat{h}}
\newcommand{\hi}{\hat{\imath}}
\newcommand{\hj}{\hat{\jmath}}
\newcommand{\hr}{\widehat{r}}
\newcommand{\hs}{\hat{s}}
\newcommand{\hv}{\hat{v}}
\newcommand{\hw}{\widehat{w}}
\newcommand{\hS}{\widehat{S}}
\newcommand{\hX}{\widehat{X}}

\newcommand{\hfX}{\widehat{\frakX}}
\newcommand{\hfY}{\widehat{\frakY}}
\newcommand{\hgamma}{\widehat{\gamma}}

\newcommand{\homega}{\widehat{\omega}}

\newcommand{\hPsi}{\widehat{\Psi}}

\newcommand{\bh}{\bar{h}}
\newcommand{\bi}{\bar{\imath}}
\newcommand{\bv}{\bar{v}}
\newcommand{\bw}{\bar{w}}
\newcommand{\bA}{\bar{A}}
\newcommand{\bX}{\bar{X}}
\newcommand{\bgamma}{\bar{\gamma}}

\newcommand{\bOmega}{\overline{\Omega}}
\newcommand{\bnabla}{\bar{\nabla}}

\newcommand{\mrA}{\mathring{A}}

\newcommand{\mromega}{\mathring{\omega}}

\newcommand{\mrnabla}{\mathring{\nabla}}

\newcommand\loc{{\text{\textup{loc}}}}
\newcommand{\omegaloc}{\omega_\loc}

\newcommand\lie{{\text{\textup{Lie}}}}
\newcommand{\varpiLie}{\varpi_\lie}
\newcommand{\tvarpiLie}{\tvarpi_\lie}

\newcommand{\omegaLie}{\omega_\lie}
\newcommand{\OmegaLie}{\Omega_\lie}

\newcommand{\mromegaLie}{\mromega_\lie}

\newcommand{\betaLie}{\beta_\lie}
\newcommand{\bOmegaLie}{\bOmega_\lie}


\DeclareMathOperator{\Aut}{Aut}
\DeclareMathOperator{\Diff}{Diff}
\DeclareMathOperator{\Id}{Id}

\DeclareMathOperator{\im}{Im}

\newcommand{\Ad}{\text{\textup{Ad}}}
\newcommand{\ad}{\text{\textup{ad}}}


\newtheorem{theorem}{Theorem}[section]
\newtheorem{lemma}[theorem]{Lemma}
\newtheorem{corollary}[theorem]{Corollary}
\newtheorem{proposition}[theorem]{Proposition}
\theoremstyle{definition}

\newtheorem{remark}[theorem]{Remark}  

%
%
\hypersetup{
pdfauthor={J. Attard, J. François, S. Lazzarini, T. Masson},
pdftitle={Cartan Connections and Atiyah Lie Algebroids},
pdfsubject={},
pdfcreator={pdflatex},
pdfproducer={pdflatex},
pdfkeywords={}
}

\begin{document}

\title{Cartan Connections and Atiyah Lie Algebroids}

\author{J. Attard$^a$, J. François,$^b$ S. Lazzarini$^a$ and T. Masson$^a$\\[2.5mm]
{\normalsize $^a$ Centre de Physique Théorique}\\
{\normalsize Aix Marseille Univ, Université de Toulon, CNRS, CPT, Marseille, France.}\\[1mm]
{\normalsize $^b$ Service de Physique de l'Univers, Champs et Gravitation, Université de Mons – UMONS,}\\ 
{\normalsize 20 Place du Parc, B-7000 Mons, Belgique.}%
} 

\date{August 23, 2019}

\maketitle

\bigskip

\begin{center}\itshape
In memoriam of Sir Michael Atiyah
\end{center}

\bigskip

\begin{abstract}
This work extends both classical results on Atiyah Lie algebroids and previous developments carried out by some of the authors on Ehresmann connections on Atiyah Lie algebroids in their algebraic version. In this paper, we study Cartan connections in a framework relying on two Atiyah Lie algebroids based on a $H$-principal fiber bundle $\mathcal{P}$ and its associated $G$-principal fiber bundle $\mathcal{Q} := \mathcal{P} \times_H G$, where $H \subset G$ defines the model for a Cartan geometry. Completion of a known commutative and exact diagram relating these two Atiyah Lie algebroids allows to completely characterize Cartan connections on $\mathcal{P}$ as a fresh standpoint. Furthermore, in the context of gravity and mixed anomalies, our construction answers a long standing mathematical question about the correct geometrico-algebraic setting in which to combine inner gauge transformations and infinitesimal diffeomorphisms.
\end{abstract}

\bigskip
\noindent{\bfseries Keywords:} Cartan connection, Lie algebroid, gravity, anomalies, gauge transformations, diffeomorphisms

\smallskip
\noindent{\bfseries PACS numbers:} 
02.40.Hw, 
03.65.Ca, 
04.20.Cv, 
11.15.-q 

\smallskip
\noindent{\bfseries AMS classification scheme numbers:} 
51P05, 
70S15, 
70S20, 
83C99  

\newpage

\tableofcontents

\section{Introduction}
\label{sec-Intro}

Our general view on confirmed fundamental physical theories permits to distinguish two types of symmetries: on the one hand “outer” symmetries stemming from transformations of spacetime $\calM$, that is diffeomorphisms $\Diff(\calM)$; on the other hand “inner” symmetries stemming from the action of a gauge group $\calH$. Gauge field theories, in which inner symmetries play the essential role, can be formalized using at least three different mathematical frameworks: the usual geometry of fiber bundles, noncommutative geometry, and transitive Lie algebroids.  We refer to \cite{FranLazzMass14a} for a review and comparisons of the models that can be developed in these various approaches. These gauge theories rely on Ehresmann connections, those behind Yang-Mills fields in physics. On the other hand, formulations of gravity  and conformal theories are well encompassed in the use of Cartan connections.

In this paper, we propose to explore a formulation of Cartan connections in the framework of Atiyah Lie algebroids, which are special cases of transitive Lie algebroids. 

On the mathematical side, this study is motivated by our previous papers \cite{FourLazzMass13a,LazzMass12a} on generalization of Ehresmann connections on transitive Lie algebroids, to which we refer for details. In particular, in \cite{LazzMass12a} we were able to completely characterize Ehresmann connections on Atiyah Lie algebroids in their algebraic guise (using the language of modules of sections and spaces of maps rather than manifolds and bundles)
\begin{equation}
\label{eq ses Atiyah}
\begin{tikzcd}[column sep=25pt, row sep=30pt]
\algzero
	\arrow[r]
& \Gamma_H(\calP, \frakh)
	\arrow[r]
& \Gamma_H(\calP)
	\arrow[r]
& \Gamma(T\calM)
	\arrow[r]
& \algzero
\end{tikzcd}
\end{equation}
constructed on a $H$-principal fiber bundle $\calP$. Here, we do the same for Cartan connections, using a more involved construction, the commutative diagram~\eqref{eq-diagram}, which uses two  Atiyah Lie algebroids: the first one is \eqref{eq ses Atiyah} and the second one is based on the associated $G$-principal fiber bundle $\calQ \defeq \calP \times_H G$, where $H \subset G$ defines the model for a Cartan geometry \cite{Shar97a}. This extension of the $H$-principal fiber bundle $\calP$ to its associated $G$-principal fiber bundle $\calQ$ is an essential procedure\footnote{It is also referred to as an extension of the structural group or as an extension (or prolongation) of a principal bundle along a group homomorphism in the mathematical literature.} in our considerations: see \textsl{e.g.} \cite[V.2.5.5]{GreuHalpVans73a} for the general construction of a $\lambda$-extension (here $\lambda : H \to G$ is the inclusion map), see in addition~\cite[Appendix~A]{Mack87a} for its use in the context of Atiyah Lie algebroids, and \cite[Appendix~A.3]{Shar97a} for its use in the context of Cartan connections. One of our main results is Theorem~\ref{thm eq cartan iso} which identifies a Cartan connection as an isomorphism $\varpiLie : \Gamma_H(\calP) \to \Gamma_H(\calP, \frakg)$ satisfying a certain normalization condition \eqref{eq-diagram-triangle-upleft}.

As in \cite{LazzMass12a}, the notion of (infinitesimal) gauge transformations is studied in this framework. Here, we insist on the fact that Lie derivative along any element $\frakX \in \Gamma_H(\calP)$ combines an inner gauge transformation and an infinitesimal diffeomorphism, so taking into account, in a single structure, the main symmetries revealed by present fundamental physical theories.

On the physical side, this last point was a strong motivation for the present study. Indeed, it has been a long standing issue in field theories to propose a coherent and powerful mathematical setting in which it is possible to represent the combined action of infinitesimal gauge transformations $v \in \text{Lie}\,\calH$ (where $\calH$ is the group of vertical automorphisms of $\calP$) and infinitesimal diffeomorphisms $ X \in \Gamma(T\calM)\ (= \text{Lie}\Diff(\calM))$ on fields. As a case study, this problem is addressed in \cite{LangSchuStor84a, Stor86a, Stor93a, Stor06a}, using reductive Cartan connections in the context of gravity and mixed anomalies. In \cite{LangSchuStor84a} for instance, the action of $v$ can be directly implemented on fields, since $v$ is defined at the level of the principal fiber bundle $\calP$. But the action of $X$, defined at the level of the base manifold $\calM$, requires a lifting $X \mapsto \nabla_X \in \text{Lie}\Aut(\calP)$ to make it an object acting of fields at the level of $\calP$. This can be summarized in the following diagram, which is nothing but \eqref{eq ses Atiyah} since $\Gamma_H(\calP, \frakh) = \text{Lie}\,\calH$ ($\Ad$-$H$-equivariant maps $\calP \to \frakh = \text{Lie}\,H$) and $\Gamma_H(\calP) = \text{Lie}\Aut(\calP)$ (where $\Aut(\calP)$ is the group of bundle automorphisms of $\calP$):
\begin{equation*}
\begin{tikzcd}[column sep=25pt, row sep=30pt]
\algzero
	\arrow[r]
& \text{Lie}\,\calH
	\arrow[r]
& \text{Lie}\Aut(\calP)
	\arrow[r]
& \Gamma(T\calM)
	\arrow[r]
	\arrow[l, bend right=25, "\nabla"', dashed]
& \algzero\,.
\end{tikzcd}
\end{equation*}
According to Stora \cite{Stor86a}, in this construction, the “parallel transport” $\nabla$, even if “physically appealing”, has still an obscure mathematical status.\footnote{It is a long-standing issue about which one of us (S.L.) was asked about by Stora in summer 1991.} We explicitly show that the framework we develop and study in the present paper answers the questions raised in \cite{Stor86a}, and reiterated in \cite{Stor93a} for instance.

Let us make the following remarks. As in \cite{FourLazzMass13a,LazzMass12a}, we will consider Atiyah Lie algebroids in the language of modules over $C^\infty(\calM)$: this is more suited because physics requires the use of sections (namely fields) of vector bundles. In this paper, we shall be at first mainly concerned with the mathematical structures emerging from the general construction and subsequently more detailed physical applications will be addressed in Sections \ref{subsec-GRanomalies} and \ref{subsec Field theory and Lagrangian}.

\smallskip
The paper is organized as follows. In Section~\ref{sec General constructions} we recall some usual constructions related to Cartan geometry and to Atiyah Lie algebroids. One of the key ingredients for further technical developments is presented in Section~\ref{subsec-transport-connection-P-Q}.

Section~\ref{sec-exact-commutative-diagram} is devoted to the construction of the diagram~\eqref{eq-diagram}, which is one of our main results. It is an exact commutative diagram featuring the Atiyah Lie algebroids of $\calP$ and $\calQ$, and whose maps are essential to characterize Cartan connections on $\calP$. A part of this diagram can be found in \cite[Appendix~A]{Mack87a}, see Remark~\ref{rmk Mackenzie}. In Section~\ref{sec The diagram in a local trivialization}, the whole construction is recast in a local trivialization of $\calP$ in order to make contact with field theoretical constructions already obtained in the literature.

Then, in Section~\ref{sec Connections and metrics} we introduce, on top of diagram~\eqref{eq-diagram}, Ehresmann and Cartan connections, as well as metrics. Theorem~\ref{thm eq cartan iso} is the first main result to mention in this section, since it permits to understand how Cartan connections fit into the diagram~\eqref{eq-diagram}, see diagram~\eqref{eq-diagram-cartan} as a new way of considering Cartan connections. Then, Proposition~\ref{prop-lie-derivative-A-theta} is the second main result to mention, since it identifies Lie derivative along $\frakX \in \Gamma_H(\calP)$ as a combined inner gauge transformation and infinitesimal diffeomorphism.

Section~\ref{sec Applications} is devoted to some applications. Constructing Cartan connections in the framework of Lie algebroids is not  new. It has been explored for instance in \cite{CramSaun16a}, but from a different point of view. In particular the authors start from Lie groupoids, which is a very different approach to ours. However, a comparison is proposed in Section~\ref{sec comparison Crampin Saunder}. In Section~\ref{subsec-GRanomalies} we show that our construction provides a framework in which the above mentioned questions regarding gravity and mixed anomalies, as raised \textsl{e.g.} in \cite{LangSchuStor84a, Stor93a}, find a natural resolution as we easily reproduce particular results of \cite{LangSchuStor84a}.
Finally, the last two subsections \ref{sec generalized cartan connections} and \ref{subsec Field theory and Lagrangian} mention without details two questions flowing naturally from our construction, since they were addressed for Ehresmann connections in \cite{FourLazzMass13a,LazzMass12a}: the possible generalization of Cartan connections, and the construction of a Lagrangian for field theories based on Cartan connections. These problems are out of the scope of the present paper, and they will be the subject of further studies.

Even if Sections~\ref{sec General constructions} and \ref{sec-exact-commutative-diagram} contain some rather well-known results in classical differential geometry (bundle theory), full details will be provided in order to have a self-contained work and to promote the module viewpoint.

\section{General constructions}
\label{sec General constructions}

In this section we introduce some general constructions about Atiyah Lie algebroids (see \cite{FourLazzMass13a,LazzMass12a} for details) that will be used in the context of Cartan geometry (see \cite{Shar97a} for details).

\subsection{General definitions}

Let $\calM$ be an $n$-dimensional Riemannian manifold (or Lorentzian manifold of signature $(p, q)$) and $\Gamma(T\calM)$ the set of vector fields of $\calM$. Let $\calP=\calP(\calM, H)$ be a $H$-principal bundle over $\calM$, with $H \subset G$ acting by left multiplication on the group $G$. We denote by $L^G$ and $R^G$ the left and the right actions of $G$ on itself and by $\frakh$ and $\frakg$ the Lie algebras of $H$ and $G$.

\smallskip
Let $\calQ \defeq \calP \times_H G$ be the associated extension $G$-principal fiber bundle for the left action of $H$ on $G$. We denote by $R^\calP_h$ (resp. $R^\calQ_g$) the right action of $h \in H$ on $\calP$ (resp. $g \in G$ on $\calQ$). The fiber bundle $\calQ$ is defined by the projection $\Pi_\calQ : \calP \times G \to \calQ$ induced by the right action $\alpha_h(p,g) \defeq (ph, h^{-1} g)$ of $h \in H$. We use the notation $[p,g] \defeq \Pi_\calQ(p,g)$, and we have $[ph, h^{-1}g] = [p,g]$ for any $h \in H$, and $R^\calQ_{g'} [p,g] = [p, g g']$ for any $g' \in G$. The right action $R^\calQ_{g'} q = [p, g g']$ is induced by the right action $R^{\calP \times G}_{g'} (p,g) \defeq (p, g g')$ on $\calP \times G$, so that $R^\calQ_{g'} \circ \Pi_\calQ (p,g) = \Pi_\calQ \circ R^{\calP \times G}_{g'} (p,g)$. Notice that $R^{\calP \times G}_{g'} = \Id_\calP \times R^G_{g'}$.

The natural inclusion $\zeta : \calP \to \calQ$, $\zeta(p) \defeq [p,e]$ is compatible with the right actions of $H$ on $\calP$ and $\calQ$, since $[p,e]h = [p,h] = [ph,e]$. This inclusion permits to identify $\calP$ with $\zeta(\calP) = \{ [p,e] \mid p \in \calP \} \subset \calQ$.

Denote by $\Gamma_H(\calP, \frakh)$ the space of equivariant maps $v : \calP \to \frakh$ satisfying $({R^\calP_h}^* v)(p) = v(ph) = \Ad_{h^{-1}} (v(p))$ for any $p \in \calP$ and $h \in H$. In the same way, denote by $\Gamma_G(\calQ, \frakg)$ the space of equivariant maps $\calQ \to \frakg$. We introduce the associated vector bundles $\calL^\calP \defeq \calP \times_{\Ad} \frakh$ and $\calL^\calQ \defeq \calQ \times_{\Ad} \frakg$: then we have the well-known isomorphisms between sections of associated bundles and equivariant maps $\Gamma(\calL^\calP) \simeq \Gamma_H(\calP, \frakh)$ and $\Gamma(\calL^\calQ) \simeq \Gamma_G(\calQ, \frakg)$.

Let $\Gamma_H(\calP) \subset \Gamma(T\calP)$ be the subspace of right invariant vector fields on $\calP$: $T_p R^\calP_h(\frakX_{| p}) = \frakX_{| ph}$ for any $p \in \calP$, $h \in H$ and $\frakX \in \Gamma_H(\calP)$. In the same way, let $\Gamma_G(\calQ)$ be the space of right invariant vector fields on $\calQ$: $T_q R^\calQ_g(\hfX_{| q}) = \hfX_{| qg}$ for any $q \in \calQ$, $g \in G$ and $\hfX \in \Gamma_G(\calQ)$. These spaces define the (transitive) Atiyah Lie algebroids associated to $\calP$ and $\calQ$, and they give rise to the short exact sequence of Lie algebras and $C^\infty(\calM)$-modules
\begin{equation}
\label{eq-sec-Atiyah-P}
\begin{tikzcd}[column sep=25pt, row sep=30pt]
\algzero
	\arrow[r]
& \Gamma_H(\calP, \frakh)
	\arrow[r, "\iota_\calP"]
& \Gamma_H(\calP)
	\arrow[r, "\rho_\calP"]
& \Gamma(T\calM)
	\arrow[r]
& \algzero
\end{tikzcd}
\end{equation}
and
\begin{equation}
\label{eq-sec-Atiyah-Q}
\begin{tikzcd}[column sep=25pt, row sep=30pt]
\algzero
	\arrow[r]
& \Gamma_G(\calQ, \frakg)
	\arrow[r, "\iota_\calQ"]
& \Gamma_G(\calQ)
	\arrow[r, "\rho_\calQ"]
& \Gamma(T\calM)
	\arrow[r]
& \algzero
\end{tikzcd}
\end{equation}
In this paper, we use the identification $C^\infty(\calM) \simeq C_H^\infty(\calP) \simeq C_G^\infty(\calQ)$, where $C_H^\infty(\calP) =\{ f \in C^\infty(\calP) \mid f(ph)=f(p) \}$ and $C_G^\infty(\calQ)$ defined likewise. For an approach of Atiyah Lie algebroids in the standard language of bundles, see the classical references \cite{Mack87a, Mack05a}. As already stated, we prefer to use the language of sections (mainly as equivariant maps on principal fiber bundles) in order to be as close as possible to the field theoretical approach for the purpose of physical applications.

\smallskip
The brackets on $\Gamma_H(\calP, \frakh)$ and $\Gamma_G(\calQ, \frakg)$ are induced pointwise by the brackets in $\frakh$ and $\frakg$, and the brackets on $\Gamma_H(\calP)$ and $\Gamma_G(\calQ)$ are induced by the brackets of vector fields. The map $\iota_\calP$ is defined by 
\begin{equation*}
\iota_\calP(v)(p) \defeq -v(p)^v_{| p} = \frac{d}{dt} p \cdot e^{-t v(p)} {}_{| t=0}
\end{equation*}
where, for any $\eta \in \frakh$, $\eta^v$ denotes the vertical vector field on $\calP$ associated to the right action $R^\calP$. The map $\rho_\calP$ is the projection of right invariant vector fields on $\calP$ to vector fields on $\calM$. The maps $\iota_\calQ$ and $\rho_\calQ$ are defined in the same way. We refer to \cite{LazzMass12a} for more details and references.

The spaces $\Gamma_H(\calP, \frakh)$ and $\Gamma_G(\calQ, \frakg)$ are the so-called kernels of these transitive Lie algebroids. Denote by $\Gamma_H(\calP, \frakg)$ the space of $H$-equivariant maps $\calP \to \frakg$. Then the natural inclusion $i_\frakh : \frakh \hookrightarrow \frakg$ induces a natural inclusion $i : \Gamma_H(\calP, \frakh) \to \Gamma_H(\calP, \frakg)$ which is an injection and a morphism of Lie algebras and $C^\infty(\calM)$-modules. Since this map is “just an inclusion”, it will be often omitted in some expressions.

\begin{proposition}
There is a natural injective morphism of Lie algebras and $C^\infty(\calM)$-modules
\begin{equation}
\label{eq-def-j}
j: \Gamma_H(\calP, \frakh) \to \Gamma_G(\calQ, \frakg),
\quad 
j(v)([p,g]) \defeq \Ad_{g^{-1}} \circ i \circ v(p).
\end{equation}
\end{proposition}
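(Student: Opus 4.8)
The plan is to verify the four assertions packaged into the statement---well-definedness as a map out of the quotient, $C^\infty(\calM)$-linearity, compatibility with the brackets, and injectivity---each of which reduces to a short pointwise computation. The only genuinely non-formal ingredient is the defining equivariance of elements of $\Gamma_H(\calP,\frakh)$ together with the fact that the inclusion $i_\frakh:\frakh\hookrightarrow\frakg$ is a Lie algebra embedding that is equivariant for the adjoint action of $H$, i.e. $i_\frakh(\Ad_{h^{-1}}\xi)=\Ad_{h^{-1}}i_\frakh(\xi)$ for $\xi\in\frakh$ and $h\in H$ (this holds since $H\subset G$). Smoothness of $j(v)$ is automatic from the smoothness of $v$, $i_\frakh$ and the adjoint map.

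First I would show that $j(v)$ descends to a genuine function on $\calQ$, that is, that $\Ad_{g^{-1}}(i(v(p)))$ is unchanged under the relation $[ph,h^{-1}g]=[p,g]$. Evaluating on the representative $(ph,h^{-1}g)$ and using the $H$-equivariance $v(ph)=\Ad_{h^{-1}}(v(p))$ together with the $H$-equivariance of $i_\frakh$ gives $\Ad_{(h^{-1}g)^{-1}}(i(v(ph)))=\Ad_{g^{-1}h}(\Ad_{h^{-1}}(i(v(p))))$, which collapses to $\Ad_{g^{-1}}(i(v(p)))$ because $\Ad_{g^{-1}h}\circ\Ad_{h^{-1}}=\Ad_{g^{-1}}$. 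I expect this step to be the crux, since it is the only place where the equivalence relation defining $\calQ$ interacts with the equivariance of $v$; the remaining points are essentially bookkeeping. An almost identical computation with a factor $g'\in G$ inserted on the right, namely $j(v)([p,gg'])=\Ad_{g'^{-1}}(\Ad_{g^{-1}}(i(v(p))))=\Ad_{g'^{-1}}(j(v)([p,g]))$, proves that $j(v)$ is $G$-equivariant, hence that $j(v)\in\Gamma_G(\calQ,\frakg)$.

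Next I would treat the algebraic structure. For $C^\infty(\calM)$-linearity I would invoke the identifications $C^\infty(\calM)\simeq C_H^\infty(\calP)\simeq C_G^\infty(\calQ)$: a function $f$ takes the same fiberwise-constant value at $p$ and at $[p,g]$, so by linearity of $\Ad_{g^{-1}}$ one gets $j(fv)([p,g])=f(p)\,\Ad_{g^{-1}}(i(v(p)))=f\,j(v)([p,g])$, i.e. $j(fv)=f\,j(v)$. For the bracket, I would use that the brackets on the kernels are induced pointwise, that $i$ is a morphism of Lie algebras, and that $\Ad_{g^{-1}}$ is a Lie algebra automorphism of $\frakg$, to compute $j([v,w])([p,g])=\Ad_{g^{-1}}(i([v(p),w(p)]))=[\Ad_{g^{-1}}(i(v(p))),\Ad_{g^{-1}}(i(w(p)))]=[j(v),j(w)]([p,g])$, which gives $j([v,w])=[j(v),j(w)]$.

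Finally, injectivity follows at once: if $j(v)=0$, then evaluating at $g=e$ yields $i(v(p))=0$ for all $p\in\calP$, and injectivity of $i$ forces $v=0$. I would conclude by remarking that $j$ is natural in that it is manufactured solely from the inclusion $i_\frakh$ and the extension datum $\calQ=\calP\times_H G$, involving no auxiliary choices such as a connection.
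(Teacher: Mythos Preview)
Your proof is correct and follows essentially the same approach as the paper: verify well-definedness on the quotient via the $H$-equivariance of $v$ and of $i_\frakh$, check $G$-equivariance directly, deduce the Lie algebra and $C^\infty(\calM)$-module morphism properties from those of $\Ad_{g^{-1}}$ and $i$, and conclude injectivity from that of $i$. The only cosmetic difference is that you evaluate at $g=e$ for injectivity while the paper argues at a general $(p,g)$, which amounts to the same thing.
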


\begin{proof}
First, for any $h \in H$, we have $j(v)([ph, h^{-1}g]) = \Ad_{g^{-1}h} \circ i \circ v(ph) = \Ad_{g^{-1}} \circ \Ad_{h} \circ  i \circ \Ad_{h^{-1}} (v(p)) = j(v)([p,g])$ since $i$ and $\Ad_{h^{-1}}$ commute. So $j(v)$ is well defined on $\calQ$.

Let us show that $j(v) :  \calQ \to \frakg$ is $G$-equivariant. For any $q=[p,g] \in \calQ$ and $g' \in G$,
\begin{align*}
j(v)(q g')
&= j(v)([p, g g'])
= \Ad_{(g g')^{-1}} \circ i \circ v(p)
= \Ad_{g'^{-1}} \circ \Ad_{g^{-1}} \circ i \circ v(p)
= \Ad_{g'^{-1}} \circ j(v)([p, g])
= \Ad_{g'^{-1}} \circ j(v)(q).
\end{align*}

The maps $\Ad_{g^{-1}}$ and $i$ are morphisms of Lie algebras and commute with multiplication by elements in $C^\infty(\calM) \simeq C_H^\infty(\calP) \simeq C_G^\infty(\calQ)$, so $j$ is a morphism of $C^\infty(\calM)$-modules and Lie algebras.

If $j(v)(q) = 0$ for any $q \in \calQ$, then $\Ad_{g^{-1}} \circ i \circ v(p) = 0$ for any $(p,g) \in \calP \times G$, and so $v(p) = 0$ for any $p \in \calP$ (since $i$ is injective), which proves the injectivity of $j$.
\end{proof}

Since $\calQ$ is the quotient of $\calP \times G$ by the action $\alpha$, many structures and objects defined on $\calQ$ can be obtained as $\alpha$-invariant structures and objects on $\calP \times G$. For instance, for any $v \in \Gamma_H(\calP, \frakh)$, define 
\begin{equation*}
\hj(v) : P \times G \to \frakg \quad \text{by} \quad \hj(v)(p,g) \defeq \Ad_{g^{-1}} \circ i \circ v(p). 
\end{equation*}
Then $(\alpha_h^* \hj(v))(p,g) = \hj(v)(ph, h^{-1}g) = \Ad_{g^{-1}} \circ \Ad_{h} \circ i \circ v(ph) = \Ad_{g^{-1}} \circ i \circ v(p) = \hj(v)(p,g)$ so that $\hj(v)$ is $\alpha$-invariant. Moreover, $({R^{\calP \times G}_{g'}}^* \hj(v))(p,g) = \hj(v)(p, g g') = \Ad_{g'^{-1}} \circ \Ad_{g^{-1}} \circ i \circ v(p) = \Ad_{g'^{-1}} \circ \hj(v)(p,g)$ so that $\hj(v)$ is $R^{\calP \times G}$-equivariant. These two properties show that  $\hj(v)$ defines a $R^{\calQ}$-equivariant map $\calQ \to \frakg$, which is $j(v)$ by direct inspection.

Here are some useful results about structures defined on $\calQ$ via $\calP \times G$.
\begin{lemma}
A vector field $\tX \oplus \xi \in \Gamma(T\calP \oplus TG) = \Gamma(T(\calP \times G))$ induces a well defined vector field on $\calQ$ if and only if 
\begin{equation}
\label{eq-tXxi-Q}
T_{(p, g)} \Pi_\calQ (\tX_{| p} \oplus \xi_{| g}) =  T_{(ph, h^{-1} g)} \Pi_\calQ (\tX_{| ph} \oplus \xi_{| h^{-1} g}) \in T_{| [p, g]}\calQ
\end{equation}
for any $p\in \calP$, $h \in H$ and $g \in G$.
\end{lemma}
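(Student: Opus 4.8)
The plan is to read the statement as the classical projectability criterion for a vector field along the surjective submersion $\Pi_\calQ$. First I would recall that the action $\alpha$ of $H$ on $\calP\times G$, $\alpha_h(p,g)=(ph,h^{-1}g)$, is free and proper because the $H$-action on $\calP$ already is; hence $\calQ=(\calP\times G)/H$ is a smooth manifold, $\Pi_\calQ$ is a surjective submersion, and its fibers are exactly the $\alpha$-orbits (indeed $[ph,h^{-1}g]=[p,g]$, and conversely $[p',g']=[p,g]$ forces $(p',g')=\alpha_h(p,g)$ for some $h$). I read ``$\tX\oplus\xi$ induces a well-defined vector field on $\calQ$'' as the existence of a $Y\in\Gamma(T\calQ)$ that is $\Pi_\calQ$-related to $\tX\oplus\xi$, i.e. $T_{(p,g)}\Pi_\calQ(\tX_{|p}\oplus\xi_{|g})=Y_{|[p,g]}$ for all $(p,g)$. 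With this reading the equivalence becomes a short computation built on the fact that $(p,g)$ and $\alpha_h(p,g)$ share the same $\Pi_\calQ$-image.

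For the necessity direction I would simply evaluate the relatedness identity at the two representatives $(p,g)$ and $(ph,h^{-1}g)$ of the same point. Since $\Pi_\calQ\circ\alpha_h=\Pi_\calQ$, both sides equal $Y_{|[p,g]}$, giving
\[
T_{(p,g)}\Pi_\calQ(\tX_{|p}\oplus\xi_{|g})=Y_{|[p,g]}=T_{(ph,h^{-1}g)}\Pi_\calQ(\tX_{|ph}\oplus\xi_{|h^{-1}g}),
\]
which is precisely \eqref{eq-tXxi-Q}. For the sufficiency direction I would turn \eqref{eq-tXxi-Q} into a definition: for $q=[p,g]\in\calQ$ set
\[
Y_{|q}:=T_{(p,g)}\Pi_\calQ(\tX_{|p}\oplus\xi_{|g})\in T_{[p,g]}\calQ .
\]
Because any two representatives of $q$ differ by some $\alpha_h$, hypothesis \eqref{eq-tXxi-Q} says exactly that the right-hand side is independent of the chosen representative, so $Y_{|q}$ is unambiguous; its base point is $q$, so $Y$ is an honest section of $T\calQ$.

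The one point that needs more than formal bookkeeping—the hard part, such as it is—is smoothness of $Y$. Here I would use that $\Pi_\calQ$ is a submersion: by construction $Y\circ\Pi_\calQ=T\Pi_\calQ\circ(\tX\oplus\xi)$ is smooth, and composing with a local smooth section $\sigma$ of $\Pi_\calQ$ over an open $U\subset\calQ$ yields $Y_{|U}=\bigl(T\Pi_\calQ\circ(\tX\oplus\xi)\bigr)\circ\sigma$, a composite of smooth maps; hence $Y$ is smooth. Everything else is a transcription of the orbit relation $[ph,h^{-1}g]=[p,g]$ already recorded above, so I expect no further obstacle.
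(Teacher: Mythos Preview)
Your proposal is correct and follows the same idea as the paper: the condition \eqref{eq-tXxi-Q} is precisely the statement that the pushforward $T_{(p,g)}\Pi_\calQ(\tX_{|p}\oplus\xi_{|g})$ depends only on the class $[p,g]$, so that $q\mapsto \hX_{|q}$ is well defined. The paper's proof is a single sentence to this effect, whereas you additionally spell out the necessity direction and the smoothness of the induced field via local sections of the submersion $\Pi_\calQ$; these are standard details the paper leaves implicit, so your write-up is strictly more complete but not a different approach.
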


\begin{proof}
The vector field $\tX \oplus \xi$ induces the well defined vector field $q \mapsto \hX_{| q} \defeq T_{(p, g)} \Pi_\calQ (\tX_{| p} \oplus \xi_{| g}) \in T_q \calQ$ which does not depend on the representative $(p,g) \in \calP \times G$ of $q =[p,g] \in \calQ$ by \eqref{eq-tXxi-Q}.
\end{proof}

\begin{lemma}
\label{lem-basic-forms-Q}
A form $\homega$ on $\calP \times G$ induces a form $\omega^\calQ$ on $\calQ$ if and only if it is $\alpha$-basic, which means that $\alpha_h^* \homega = \homega$ for any $h \in H$ ($H$-invariance) and $i_\eta \homega = 0$ for any $\eta \in \frakh$ ($H$-horizontality), where  $i_\eta$ is the inner contraction by the fundamental vector field $(p,g) \mapsto \tX_{| p} \oplus \xi_{| g} \in T_p \calP \oplus T_g G = T_{(p,g)} (\calP \times G)$ associated to the action $\alpha$:
\begin{equation}
\label{eq-fundamental-vector-alpha}
\tX_{| p} \oplus \xi_{| g} 
\defeq 
\frac{d}{dt} (p \cdot e^{t\eta}, e^{-t\eta} g) {}_{| t=0}
= \frac{d}{dt} p \cdot e^{t\eta} {}_{| t=0} \oplus  \frac{d}{dt} e^{-t\eta} g {}_{| t=0}
= \eta^v_{| p} \oplus -T_e R^G_g (\eta).
\end{equation}
\end{lemma}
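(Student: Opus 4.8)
The plan is to recognize the map $\Pi_\calQ : \calP \times G \to \calQ$ as the projection of the principal $H$-bundle associated to the action $\alpha$, so that the assertion reduces to the classical characterization of forms descending to a quotient as exactly the basic ones. First I would observe that $\alpha$ is free and proper: if $\alpha_h(p,g) = (ph, h^{-1}g) = (p,g)$ then $ph = p$, forcing $h = e$ because $H$ acts freely on $\calP$, and properness is inherited from that of the $H$-action on $\calP$. Moreover the orbits of $\alpha$ are precisely the fibres of $\Pi_\calQ$, since $[p,g] = [p',g']$ holds if and only if $(p',g') = \alpha_h(p,g)$ for some $h \in H$. Consequently $\Pi_\calQ$ is a surjective submersion whose vertical subspace $\Ker T_{(p,g)}\Pi_\calQ$ is spanned precisely by the fundamental vector fields \eqref{eq-fundamental-vector-alpha} of $\alpha$.

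For the ``only if'' direction, suppose $\homega = \Pi_\calQ^* \omega^\calQ$ for some form $\omega^\calQ$ on $\calQ$. Because $\Pi_\calQ \circ \alpha_h = \Pi_\calQ$ for every $h \in H$, functoriality of the pullback gives $\alpha_h^* \homega = (\Pi_\calQ \circ \alpha_h)^* \omega^\calQ = \Pi_\calQ^* \omega^\calQ = \homega$, which is the $H$-invariance. For horizontality, let $X_\eta$ denote the fundamental vector field \eqref{eq-fundamental-vector-alpha} of $\eta \in \frakh$; since $X_\eta$ is vertical, $T\Pi_\calQ(X_\eta) = 0$, so for any $u_2, \dots, u_k$ we get $(i_\eta \homega)(u_2, \dots, u_k) = \omega^\calQ\big(T\Pi_\calQ(X_\eta), T\Pi_\calQ(u_2), \dots, T\Pi_\calQ(u_k)\big) = 0$, which is the $H$-horizontality.

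The converse ``if'' direction is where the real work lies, and I would construct $\omega^\calQ$ pointwise. Given $q = [p,g]$ and vectors $w_1, \dots, w_k \in T_q \calQ$, I would choose lifts $\tilde w_i \in T_{(p,g)}(\calP \times G)$ with $T_{(p,g)}\Pi_\calQ(\tilde w_i) = w_i$, which exist since $\Pi_\calQ$ is a submersion, and set $\omega^\calQ_q(w_1, \dots, w_k) \defeq \homega_{(p,g)}(\tilde w_1, \dots, \tilde w_k)$. Three checks are then needed. Independence of the choice of lifts holds because two lifts of the same $w_i$ differ by an element of $\Ker T_{(p,g)}\Pi_\calQ$, i.e. by a fundamental vector field of $\alpha$, so that $H$-horizontality $i_\eta \homega = 0$ leaves the value unchanged. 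Independence of the representative $(p,g)$ holds because replacing $(p,g)$ by $(ph, h^{-1}g)$ and each $\tilde w_i$ by $T\alpha_h(\tilde w_i)$ — a valid lift at the new point, as $\Pi_\calQ \circ \alpha_h = \Pi_\calQ$ — changes the evaluation by $\alpha_h^* \homega - \homega = 0$, the $H$-invariance. Finally, smoothness of $\omega^\calQ$ follows by performing the construction with lifts built from local sections of the submersion $\Pi_\calQ$. By construction $\Pi_\calQ^* \omega^\calQ = \homega$, so $\homega$ indeed descends.

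The main obstacle is the simultaneous well-definedness in the converse direction: the value of $\omega^\calQ_q$ must be insensitive both to the choice of lift and to the choice of representative of $q$. The two defining properties of an $\alpha$-basic form address exactly these two ambiguities — $H$-horizontality controls the lift ambiguity and $H$-invariance controls the representative ambiguity — which is the conceptual heart of the statement.
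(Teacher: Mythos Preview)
Your proof is correct and follows exactly the approach of the paper, which simply observes that $\calP \times G$ is an $H$-principal bundle over $\calQ$ for the action $\alpha$ and then invokes the classical characterization of basic forms by citing \cite[Sect.~6.3]{GreuHalpVans73a}. You have written out in full the content of that citation, so your argument is a detailed expansion of the paper's one-line proof.
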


\begin{proof}
Consider $\calP \times G$ as a $H$-principal fiber bundle over $\calQ$ and apply \cite[Sect.~6.3]{GreuHalpVans73a}.
\end{proof}

\begin{lemma}
\label{lem-iotaQv-TLGv}
Let $v \in \Gamma_G(\calQ, \frakg)$, then, $(p,g) \mapsto 0_{| p} \oplus - T_e L^G_{g} \left( v(q) \right) \in T_{(p,g)} (\calP \times G)$ satisfies \eqref{eq-tXxi-Q} and induces $\iota_\calQ(v) \in \Gamma_G(\calQ)$: for any $q = [p,g] \in \calQ$, one has
\begin{equation}
\label{eq-iotaQv-TLv}
\iota_\calQ(v)_{| q} 
=
T_{(p,g)} \Pi_\calQ \left( 0_{| p} \oplus - T_e L^G_{g} \left( v(q) \right) \right).
\end{equation}
\end{lemma}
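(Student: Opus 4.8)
The plan is to derive formula~\eqref{eq-iotaQv-TLv} directly from the intrinsic definition of $\iota_\calQ$ and then deduce the well-definedness assertion as a by-product. By construction $\iota_\calQ$ is defined in the same way as $\iota_\calP$, so it sends $v \in \Gamma_G(\calQ,\frakg)$ to minus the fundamental vertical vector field of the right $G$-action $R^\calQ$ evaluated at $v(q)$, namely
\[
\iota_\calQ(v)_{|q} = \frac{d}{dt}\, R^\calQ_{e^{-tv(q)}}(q){}_{|t=0}.
\]
I would then exhibit the prescribed lift $(p,g) \mapsto 0_{|p} \oplus -T_e L^G_g(v(q))$ on $\calP \times G$ as a velocity vector whose image under $T\Pi_\calQ$ is exactly this tangent vector, and check that it obeys the compatibility condition~\eqref{eq-tXxi-Q}, so that its projection defines a genuine vector field on $\calQ$.

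For the formula, I would use the identity $R^\calQ_{g'}[p,g] = [p, gg']$ recorded in Section~\ref{sec General constructions} to rewrite the defining curve as $R^\calQ_{e^{-tv(q)}}([p,g]) = [p,\, g\, e^{-tv(q)}] = \Pi_\calQ(p,\, g\, e^{-tv(q)})$. Differentiating at $t=0$ and applying the chain rule to $T_{(p,g)}\Pi_\calQ$, the $\calP$-component of the velocity vanishes because the first slot is held constant, while the $G$-component is $\frac{d}{dt}\, g\, e^{-tv(q)}{}_{|t=0} = T_e L^G_g(-v(q)) = -T_e L^G_g(v(q))$ by linearity of the tangent map. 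This reproduces~\eqref{eq-iotaQv-TLv} at once.

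The only step requiring genuine care — and the closest thing to an obstacle here, though a mild one — is verifying~\eqref{eq-tXxi-Q} for this lift, i.e. that replacing the representative $(p,g)$ by $(ph, h^{-1}g)$ yields the same tangent vector in $T_{|[p,g]}\calQ$. Note first that $v(q)$ is unchanged since $[ph, h^{-1}g] = [p,g]$. I would realize the two lifts by the curves $t \mapsto (p,\, g\, e^{-tv(q)})$ and $t \mapsto (ph,\, h^{-1}g\, e^{-tv(q)})$, compose each with $\Pi_\calQ$, and invoke the relation $[ph, h^{-1}g'] = [p, g']$ with $g' = g\, e^{-tv(q)}$ to conclude that both descend to the identical curve $t \mapsto [p,\, g\, e^{-tv(q)}]$ in $\calQ$, hence to the same velocity at $t=0$. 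This simultaneously confirms that the lift satisfies~\eqref{eq-tXxi-Q}, so that the preceding Lemma applies and the induced object is a well-defined element of $\Gamma_G(\calQ)$, and that this element is precisely $\iota_\calQ(v)$.
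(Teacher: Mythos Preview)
Your proof is correct and follows essentially the same approach as the paper: both arguments realize the lift via the curve $t \mapsto (p,\, g\, e^{-tv(q)})$, identify its projection with the defining curve of $\iota_\calQ(v)$, and verify~\eqref{eq-tXxi-Q} by comparing the curves through the two representatives $(p,g)$ and $(ph,h^{-1}g)$. The only cosmetic difference is the order: the paper first checks~\eqref{eq-tXxi-Q} and then derives~\eqref{eq-iotaQv-TLv}, while you do the reverse.
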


\begin{proof}
One has $ - T_e L^G_{g} \left( v(q) \right) = \frac{d}{dt} g e^{-t v(q)} {}_{| t=0}$ so that
\begin{align*}
T_{(ph , h^{-1} g)} \Pi_\calQ \left( 0_{| ph} \oplus - T_e L^G_{h^{-1}g} ( v([ph, h^{-1}g]) ) \right)
&= \frac{d}{dt} \Pi_\calQ \left( ph,  h^{-1} g e^{-t v([ph, h^{-1} g])}\right)_{| t=0}
= \frac{d}{dt} \Pi_\calQ \left( p,  g e^{-t v([p, g])}\right)_{| t=0}
\\
&= 0_{| p} \oplus - T_e L^G_{g} \left( v(q) \right).
\end{align*}
By a direct computation, one has:
\begin{align*}
\iota_\calQ(v)_{| q}
&= \frac{d}{dt} [p,g] e^{-t v(q)} {}_{| t=0}
= \frac{d}{dt} \Pi_\calQ \left( p, g e^{-t v(q)} \right)_{| t=0}
= T_{(p,g)} \Pi_\calQ \left( 0_{| p} \oplus - T_e L^G_{g} \left( v(q) \right)  \right).
\end{align*}
\end{proof}

\begin{lemma}
\label{lem-iotaQj-iotaP}
Let $v \in \Gamma_H(\calP, \frakh)$, then, for any $q = [p,g] \in \calQ$, one has
\begin{equation}
\label{eq-iotaQjv-iotaPv}
\iota_\calQ \circ j(v)_{| q} 
=
T_{(p,g)} \Pi_\calQ \left( \iota_\calP(v)_{| p} \oplus 0_{| g} \right).
\end{equation}
\end{lemma}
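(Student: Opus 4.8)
The plan is to apply Lemma~\ref{lem-iotaQv-TLGv} to the section $j(v) \in \Gamma_G(\calQ, \frakg)$ (which is genuinely $G$-equivariant by the Proposition establishing \eqref{eq-def-j}), and then to recognize that the resulting argument inside $T_{(p,g)}\Pi_\calQ$ differs from $\iota_\calP(v)_{|p} \oplus 0_{|g}$ by exactly an $\alpha$-fundamental vector field, which lies in $\ker T_{(p,g)}\Pi_\calQ$. First I would apply \eqref{eq-iotaQv-TLv} with $v$ replaced by $j(v)$ to obtain
\[
\iota_\calQ \circ j(v)_{|q} = T_{(p,g)}\Pi_\calQ\bigl(0_{|p} \oplus -T_e L^G_g(j(v)(q))\bigr),
\]
and substitute the defining formula $j(v)(q) = \Ad_{g^{-1}} \circ i \circ v(p)$ from \eqref{eq-def-j}.

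The key computation is the Lie-group identity $T_e L^G_g \circ \Ad_{g^{-1}} = T_e R^G_g$ on $\frakg$. I would verify it by a curve argument: since $\Ad_{g^{-1}}(X)$ is represented at $e$ by $s \mapsto g^{-1} e^{sX} g$, left translation by $g$ sends this curve to $s \mapsto e^{sX} g$, whose velocity at $s=0$ is $T_e R^G_g(X)$. Applying this with $X = i \circ v(p)$ yields
\[
\iota_\calQ \circ j(v)_{|q} = T_{(p,g)}\Pi_\calQ\bigl(0_{|p} \oplus -T_e R^G_g(v(p))\bigr).
\]

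Finally I would compare this with the target \eqref{eq-iotaQjv-iotaPv}, recalling that $\iota_\calP(v)_{|p} = -v(p)^v_{|p}$. Subtracting the two arguments inside $T_{(p,g)}\Pi_\calQ$ gives
\[
\bigl(\iota_\calP(v)_{|p} \oplus 0_{|g}\bigr) - \bigl(0_{|p} \oplus -T_e R^G_g(v(p))\bigr) = -\bigl(v(p)^v_{|p} \oplus -T_e R^G_g(v(p))\bigr),
\]
which, since $v(p) \in \frakh$, is precisely $-1$ times the $\alpha$-fundamental vector field \eqref{eq-fundamental-vector-alpha} attached to $\eta = v(p)$. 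Because $\calP \times G$ is an $H$-principal bundle over $\calQ$ with projection $\Pi_\calQ$ (Lemma~\ref{lem-basic-forms-Q}), these fundamental vector fields span $\ker T_{(p,g)}\Pi_\calQ$; hence the two arguments have the same image under $T_{(p,g)}\Pi_\calQ$ and \eqref{eq-iotaQjv-iotaPv} follows.

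The main obstacle is purely one of bookkeeping: correctly identifying $T_e L^G_g \circ \Ad_{g^{-1}}$ with $T_e R^G_g$, and then matching the signs so that the leftover term is seen to be exactly an $\alpha$-vertical vector. These are the two places where a slip in the left/right translation or sign conventions would break the cancellation, so I would keep the curve representatives explicit throughout.
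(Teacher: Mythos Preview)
Your argument is correct and follows essentially the same route as the paper: both start from Lemma~\ref{lem-iotaQv-TLGv}, identify $-T_e L^G_g\circ\Ad_{g^{-1}}\circ v(p)$ with $-T_e R^G_g(v(p))$, and then use that $v(p)\in\frakh$ to pass from the $G$-side to the $\calP$-side. The only cosmetic difference is that the paper works with curves and uses the equality $\Pi_\calQ(p,e^{-tv(p)}g)=\Pi_\calQ(pe^{-tv(p)},g)$ directly (valid since $e^{-tv(p)}\in H$), whereas you phrase the same cancellation infinitesimally as ``the difference is an $\alpha$-fundamental vector, hence killed by $T\Pi_\calQ$''; these are the integrated and tangent versions of the same observation.
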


\begin{proof}
From Lemma~\ref{lem-iotaQv-TLGv}, we have $\iota_\calQ \circ j(v)_{| q} = T_{(p,g)} \Pi_\calQ \left( 0_{| p} \oplus - T_e L^G_{g} \left( j(v)(q) \right) \right)$ for $q=[p,g]$. Since $- T_e L^G_{g} \left( j(v)(q) \right) = - T_e L^G_{g} \circ \Ad_{g^{-1}} \circ v(p) = \frac{d}{dt} e^{-t v(p)} g {}_{| t=0}$, we get
\begin{align*}
T_{(p,g)} \Pi_\calQ \left( 0_{| p} \oplus - T_e L^G_{g} \left( j(v)(q) \right) \right)
&= \frac{d}{dt} \Pi_\calQ \left( p, e^{-t v(p)} g \right)_{| t=0}
= \frac{d}{dt} \Pi_\calQ \left( pe^{-t v(p)}, g \right)_{| t=0}
= T_{(p,g)} \Pi_\calQ \left( \iota_\calP(v)_{| p} \oplus 0_{| g} \right).
\end{align*} 
\end{proof}

By comparing \eqref{eq-iotaQv-TLv} and \eqref{eq-iotaQjv-iotaPv}, we see that since $v$ takes values in $\frakh$ and appears on the RHS of both the equations either as a vector field along $G$ or as a vector field along~$\calP$.

\subsection{Differential calculi}
\label{sec-differential-calculi}

Differential calculi on transitive Lie algebroids have been described in \cite[Sect.~3]{LazzMass12a}, to which we refer for more details. In the following, we restrict ourselves to forms on $\Gamma_H(\calP)$ with values in $\Gamma_H(\calP, \frakh)$ or $\Gamma_H(\calP, \frakg)$. In order to define the differential, we use the natural representation (of Lie algebroids) of $\Gamma_H(\calP)$ on $\Gamma_H(\calP, \frakh)$: for any $\frakX \in \Gamma_H(\calP)$ and $v \in \Gamma_H(\calP, \frakh)$, $\frakX \cdot v \in \Gamma_H(\calP, \frakh)$ is the action of the vector field $\frakX$ on the map $v$. Using the right invariance of $\frakX$ and the $H$-equivariance of $v$, this defines a $H$-equivariant map. In the same way, for $\hv \in \Gamma_H(\calP, \frakg)$, $\frakX \cdot \hv \in \Gamma_H(\calP, \frakg)$. 

We denote by $(\Omega^\grast_\lie(\calP, \frakh), \hd_\frakh)$ the graded space of forms on $\Gamma_H(\calP)$ with values in $\Gamma_H(\calP, \frakh)$ equipped with the differential
\begin{multline}
\label{eq-differentialAtiyahPh}
(\hd_\frakh \omega)(\frakX_1, \dots, \frakX_{p+1}) = \sum_{i=1}^{p+1} (-1)^{i+1} \frakX_i \cdot \omega(\frakX_1, \ldots \omi{i} \ldots, \frakX_{p+1})\\
+ \sum_{1 \leq i < j \leq p+1} (-1)^{i+j} \omega([\frakX_i, \frakX_j], \frakX_1, \ldots \omi{i} \ldots \omi{j} \ldots, \frakX_{p+1})
\end{multline}
for any $\omega \in \Omega^p_\lie(\calP, \frakh)$. In the same way, we denote by $(\Omega^\grast_\lie(\calP, \frakg), \hd)$ the graded space of forms with values in $\Gamma_H(\calP, \frakg)$ equipped with a differential $\hd$ defined with the same formula. The spaces $\Omega^\grast_\lie(\calP, \frakh)$ and $\Omega^\grast_\lie(\calP, \frakg)$ are naturally graded Lie algebras and $C^\infty(\calM)$-modules.

For any $\frakX \in \Gamma_H(\calP)$, it is natural to consider the inner operation $i_\frakX$ on $\Omega^\grast_\lie(\calP, \frakh)$ or $\Omega^\grast_\lie(\calP, \frakg)$, which consists in saturating the first argument of a form with $\frakX$. Then we can define the Lie derivatives along $\frakX$ as
\begin{align}
\label{eq-def-lie-derivative}
L_\frakX &\defeq i_\frakX \, \hd_\frakh + \hd_\frakh \, i_\frakX \text{ on $\Omega^\grast_\lie(\calP, \frakh)$},
&
L_\frakX &\defeq i_\frakX \, \hd + \hd \, i_\frakX \text{ on $\Omega^\grast_\lie(\calP, \frakg)$}.
\end{align}

\begin{lemma}
\label{lem-actions-vector-fields}
For any $\frakX \in \Gamma_H(\calP)$, $v, v' \in \Gamma_H(\calP, \frakh)$, $\hw \in \Gamma_H(\calP, \frakg)$, one has
\begin{align}
\label{eq-actions-vector-fields}
[\frakX, \iota_\calP(v)] &= \iota_\calP(\frakX \cdot v),
&
\frakX \cdot i(v) & = i( \frakX \cdot v),
&
\iota_\calP(v) \cdot v' &= [v, v'],
&
\iota_\calP(v) \cdot \hw &= [i(v), \hw].
\end{align}
\end{lemma}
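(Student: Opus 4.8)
The plan is to treat the three algebraic identities (the second, third and fourth) by unwinding the definitions of $\iota_\calP$, of $i$, and of the representation $\frakX \cdot (-)$, and to reserve a separate geometric argument for the first identity, which is the only one involving a genuine Lie bracket of vector fields on $\calP$.

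For the third identity I would start from $\iota_\calP(v)_{| p} = \frac{d}{dt} p \cdot e^{-t v(p)} {}_{| t=0}$, so that $(\iota_\calP(v) \cdot v')(p) = \frac{d}{dt} v'(p \cdot e^{-t v(p)}) {}_{| t=0}$. Inserting the $H$-equivariance $v'(p h) = \Ad_{h^{-1}}(v'(p))$ with $h = e^{-t v(p)}$ turns this into $\frac{d}{dt} \Ad_{e^{t v(p)}}(v'(p)) {}_{| t=0} = \ad_{v(p)}(v'(p)) = [v,v'](p)$, which is exactly the claim. The fourth identity is proved verbatim with $\hw \in \Gamma_H(\calP, \frakg)$ in place of $v'$; the only change is that the resulting bracket is now taken in $\frakg$, so differentiation of $\hw(p e^{-t v(p)}) = \Ad_{e^{t v(p)}}(\hw(p))$ produces $\ad_{i_\frakh(v(p))}(\hw(p)) = [i(v), \hw](p)$, which accounts for the appearance of $i$. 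The second identity is the most elementary: since $i$ is the pointwise application of the fixed linear map $i_\frakh$, the chain rule gives $\frakX_{| p}(i_\frakh \circ v) = i_\frakh(\frakX_{| p} v)$, that is $\frakX \cdot i(v) = i(\frakX \cdot v)$.

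The first identity is the part I expect to require the most care, since $\iota_\calP(v) = -v^v$ carries a coefficient $v$ varying over $\calP$, so one cannot treat it as a single fundamental vector field. My plan is a freezing argument. First I would record that a right-invariant $\frakX$ commutes with every fundamental vector field $\eta^v$ of constant $\eta \in \frakh$: the flow of $\frakX$ commutes with all right translations $R^\calP_h$ by right-invariance, and the flow of $\eta^v$ is $t \mapsto R^\calP_{e^{t\eta}}$, so the two flows commute and $[\frakX, \eta^v] = 0$. Then, fixing $p_0 \in \calP$, I would split $\iota_\calP(v)$ as the sum of the constant-coefficient fundamental field $-v(p_0)^v$ and the remainder $A$ with $A_{| p} = -\bigl(v(p) - v(p_0)\bigr)^v_{| p}$, which vanishes at $p_0$. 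By bilinearity of the bracket the constant part contributes nothing at $p_0$, while for a vector field vanishing at $p_0$ the bracket $[\frakX, A]_{| p_0}$ reduces to the directional derivative of the coefficients of $A$ along $\frakX$. Using the linearity of $\eta \mapsto \eta^v$ and the fact that only the differentiated coefficient survives (the undifferentiated term carries the vanishing factor $v(p_0) - v(p_0)$), this directional derivative equals $-\bigl((\frakX \cdot v)(p_0)\bigr)^v_{| p_0} = \iota_\calP(\frakX \cdot v)_{| p_0}$. Since $p_0$ is arbitrary, this gives $[\frakX, \iota_\calP(v)] = \iota_\calP(\frakX \cdot v)$. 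As a consistency check I would note that the left-hand side lies automatically in $\Ker \rho_\calP$ (because $\rho_\calP$ is a morphism of Lie algebras and $\iota_\calP(v)$ is vertical), hence is of the form $\iota_\calP(w)$ for a unique $w \in \Gamma_H(\calP, \frakh)$ by injectivity of $\iota_\calP$, and the computation identifies $w = \frakX \cdot v$.
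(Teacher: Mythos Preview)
Your treatment of the second, third and fourth identities is essentially identical to the paper's: the paper computes $(\iota_\calP(v)\cdot v')(p)$ and $(\iota_\calP(v)\cdot \hw)(p)$ exactly as you do, by differentiating the $H$-equivariance along the curve $t\mapsto p\,e^{-t v(p)}$, and dismisses the second identity as immediate from the definition of $i$ via $i_\frakh$.

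For the first identity the paper does not give an argument at all; it simply cites \cite[Sect.~4.2]{LazzMass12a}. Your freezing argument is a correct self-contained proof: the key inputs are (i) $[\frakX,\eta^v]=0$ for constant $\eta\in\frakh$ because the flow of a right-invariant field commutes with all $R^\calP_h$, and (ii) at a point where the coefficient of a fundamental-vector-field expansion vanishes, only the $\frakX$-derivative of the coefficient contributes to the bracket. Both steps are sound, and the consistency check via $\Ker\rho_\calP$ is a nice way to see a priori that the answer must be of the form $\iota_\calP(w)$. So your proof is complete and in fact more informative than the paper's on this point; the rest matches.
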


\begin{proof}
The first relation is proved in \cite[Sect.~4.2]{LazzMass12a} and the second relation is a direct consequence of the definition of $i$ from the injection $i_\frakh$. The third and fourth relations are proved with similar computations:
\begin{align*}
(\iota_\calP(v) \cdot v')(p)
&= \frac{d}{dt} v'\left( p\, e^{-t v(p)} \right)_{| t=0}
= \frac{d}{dt} \Ad_{e^{t v(p)}} (v'(p)) {}_{| t=0}
= [v(p), v'(p)],
\\
(\iota_\calP(v) \cdot \hw)(p)
&= \frac{d}{dt} \hw\left( p\, e^{-t v(p)} \right)_{| t=0}
= \frac{d}{dt} \Ad_{e^{t i(v(p))}} (\hw(p)) {}_{| t=0}
= [i(v(p)), \hw(p)].
\end{align*}
\end{proof}

\subsection{Local trivializations}
\label{sec-local-trivializations}

Trivialization of structures on Atiyah Lie algebroids are performed in different steps, see \cite{LazzMass12a, FourLazzMass13a} to which we refer for details. A local trivialization of $\calP$ relies on the choice of a local section $s : \calU \to \calP_{| \calU}$. In order to trivialize $\calQ$ at the same time and in a compatible way, we define the (canonically) associated local section of $\calQ$ by
\begin{equation*}
\hs : \calU \to \calQ_{| \calU}, \quad \hs(x) \defeq \Pi_\calQ (s(x), e) = [s(x), e].
\end{equation*}
Throughout the paper, we shall use a wavy arrow $\locto$ to relate a (global) space or object on $\calP$ or $\calQ$ to its local version over $\calU$. For instance, $\calP_{| \calU} \locto \calU \times H$ with $p \locto (x, h)$ such that $p = s(x) h$, and $\calQ_{| \calU} \locto \calU \times G$ with $q \locto (x, g)$ such that $q = \hs(x) g$.

The first easy step is to trivialize spaces $\Gamma_H(\calP, \frakm)$ of $H$-equivariant maps from $\calP$ to a $H$-module $\frakm$. Denote by $\rho(h) m$ the left action of $H$ on $\frakm$. Then one has $\Gamma_H(\calP_{| \calU}, \frakm) \locto \Gamma(\calU \times \frakm)$, where $v \locto \gamma \defeq s^* v : \calU \to \frakm$. The reconstruction map $\Psi_\frakm : \Gamma(\calU \times \frakm) \xrightarrow{\simeq} \Gamma_H(\calP_{| \calU}, \frakm)$ associates to $\gamma : \calU \to \frakm$ the map $\Psi_\frakm(\gamma)(p) \defeq \rho(h^{-1}) \gamma(x)$ for any $p = s(x) h \in \calP_{| \calU}$ and this map satisfies the $H$-equivariance relation $\Psi_\frakm(\gamma)(ph) = \rho(h^{-1}) \Psi_\frakm(\gamma)(p)$.

Using this procedure with $\rho = \Ad$, we get $\Gamma_H(\calP_{| \calU}, \frakh) \locto \Gamma(\calU \times \frakh)$, $\Gamma_H(\calP_{| \calU}, \frakg) \locto \Gamma(\calU \times \frakg)$, and $\Gamma_G(\calQ_{| \calU}, \frakg) \locto \Gamma(\calU \times \frakg)$ where for this last trivialization we used the pull-back by $\hs$. We denote by $\Psi_\frakh$, $\Psi_\frakg$, and $\hPsi_\frakg$ the corresponding reconstruction maps.

The next step is the trivialization $\Gamma_H(\calP_{| \calU}) \locto \Gamma(T\calU \oplus \calU \times \frakh)$. It is realized as follows. For any $\frakX \in \Gamma_H(\calP)$, define $X = \rho_\calP(\frakX) \in \Gamma(T\calM)$. Then, for $p = s(x) h$, $\bX_{| p} \defeq T_{s(x)} R^\calP_h \circ T_x s (X_{| x})$ defines a element in $\Gamma_H(\calP_{| \calU})$ and $\rho_\calP(\bX) = X$ by construction. This implies that there is a unique $v \in \Gamma_H(\calP_{| \calU}, \frakh)$ such that $\frakX = \bX + \iota_\calP(v)$. Then $\frakX \locto X \oplus \gamma$ where $X$ is restricted to $\calU$ and $\gamma \defeq s^* v$ as before. Denote by $S : \Gamma(T\calU \oplus \calU \times \frakh) \xrightarrow{\simeq} \Gamma_H(\calP_{| \calU})$ the reconstruction map defined by $S(X \oplus \gamma)_{| p} \defeq T_{s(x)} R^\calP_h \circ T_x s (X_{| x}) - [\Ad_{h^{-1}} \circ \gamma(x)]^v_{| s(x) h}$ for any $p=s(x)h$. Let us define $\mrnabla_{X | p} \defeq T_{s(x)} R^\calP_h \circ T_x s (X_{| x})$, then
\begin{equation}
\label{eq-S-notnabla-Psi}
S(X \oplus \gamma) = \mrnabla_X + \iota_\calP \circ \Psi_\frakh(\gamma).
\end{equation}
Notice that, with $\frakY \locto Y \oplus \eta$, one has $[\frakX, \frakY] \locto [X,Y] \oplus (X\cdot \eta - Y \cdot \gamma + [\gamma, \eta]) \rdefeq [X \oplus \gamma, Y \oplus \eta]$ and $S([X \oplus \gamma, Y \oplus \eta]) = [S(X \oplus \gamma), S(Y \oplus \eta)]_{| \calU}$.

Using the same procedure with $\rho_\calQ$, $\hs$, and $R^\calQ$, one has $\Gamma_G(\calQ_{| \calU}) \locto \Gamma(T\calU \oplus  \calU \times \frakg)$ with $\hfX \locto X \oplus \hgamma$. We denote by $\hS : \Gamma(T\calU \oplus  \calU \times \frakg) \xrightarrow{\simeq} \Gamma_G(\calQ_{| \calU})$ the reconstruction map.

Finally, let us describe trivializations of forms. For any $H$-module $\frakm$, define the graded space
\begin{align*}
\Omega^\grast_\lie(\calU, \frakh ; \frakm) \defeq \Omega^\grast(\calU) \otimes \exter^\grast \frakh^* \otimes \frakm
\end{align*}
where on the LHS the grading is the total grading of the RHS. We equip this space with the differential operators
\begin{align*}
\dd : \Omega^\grast(\calU) \otimes \exter^\grast \frakh^\ast \otimes \frakm
& \to \Omega^{\grast+1}(\calU) \otimes \exter^\grast \frakh^\ast \otimes \frakm,
&
\ds'_\frakm : \Omega^\grast(\calU) \otimes \exter^\grast \frakh^\ast \otimes \frakm
&\to \Omega^\grast(\calU) \otimes \exter^{\grast+1} \frakh^\ast \otimes \frakm,
\end{align*}
where $\dd$ is the de~Rham differential on $\Omega^\grast(\calU)$, and $\ds'_\frakm$ is the Chevalley--Eilenberg differential on $\exter^\grast \frakh^\ast \otimes \frakm$.

Then, one has
\begin{align*}
(\Omega^\grast_\lie(\calP_{| \calU}, \frakh), \hd_\frakh) 
&\locto
(\Omega^\grast_\lie(\calU, \frakh ; \frakh), \dd + \ds'_\frakh),
&
(\Omega^\grast_\lie(\calP_{| \calU}, \frakh), \hd) 
&\locto
(\Omega^\grast_\lie(\calU, \frakh ; \frakg), \dd + \ds'_\frakg),
\end{align*}
where a form $\omega \in \Omega^\grast_\lie(\calP_{| \calU}, \frakh)$ is trivialized as $\omegaloc \defeq \Psi^{-1}_\frakh \circ \omega \circ S$ where $S$ is applied to all the arguments of $\omega$. In the same way, for any $\omega \in \Omega^\grast_\lie(\calP_{| \calU}, \frakg)$, one has $\omega \locto \omegaloc \defeq \Psi^{-1}_\frakg \circ \omega \circ S$. These trivializations are isomorphisms of graded differential algebras, graded Lie algebras and $C^\infty(\calU)$-modules. For instance, $\hd_\frakh \omega \locto (\dd + \ds'_\frakh) \omegaloc$ for any $\omega \in \Omega^\grast_\lie(\calP_{| \calU}, \frakh)$. 

We refer to \cite{LazzMass12a, FourLazzMass13a} for details about the change of trivializations when one changes the trivialization section $s$ and/or the open subset $\calU$.

\subsection{\texorpdfstring{Connections on $\calP$ and transported connections on $\calQ$}{Connections on P and transported connections on Q}}
\label{subsec-transport-connection-P-Q}

Let us summarize some constructions and results from \cite{LazzMass12a}. To any Ehresmann connection $\nabla^\calP : \Gamma(T\calM) \to \Gamma_H(\calP)$, we associate a unique map $\alpha^\calP : \Gamma_H(\calP) \to \Gamma_H(\calP, \frakh)$ defined by the relation $\frakX = \nabla^\calP_X - \iota_\calP \circ \alpha^\calP(\frakX)$ for any $\frakX \in \Gamma_H(\calP)$ and where $X \defeq \rho_\calP(\frakX)$. It satisfies the normalization condition $\alpha^\calP \circ \iota_\calP = - \Id$ (and accordingly $\alpha^\calP \circ \nabla^\calP = 0$). There is a one-to-one correspondence between $1$-forms $\alpha^\calP \in \Omega^1_\lie(\calP, \frakh)$ satisfying $\alpha^\calP \circ \iota_\calP = - \Id$ and connections on $\calP$ \cite[Prop.~3.9]{LazzMass12a}. In other words, using $\nabla^\calP$, any $\frakX \in \Gamma_H(\calP)$ uniquely defines two global objects, namely $X\defeq \rho_\calP(\frakX)\in \Gamma(T\calM)$ and $v\defeq -\alpha^\calP(\frakX) \in \Gamma_H(\calP, \frakh)$, such that $\frakX = \nabla^\calP_X + \iota_\calP(v)$.

\medskip
The following technical considerations will be used to simplify the presentation of subsequent results in Sect.~\ref{sec-exact-commutative-diagram}.

\begin{proposition}
\label{prop-connextion-P-to-Q}
Let $\omega^\calP \in \Omega^1(\calP) \otimes \frakh$ be a connection $1$-form on $\calP$. Then the $1$-form defined on $\calP\times G$ by
\begin{equation}
\label{eq-def-omega-hat}
\homega_{| (p,g)} (\tX_{| p} \oplus \xi_{| g}) \defeq
\Ad_{g^{-1}} \left( \omega^\calP_{| p} (\tX_{| p}) \right) + T_g L^G_{g^{-1}} (\xi_{| g}),
\end{equation}
for any $(p,g) \in \calP \times G$ and $\tX_{| p} \oplus \xi_{| g} \in T_p \calP \oplus T_g G$, induces a ``transported'' connection $1$-form $\omega^\calQ$ on $\calQ$ by 
\begin{equation}
\label{eq-omegaQ-homega}
\omega^\calQ_{| q} (\hX_{| q}) \defeq \homega_{| (p,g)} (\tX_{| p} \oplus \xi_{| g}),
\end{equation}
for any $q = [p,g] \in \calQ$, $\hX_{| q} \in T_q \calQ$, and $\tX_{| p} \oplus \xi_{| g} \in T_p \calP \oplus T_g G$ such that $T_{(p,g)} \Pi_\calQ (\tX_{| p} \oplus \xi_{| g}) = \hX_{| q}$.
\end{proposition}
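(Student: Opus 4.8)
The plan is to establish the statement in two stages: first show that the explicitly given $\frakg$-valued $1$-form $\homega$ on $\calP \times G$ descends to a well-defined form $\omega^\calQ$ on $\calQ$, and then verify that $\omega^\calQ$ satisfies the two defining properties of a connection $1$-form for the $G$-principal bundle $\calQ$. Since $\homega$ is manifestly smooth and $\bbR$-linear in its argument, these two features will be inherited by $\omega^\calQ$.

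For the descent I would invoke Lemma~\ref{lem-basic-forms-Q}, applied componentwise: as $\frakg$ is a fixed vector space, the $\alpha$-basicity criterion carries over verbatim to $\frakg$-valued forms, so it suffices to check that $\homega$ is $H$-horizontal and $H$-invariant. For $H$-horizontality I contract $\homega$ with the fundamental vector field \eqref{eq-fundamental-vector-alpha} of the $\alpha$-action, namely $\eta^v_{| p} \oplus -T_e R^G_g(\eta)$ for $\eta \in \frakh$. Using that $\omega^\calP_{| p}(\eta^v_{| p}) = \eta$ (the reproducing property of the connection on $\calP$) together with the identity $T_g L^G_{g^{-1}} \circ T_e R^G_g = \Ad_{g^{-1}}$ on $\frakg$, the two contributions to $\homega_{|(p,g)}$ become $\Ad_{g^{-1}}(\eta)$ and $-\Ad_{g^{-1}}(\eta)$, which cancel, giving $i_\eta \homega = 0$. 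For $H$-invariance I compute $T_{(p,g)}\alpha_h(\tX_{| p} \oplus \xi_{| g}) = T_p R^\calP_h(\tX_{| p}) \oplus T_g L^G_{h^{-1}}(\xi_{| g})$, feed it into $\homega_{|(ph, h^{-1}g)}$, and use the equivariance $(R^\calP_h)^*\omega^\calP = \Ad_{h^{-1}} \circ \omega^\calP$ of the $H$-connection, the factorization $\Ad_{(h^{-1}g)^{-1}} = \Ad_{g^{-1}} \circ \Ad_h$, and $L^G_{(h^{-1}g)^{-1}} \circ L^G_{h^{-1}} = L^G_{g^{-1}}$ to recover $\homega_{|(p,g)}(\tX_{| p} \oplus \xi_{| g})$. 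This yields $\alpha_h^*\homega = \homega$, and Lemma~\ref{lem-basic-forms-Q} then guarantees that \eqref{eq-omegaQ-homega} is independent of both the lift and the representative $(p,g)$ of $q$.

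For the connection properties I would check (a) the reproducing property and (b) the $G$-equivariance. For (a), the fundamental vector field of $\zeta \in \frakg$ for $R^\calQ$ at $q = [p,g]$ lifts through $\Pi_\calQ$ to $0_{| p} \oplus T_e L^G_g(\zeta)$; substituting this into \eqref{eq-def-omega-hat} annihilates the $\omega^\calP$ term and leaves $T_g L^G_{g^{-1}} \circ T_e L^G_g(\zeta) = \zeta$, so $\omega^\calQ(\zeta^v) = \zeta$. For (b), given a lift $\tX_{| p} \oplus \xi_{| g}$ of $\hX_{| q}$, the relation $R^\calQ_{g'} \circ \Pi_\calQ = \Pi_\calQ \circ (\Id_\calP \times R^G_{g'})$ shows that $T_q R^\calQ_{g'}(\hX_{| q})$ is represented at $(p, gg')$ by $\tX_{| p} \oplus T_g R^G_{g'}(\xi_{| g})$. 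Evaluating $\homega$ there, I factor $\Ad_{(gg')^{-1}} = \Ad_{g'^{-1}} \circ \Ad_{g^{-1}}$ in the first term, and in the second term I write $L^G_{(gg')^{-1}} \circ R^G_{g'}$ as the inner automorphism $y \mapsto g'^{-1} y g'$ composed with $L^G_{g^{-1}}$, so that its differential contributes $\Ad_{g'^{-1}} \circ T_g L^G_{g^{-1}}$. Both summands thereby acquire the common factor $\Ad_{g'^{-1}}$, giving $(R^\calQ_{g'})^*\omega^\calQ = \Ad_{g'^{-1}} \circ \omega^\calQ$.

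None of these steps poses a genuine obstacle; the work is the careful tracking of left and right translations and adjoint actions through the relevant differentials. The one point deserving slight care is the equivariance computation (b), where the base points in $L^G_{(gg')^{-1}} \circ R^G_{g'}$ must be organized as a conjugation so that the factor $\Ad_{g'^{-1}}$ emerges uniformly in both summands; at the conceptual level, the only thing to note is that the scalar descent criterion of Lemma~\ref{lem-basic-forms-Q} transfers unchanged to the $\frakg$-valued form $\homega$.
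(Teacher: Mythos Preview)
Your proposal is correct and follows essentially the same route as the paper's own proof: first establishing $\alpha$-basicity of $\homega$ via $H$-horizontality and $H$-invariance to invoke Lemma~\ref{lem-basic-forms-Q}, and then verifying the two connection axioms (reproducing property on fundamental vector fields and $G$-equivariance) by lifting through $\Pi_\calQ$ and tracking the left/right translations and adjoint actions exactly as you describe. The only difference is cosmetic ordering (the paper checks invariance before horizontality), and your remark that Lemma~\ref{lem-basic-forms-Q} extends componentwise to $\frakg$-valued forms is a point the paper leaves implicit.
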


\begin{proof}
First, let us prove that $\homega$ is $H$-basic on $\calP \times G$. The $H$-invariance results from the computation:
\begin{align*}
(\alpha_h^* \homega)_{| (p,g)} (\tX_{| p} \oplus \xi_{| g})
&= \homega_{| (ph, h^{-1} g)} \left(  T_{(p,g)} \alpha_h (\tX_{| p} \oplus \xi_{| g}) \right)
= \homega_{| (ph, h^{-1} g)} \left( T_p R^\calP_h (\tX_{| p}) \oplus T_g L^G_{h^{-1}} (\xi_{| g}) \right)
\\
&= \Ad_{g^{-1} h} \left(  \omega^\calP_{ph} \circ T_p R^\calP_h (\tX_{| p})  \right)  + T_{h^{-1}g} L^G_{g^{-1}h} \circ T_g L^G_{h^{-1}} (\xi_{| g})
\\
&= \Ad_{g^{-1} h} \circ \Ad_{h^{-1}} \left( \omega^\calP_{| p} (\tX_{| p}) \right) + T_g L^G_{g^{-1}} (\xi_{| g})
= \homega_{| (p,g)} (\tX_{| p} \oplus \xi_{| g}).
\end{align*}
For the $H$-horizontality, let $\eta \in \frakh$ and let $\tX_{| p} \oplus \xi_{| g}$ be defined by \eqref{eq-fundamental-vector-alpha}. Using \eqref{eq-def-omega-hat} and $\omega^\calP(\eta^v) = \eta$, one has
\begin{align*}
\Ad_{g^{-1}} \left( \omega^\calP_{| p} (\tX_{| p}) \right) + T_g L^G_{g^{-1}} (\xi_{| g})
&=
\Ad_{g^{-1}}(\eta) - T_g L^G_{g^{-1}} \circ T_e R^G_g (\eta)
= 0.
\end{align*}
Since $\homega$ is $H$-basic on $\calP \times G$, it defines $\omega^\calQ \in \Omega^1(\calQ) \otimes \frakg$ by \eqref{eq-omegaQ-homega} (see Lemma~\ref{lem-basic-forms-Q}). It remains to show that $\omega^\calQ$ is a connection. For any $\hX_{| [p,g]} = T_{(p,g)} \Pi_\calQ \left( \tX_{| p} \oplus \xi_{| g} \right)$ one has
\begin{align*}
T_{[p,g]} R^\calQ_{g'} \left( \hX_{| [p,g]} \right)
&= T_{(p, gg')} \Pi_\calQ \circ T_{(p,g)} R^{\calP \times G}_{g'} \left( \tX_{| p} \oplus \xi_{| g} \right)
= T_{(p, gg')} \Pi_\calQ \left( \tX_{| p} \oplus T_g R^G_{g'} (\xi_{| g}) \right).
\end{align*}
This gives the right equivariance of $\omega^\calQ$:
\begin{align*}
\left ({R^\calQ_{g'}}^* \omega^\calQ \right)_{| [p,g]} \left( \hX_{| [p,g]} \right)
&=
\omega^\calQ_{| [p,g] g'} \left( T_{[p,g]} R^\calQ_{g'} (\hX_{| [p,g]}) \right)
= \homega_{| (p, gg')} \left( \tX_{| p} \oplus T_g R^G_{g'} (\xi_{| g}) \right)
\\
&= \Ad_{{g'}^{-1} g^{-1}}  \circ \omega^\calP_{| p} (\tX_{| p}) + T_{g g'} L^G_{{g'}^{-1} g^{-1}} \circ T_g R^G_{g'} (\xi_{| g})
\\
&= \Ad_{{g'}^{-1}} \left( \Ad_{g^{-1}} \circ \omega^\calP_{| p} (\tX_{| p}) + T_g L^G_{g^{-1}} (\xi_{| g}) \right)
\\
&= \Ad_{{g'}^{-1}} \circ \omega^\calQ_{[p,g]} \left( \hX_{| [p,g]} \right).
\end{align*}
For any $\eta \in \frakg$, the induced vertical vector field $\eta^v$ on $\calQ$ is given by
\begin{align*}
\eta^v_{| [p,g]} 
& \defeq \frac{d}{dt} [p,g] e^{t\eta} {}_{| t=0} 
= T_{(p,g)} \Pi_\calQ \left( \frac{d}{dt} (p, g e^{t\eta}) {}_{| t=0} \right)
= T_{(p,g)} \Pi_\calQ \left( 0_{| p} \oplus T_e L^G_g (\eta) \right),
\end{align*}
so that, by \eqref{eq-omegaQ-homega},
\begin{align*}
\omega^\calQ_{| [p,g]} \left( \eta^v_{| [p,g]}  \right) 
&= \homega_{(p,g)} \left( 0_{| p} \oplus T_e L^G_g (\eta) \right)
= T_g L^G_{g^{-1}} \circ T_e L^G_{g} (\eta)
= \eta,
\end{align*}
which proves the normalization of  $\omega^\calQ$ on vertical vector fields.
\end{proof}

Let us describe the result of Prop.~\ref{prop-connextion-P-to-Q} in terms of a local trivialization of $\calP$ and $\calQ$. We use the notations of Sect.~\ref{sec-local-trivializations}. For any $X_{| x} \in T_x \calM$, let
\begin{align*}
A^\calP(X)_{| x} 
&\defeq (s^* \omega^\calP)_{| x} (X_{| x})
= \omega^\calP_{| s(x)} \left( T_x s(X_{| x}) \right),
&
A^\calQ(X)_{| x} 
&\defeq (\hs^* \omega^\calQ)_{| x} (X_{| x})
= \omega^\calQ_{| \hs(x)} \left( T_x \hs(X_{| x}) \right),
\end{align*}
be the local trivializations of $\omega^\calP$ and $\omega^\calQ$ induced by $s$ and $\hs$.

\begin{lemma}
For $p = s(x) h$, $q = [p, g]$, and $X_{| x} \in T_x \calU$, one has
\begin{gather}
\label{eq-piQ-RP-s-RQ}
T_{(p, g)} \Pi_\calQ \left( T_{s(x)} R^\calP_h \circ T_x s (X_{| x}) \oplus 0_{| g} \right) 
= T_{\hs(x)} R^\calQ_{hg} \circ T_x \hs (X_{| x}),
\\
\label{eq-omegaQ-TRQ-Ths-AdomegaP-Ts}
\omega^\calQ_{| q} \left( T_{\hs(x)} R^\calQ_{hg} \circ T_x \hs (X_{| x}) \right) 
= \Ad_{(hg)^{-1}} \circ \omega^\calP_{| s(x)} \left( T_x s (X_{| x}) \right),
\\
\intertext{and}
\nonumber
A^\calQ = A^\calP \text{ that is } \hs^* \omega^\calQ = s^* \omega^\calP.
\end{gather}
\end{lemma}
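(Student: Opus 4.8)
The plan is to establish the two displayed identities first and then read off $\hs^* \omega^\calQ = s^* \omega^\calP$ as the special case $h = g = e$. The algebraic fact driving the whole computation is the relation $[s(x) h, g] = [s(x), hg]$ in $\calQ$, obtained from $[p' h, h^{-1} g'] = [p', g']$ with $p' = s(x)$ and $g' = hg$. In particular, writing $x_t$ for a curve in $\calU$ with velocity $\dot{x}_0 = X_{| x}$, one has $q = [p,g] = [s(x) h, g] = [s(x), hg] = \hs(x)(hg) = R^\calQ_{hg}(\hs(x))$, so that both sides of \eqref{eq-piQ-RP-s-RQ} live in $T_q \calQ$.

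To prove \eqref{eq-piQ-RP-s-RQ}, I would compute both tangent vectors as velocities of curves. On the left, $T_x s (X_{| x})$ is the velocity of $t \mapsto s(x_t)$, hence $T_{s(x)} R^\calP_h \circ T_x s (X_{| x})$ is the velocity of $t \mapsto s(x_t) h$, and applying $T_{(p,g)} \Pi_\calQ$ to its sum with $0_{| g}$ gives the velocity of $t \mapsto \Pi_\calQ(s(x_t) h, g) = [s(x_t), hg]$. On the right, $T_x \hs (X_{| x})$ is the velocity of $t \mapsto \hs(x_t) = [s(x_t), e]$, so $T_{\hs(x)} R^\calQ_{hg} \circ T_x \hs (X_{| x})$ is the velocity of $t \mapsto [s(x_t), e](hg) = [s(x_t), hg]$. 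The two curves coincide, which gives the identity.

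For \eqref{eq-omegaQ-TRQ-Ths-AdomegaP-Ts}, I would substitute \eqref{eq-piQ-RP-s-RQ} into the left-hand side and evaluate $\omega^\calQ$ via its definition. Writing $Y \defeq T_x s (X_{| x})$ and using \eqref{eq-omegaQ-homega} followed by \eqref{eq-def-omega-hat}, the left-hand side becomes $\homega_{| (p,g)}\bigl( T_{s(x)} R^\calP_h (Y) \oplus 0_{| g} \bigr) = \Ad_{g^{-1}} \bigl( \omega^\calP_{| s(x) h}( T_{s(x)} R^\calP_h (Y) ) \bigr)$, since $T_g L^G_{g^{-1}}(0_{| g}) = 0$. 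The connection equivariance $(R^\calP_h)^* \omega^\calP = \Ad_{h^{-1}} \omega^\calP$ then yields $\omega^\calP_{| s(x) h}( T_{s(x)} R^\calP_h (Y) ) = \Ad_{h^{-1}}( \omega^\calP_{| s(x)}(Y) )$, and composing the two adjoint actions via $\Ad_{g^{-1}} \circ \Ad_{h^{-1}} = \Ad_{(hg)^{-1}}$ gives exactly the right-hand side.

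Finally, $A^\calQ = A^\calP$ follows by specializing to $h = e$, $g = e$: then $R^\calQ_{hg} = R^\calQ_e = \Id$, so the left-hand side of \eqref{eq-omegaQ-TRQ-Ths-AdomegaP-Ts} reduces to $\omega^\calQ_{| \hs(x)}( T_x \hs (X_{| x}) ) = (\hs^* \omega^\calQ)_{| x}(X_{| x}) = A^\calQ(X)_{| x}$, while the right-hand side reduces to $\omega^\calP_{| s(x)}( T_x s (X_{| x}) ) = A^\calP(X)_{| x}$. I do not expect any genuine obstacle; the only point requiring care is the bookkeeping of base points and tangent maps, and the clean collapse of both sides of \eqref{eq-piQ-RP-s-RQ} onto the single curve $t \mapsto [s(x_t), hg]$ is precisely what makes the computation transparent.
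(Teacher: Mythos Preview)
Your proof is correct and follows essentially the same approach as the paper: both arguments compute \eqref{eq-piQ-RP-s-RQ} by identifying the two sides as velocities of the same curve $t \mapsto [s(x_t), hg]$, derive \eqref{eq-omegaQ-TRQ-Ths-AdomegaP-Ts} by feeding this into the definition \eqref{eq-omegaQ-homega}--\eqref{eq-def-omega-hat} of $\omega^\calQ$ via $\homega$ together with the $R^\calP$-equivariance of $\omega^\calP$, and then specialize $h=g=e$ to obtain $A^\calQ = A^\calP$. The only cosmetic difference is that the paper isolates the $h=g=e$ case of \eqref{eq-piQ-RP-s-RQ} as a separate step before computing $A^\calQ$, whereas you read $A^\calQ = A^\calP$ directly off \eqref{eq-omegaQ-TRQ-Ths-AdomegaP-Ts}; both are equally valid.
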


Notice that with $h=g=e$, \eqref{eq-omegaQ-TRQ-Ths-AdomegaP-Ts} reduces to $\omega^\calQ_{| \hs(x)} \left( T_x \hs (X_{| x}) \right) = \omega^\calP_{| s(x)} \left( T_x s (X_{| x}) \right)$, which shows in particular that the LHS belongs to $\frakh$. This is obviously a consequence of the fact that the vertical component of $T_x \hs (X_{| x})$ takes its values in $\frakh$ only since $\hs$ can be considered as a local section of $\zeta(\calP) \subset \calQ$.

\begin{proof}
Let $(t,x) \mapsto \phi^X_t(x)$ such that $X_{| x} = \frac{d}{dt} \phi^X_t(x) {}_{| t=0}$. One has
\begin{align*}
T_{(p, g)} \Pi_\calQ \left( T_{s(x)} R^\calP_h \circ T_x s (X_{| x}) \oplus 0_{| g} \right)
&= \frac{d}{dt} \Pi_\calQ \left(  s(\phi^X_t(x)) h, g \right)_{| t=0}
= \frac{d}{dt} R^\calQ_{hg} \circ \Pi_\calQ \left( s(\phi^X_t(x)), e \right)_{| t=0}
\\
&
= T_{\hs(x)} R^\calQ_{hg} \circ T_x \hs (X_{| x}).
\end{align*}
Then, by \eqref{eq-omegaQ-homega},
\begin{align*}
\omega^\calQ_{| q} \left( T_{\hs(x)} R^\calQ_{hg} \circ T_x \hs (X_{| x}) \right)
&= \homega_{| (p,g)} \left( T_{s(x)} R^\calP_h \circ T_x s (X_{| x}) \oplus 0_{| g} \right)
= \Ad_{g^{-1}} \circ \omega^\calP_{| p} \left( T_{s(x)} R^\calP_h \circ T_x s (X_{| x}) \right)
\\
&
= \Ad_{g^{-1}} \circ \Ad_{h^{-1}} \circ \omega^\calP_{| s(x)} \left( T_x s (X_{| x}) \right).
\end{align*}
With $h=g=e$ in \eqref{eq-piQ-RP-s-RQ}, one gets 
\begin{equation*}
T_{\hs(x)} \Pi_\calQ \left( T_x s (X_{| x}) \oplus 0_{| e} \right) = T_x \hs (X_{| x}),
\end{equation*}
from which we deduce
\begin{align*}
A^\calQ(X)_{| x}
&= \homega_{(s(x), e)} \left( T_x s (X_{| x}) \oplus 0_{| e} \right)
= \omega^\calP_{| s(x)} \left( T_x s (X_{| x}) \right)
= A^\calP(X)_{| x}.
\end{align*}
\end{proof}

The connections $\omega^\calP$ and $\omega^\calQ$ define horizontal lifts $\nabla^\calP : \Gamma(T \calM) \to \Gamma_H(\calP)$ and $\nabla^\calQ : \Gamma(T \calM) \to \Gamma_G(\calQ)$ respectively. In the following, we will need some explicit expressions in a local trivialization for these lifts. Using the previous notations, one has the usual formula, for $p = s(x) h$, $q = \hs(x) g = [s(x), g]$, and $X \in \Gamma(T \calU)$,
\begin{align*}
\nabla^\calP_X {}_{| p} 
&= T_{s(x)} R^\calP_h \left[  T_x s (X_{| x}) - \left[ \omega^\calP_{| s(x)} \left( T_x s (X_{| x}) \right) \right]^v_{| s(x)} \right],
&
\nabla^\calQ_X {}_{| q} 
&= T_{\hs(x)} R^\calQ_g \left[  T_x \hs (X_{| x}) - \left[ \omega^\calQ_{| \hs(x)} \left( T_x \hs (X_{| x}) \right) \right]^v_{| \hs(x)} \right],
\end{align*}
where $\eta^v$ is the vertical vector field associated to $\eta \in \frakh$ either on $\calP$ or on $\calQ$.

\begin{lemma}
\label{lem-nablaQ-TpiQ-nablaP}
For any $q = [p,g] \in \calQ$ and $X \in \Gamma(T \calM)$, one has
\begin{equation*}
\nabla^\calQ_X {}_{| q}  = T_{(p, g)} \Pi_\calQ \left( \nabla^\calP_X {}_{| p} \oplus 0_{| g} \right)
\end{equation*}
\end{lemma}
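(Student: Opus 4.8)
The plan is to characterize $\nabla^\calQ_X{}_{|q}$ intrinsically by its two defining properties as a horizontal lift, and to verify that the right-hand side enjoys both of them; uniqueness of the horizontal lift then forces the equality. Fix a representative $(p,g)$ of $q = [p,g]$, set $x \defeq \pi_\calP(p)$, and write $\hY_{|q} \defeq T_{(p,g)} \Pi_\calQ \bigl( \nabla^\calP_X{}_{|p} \oplus 0_{|g} \bigr)$. Recall that $\nabla^\calQ_X{}_{|q}$ is, by definition of the horizontal lift associated with $\omega^\calQ$, the unique vector in $T_q\calQ$ that is $\omega^\calQ$-horizontal and that is mapped to $X_{|x}$ by the bundle projection $T_q\pi_\calQ$. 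It therefore suffices to show that $\hY_{|q}$ shares these two properties.

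First I would check horizontality. By the defining relation \eqref{eq-omegaQ-homega} followed by \eqref{eq-def-omega-hat},
\[
\omega^\calQ_{|q}(\hY_{|q}) = \homega_{|(p,g)}\bigl( \nabla^\calP_X{}_{|p} \oplus 0_{|g} \bigr) = \Ad_{g^{-1}}\bigl( \omega^\calP_{|p}(\nabla^\calP_X{}_{|p}) \bigr) + T_g L^G_{g^{-1}}(0_{|g}).
\]
Since $\nabla^\calP_X{}_{|p}$ is $\omega^\calP$-horizontal by construction, the first term vanishes, and the second is trivially zero, so $\omega^\calQ_{|q}(\hY_{|q}) = 0$. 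Next I would check the projection property. The bundle projections satisfy $\pi_\calQ \circ \Pi_\calQ = \pi_\calP \circ \mathrm{pr}_\calP$, where $\mathrm{pr}_\calP : \calP \times G \to \calP$ is the first projection (indeed $\pi_\calQ([p,g]) = \pi_\calP(p)$ is well defined on $\calQ$). Differentiating and evaluating yields
\[
T_q\pi_\calQ(\hY_{|q}) = T_p\pi_\calP(\nabla^\calP_X{}_{|p}) = X_{|x},
\]
the last equality because $\rho_\calP(\nabla^\calP_X) = X$. As $\hY_{|q}$ is both $\omega^\calQ$-horizontal and projects to $X_{|x}$, uniqueness of the horizontal lift gives $\hY_{|q} = \nabla^\calQ_X{}_{|q}$.

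The only subtlety I anticipate is bookkeeping: $\hY_{|q}$ is written in terms of the chosen representative $(p,g)$, so one might worry about independence of that choice. This causes no trouble, because the argument shows $\hY_{|q}$ equals the representative-independent vector $\nabla^\calQ_X{}_{|q}$ for \emph{every} representative, so independence is obtained for free rather than verified directly. Alternatively one could grind through the local trivialization, using the explicit formulas for $\nabla^\calP$ and $\nabla^\calQ$ recalled just above together with \eqref{eq-piQ-RP-s-RQ}; but the intrinsic argument via the two defining properties is shorter and sidesteps the $h$-dependence of those local expressions.
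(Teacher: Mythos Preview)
Your proof is correct, and it takes a genuinely different route from the paper's own argument. The paper works entirely in a local trivialization: it writes $p = s(x)h$, $q = \hs(x)hg$, plugs the explicit local formula for $\nabla^\calP_X{}_{|p}$ (the two terms $T_{s(x)}R^\calP_h \circ T_x s(X_{|x})$ and the vertical correction) into $T_{(p,g)}\Pi_\calQ$, and then pushes each piece through using \eqref{eq-piQ-RP-s-RQ} and \eqref{eq-omegaQ-TRQ-Ths-AdomegaP-Ts} to recognize the corresponding local formula for $\nabla^\calQ_X{}_{|q}$. Your approach is intrinsic: you bypass all of that by invoking the uniqueness of the horizontal lift, checking horizontality via \eqref{eq-omegaQ-homega}--\eqref{eq-def-omega-hat} and the projection property via the compatibility $\pi_\calQ \circ \Pi_\calQ = \pi_\calP \circ \mathrm{pr}_\calP$. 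Amusingly, the paper notes the horizontality computation you use---$\omega^\calQ_{|q}(\nabla^\calQ_X{}_{|q}) = \homega_{|(p,g)}(\nabla^\calP_X{}_{|p} \oplus 0_{|g}) = 0$---only \emph{after} its proof of the lemma, as a sanity check, rather than as the key step. Your argument is shorter, avoids the bookkeeping with $h$ and the local section, and is the alternative you yourself flag at the end; the paper's computation, in turn, ties directly into the explicit trivialization formulas that are used repeatedly elsewhere in Section~\ref{sec General constructions}.
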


\begin{proof}
Let us compute $T_{(p, g)} \Pi_\calQ$ on the two terms in $\nabla^\calP_X {}_{| p}$, and let us use $X_{| x} = \frac{d}{dt} \phi^X_t(x) {}_{| t=0}$ in the local trivializations as before, with $p = s(x) h$ and $q = \hs(x) hg$. For the first term, one has
\begin{align*}
T_{(p, g)} \Pi_\calQ \left( T_{s(x)} R^\calP_h \circ  T_x s (X_{| x}) \oplus 0_{| g} \right)
&= \frac{d}{dt} \Pi_\calQ \left( s(\phi^X_t(x)) h, g \right)_{| t=0}
= \frac{d}{dt} R^\calQ_{hg} \circ \Pi_\calQ \left( s(\phi^X_t(x)), e \right)_{| t=0}
\\
&= \frac{d}{dt} R^\calQ_{hg} \circ \hs(\phi^X_t(x)) {}_{| t=0} 
= T_{\hs(x)} R^\calQ_{hg} \circ T_x \hs (X_{| x}).
\end{align*}
For the second term, one has, using \eqref{eq-omegaQ-TRQ-Ths-AdomegaP-Ts} with $h=g=e$,
\begin{align*}
T_{(p, g)} \Pi_\calQ \left( T_{s(x)} R^\calP_h \left( \left[ \omega^\calP_{| s(x)} \left( T_x s (X_{| x}) \right) \right]^v_{| s(x)} \right) \oplus 0_{| g} \right)
& = \frac{d}{dt}  \Pi_\calQ \left( 
		R^\calP_h 
		\left( s(x) \exp \left( t \omega^\calP_{| s(x)} ( T_x s ( X_{| x} ) ) \right) \right)
		, g 
	\right)_{| t=0}
\\
& = \frac{d}{dt}  \Pi_\calQ \left( 
		s(x) \exp \left( t \omega^\calQ_{| \hs(x)} \left( T_x \hs (X_{| x}) \right) \right)
		, hg 
	\right)_{| t=0}
\\
&= \frac{d}{dt} \hs(x) \exp \left( t \omega^\calQ_{| \hs(x)} \left( T_x \hs (X_{| x}) \right) \right) hg {}_{| t=0}
\\
&= T_{\hs(x)} R^\calQ_{hg} \left( \left[ \omega^\calQ_{| \hs(x)} \left( T_x \hs (X_{| x}) \right) \right]^v_{| \hs(x)} \right).
\end{align*}
Combining the two terms with the correct $-$ sign, one gets the result.
\end{proof}

By definition, one has $\omega^\calP \circ \nabla^\calP = 0$ since the range of $\nabla^\calP$ is in the set of horizontal vector fields. From Lemma~\ref{lem-nablaQ-TpiQ-nablaP}, one gets $\omega^\calQ_{| q} \left( \nabla^\calQ_X {}_{| q} \right) = \homega_{| (p,g)} \left( \nabla^\calP_X {}_{| p} \oplus 0_{| g} \right) = \Ad_{g^{-1}} \circ \omega^\calP_{| p} \left(  \nabla^\calP_X {}_{| p} \right) = 0$ as expected.

\begin{corollary}
Any $\hfX \in \Gamma_G(\calQ)$ can be written as $\hfX = \nabla^\calQ_X + \iota_\calQ(v)$ for $X = \rho_\calQ(\hfX)$ and a unique $v \in \Gamma_G(\calQ, \frakg)$. Then the global vector field $\tX \oplus \xi$ on $\calP \times G$ defined by
\begin{equation*}
(\tX \oplus \xi)_{| (p,g)} \defeq \nabla^\calP_X {}_{| p} \oplus \left( - T_e L^G_g \left( v([p,g]) \right) \right)
\end{equation*}
satisfies
\begin{equation*}
{\Pi_\calQ}_* (\tX \oplus \xi) = \hfX.
\end{equation*}

\end{corollary}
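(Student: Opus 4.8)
The plan is to split the statement into two independent parts: first the decomposition $\hfX = \nabla^\calQ_X + \iota_\calQ(v)$ together with the uniqueness of $v$, and then the push-forward identity ${\Pi_\calQ}_*(\tX \oplus \xi) = \hfX$. The first part is the standard splitting of the Atiyah sequence \eqref{eq-sec-Atiyah-Q} afforded by a connection, exactly parallel to the decomposition recalled for $\calP$ in Sect.~\ref{subsec-transport-connection-P-Q}; the second part simply assembles the two preceding lemmas by linearity of the tangent map $T_{(p,g)}\Pi_\calQ$.

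For the decomposition I would argue as follows. Since $\nabla^\calQ$ is a horizontal lift, $\rho_\calQ \circ \nabla^\calQ = \Id$, so setting $X \defeq \rho_\calQ(\hfX)$ gives $\rho_\calQ(\hfX - \nabla^\calQ_X) = X - X = 0$. By exactness of \eqref{eq-sec-Atiyah-Q} we have $\hfX - \nabla^\calQ_X \in \Ker \rho_\calQ = \im \iota_\calQ$, hence there exists $v \in \Gamma_G(\calQ, \frakg)$ with $\hfX - \nabla^\calQ_X = \iota_\calQ(v)$; injectivity of $\iota_\calQ$ makes $v$ unique. This fixes both $X$ and $v$ unambiguously.

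For the push-forward I would write $(\tX \oplus \xi)_{|(p,g)}$ as the sum of the two vectors $\nabla^\calP_X{}_{|p} \oplus 0_{|g}$ and $0_{|p} \oplus \big({-}T_e L^G_g(v([p,g]))\big)$, both tangent to $\calP \times G$ at $(p,g)$. Applying the linear map $T_{(p,g)} \Pi_\calQ$ and invoking Lemma~\ref{lem-nablaQ-TpiQ-nablaP} on the first summand and Lemma~\ref{lem-iotaQv-TLGv} on the second yields, for $q = [p,g]$,
\[
T_{(p,g)} \Pi_\calQ\big((\tX \oplus \xi)_{|(p,g)}\big) = \nabla^\calQ_X{}_{|q} + \iota_\calQ(v)_{|q} = \hfX_{|q}.
\]
Because both lemmas establish their identities for every representative $(p,g)$ of $q$, each summand already satisfies the compatibility condition \eqref{eq-tXxi-Q}, hence so does their sum; thus $\tX \oplus \xi$ is $\Pi_\calQ$-projectable and its push-forward is precisely $\hfX$.

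I do not anticipate a genuine obstacle: the result is an immediate corollary of the two preceding lemmas together with the exactness of the Atiyah sequence. The only point requiring a little care is that $\xi$ is not pulled back from $G$ alone---its $G$-component at $(p,g)$ involves $v([p,g])$ and therefore depends on $p$ as well---so one must read $\tX \oplus \xi$ as a bona fide vector field on the product $\calP \times G$ and apply the lemmas pointwise, which is exactly how they are phrased.
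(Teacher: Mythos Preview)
Your proof is correct and follows essentially the same approach as the paper: the decomposition is the standard splitting of the Atiyah sequence (the paper simply cites \cite{LazzMass12a}), and the push-forward identity is obtained by combining Lemmas~\ref{lem-iotaQv-TLGv} and \ref{lem-nablaQ-TpiQ-nablaP} via linearity of $T_{(p,g)}\Pi_\calQ$. Your version is more detailed---in particular you make explicit the verification of \eqref{eq-tXxi-Q}---but the substance is identical.
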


\begin{proof}
See \cite{LazzMass12a} for the decomposition $\hfX = \nabla^\calQ_X + \iota_\calQ(v)$. Then Lemmas~\ref{lem-iotaQv-TLGv} and \ref{lem-nablaQ-TpiQ-nablaP} give the result.
\end{proof}

This corollary permits to perform computations with differential structures on $\calP \times G$ using objects like $\homega$ and $\tX \oplus \xi$ in place of computations on $\calQ$ with objects like $\omega^\calQ$ and $\hfX$.

\medskip
If $\omega'^\calP$ is another connection $1$-form on $\calP$, then there exists $\connexionShift \in \Omega^1(\calP) \otimes \frakh$, $R^\calP$-equivariant and $H$-horizontal, such that $\omega'^\calP = \omega^\calP + \connexionShift$. Then a direct computation shows that the connection $1$-form $\omega'^\calQ$ associated to $\omega'^\calP$ is given by, with $q = [p,g]$ and $\hX_{| [p,g]} = T_{(p,g)} \Pi_\calQ \left( \tX_{| p} \oplus \xi_{| g} \right)$,
\begin{equation*}
\omega'^\calQ_{| q} \left( \hX_{| q} \right) =  \omega^\calQ_{| q} \left( \hX_{| q} \right) + \Ad_{g^{-1}} \circ \connexionShift_{| p} \left( \tX_{| p} \right).
\end{equation*}

We can compare the lifts $\nabla^\calP$ and $\nabla'^\calP$ associated to the connections $\omega^\calP$ and $\omega'^\calP$. From $\omega'^\calP = \omega^\calP + \connexionShift$ one gets directly
\begin{align*}
\nabla'^\calP_X {}_{| p} = \nabla^\calP_X {}_{| p} - T_{s(x)} R^\calP_h \left( \left[ \connexionShift_{| s(x)} \left( T_x s (X_{| x}) \right) \right]^v_{| s(x)} \right)
\end{align*}
Using the $R^\calP$-equivariance of $\connexionShift$, the last term is
\begin{align*}
T_{s(x)} R^\calP_h \left( \left[ \connexionShift_{| s(x)} \left( T_x s (X_{| x}) \right) \right]^v_{| s(x)} \right)
&=  \frac{d}{dt} s(x) e^{t \connexionShift_{| s(x)} \left( T_x s (X_{| x}) \right)} h  {}_{| t=0} 
= \frac{d}{dt} s(x) h e^{t \Ad_{h^{-1}} \circ \connexionShift_{| s(x)} \left( T_x s (X_{| x}) \right)}  {}_{| t=0} 
\\
&
= \frac{d}{dt} p e^{t \connexionShift_{| p} \left( T_{s(x)} R^\calP_h \circ T_x s (X_{| x}) \right)}  {}_{| t=0} 
= \left[ \connexionShift_{| p} \left( T_{s(x)} R^\calP_h \circ T_x s (X_{| x}) \right) \right]^v_{| p}.
\end{align*}
The map 
\begin{equation*}
p \mapsto v^{X, \connexionShift} (p) \defeq \connexionShift_{| p} \left( T_{s(x)} R^\calP_h \circ T_x s (X_{| x}) \right) \in \frakh
\end{equation*}
 is defined here on $\calP_{| \calU}$ but it can be shown that it is a well defined map on $\calP$ which is $H$-equivariant: $v^{X, \connexionShift} \in \Gamma_H(\calP, \frakh)$. A way to look at it is to notice that since $\connexionShift \in \Omega^1(\calP) \otimes \frakh$ is a $R^\calP$-tensorial form, it defines a unique element $\hconnexionShift \in \Omega^1(\calM, \calL^\calP)$. Then, as a $R^\calP$-tensorial $0$-form, $v^{X, \connexionShift}$ is related to $\hconnexionShift(X) \in \Gamma(\calL^\calP)$ by the isomorphism $\Gamma(\calL^\calP) \simeq \Gamma_H(\calP, \frakh)$. Finally, one gets
\begin{align}
\label{eq-nablaprimeP-nablaP-vXshift}
\nabla'^\calP_X  = \nabla^\calP_X + \iota_\calP(v^{X, \connexionShift})
\end{align}

Then the lifts associated to $\omega^\calQ$ and $\omega'^\calQ$ are related in the same way: using \eqref{eq-omegaQ-TRQ-Ths-AdomegaP-Ts} with $h=g=e$ and $q = \hs(x)g$, 
\begin{align*}
\nabla'^\calQ_X {}_{| q} = \nabla^\calQ_X {}_{| q} - T_{\hs(x)} R^\calQ_g \left( \left[ \connexionShift_{| s(x)} \left( T_x s (X_{| x}) \right) \right]^v_{| \hs(x)} \right).
\end{align*}
The last term is 
\begin{align*}
- T_{\hs(x)} R^\calQ_g \left( \left[ \connexionShift_{| s(x)} \left( T_x s (X_{| x}) \right) \right]^v_{| \hs(x)} \right)
&= - T_{\hs(x)} R^\calQ_g \left( \left[ v^{X, \connexionShift}(s(x)) \right]^v_{| \hs(x)} \right)
= \frac{d}{dt} \hs(x) e^{- t v^{X, \connexionShift}(s(x))} g {}_{| t=0}
\\
&
= \frac{d}{dt} \hs(x) e^{- t \Ad_{g^{-1}} \circ v^{X, \connexionShift}(s(x))} {}_{| t=0}
= \iota_\calQ \circ j(v^{X, \connexionShift})_{| q} 
\end{align*}
where $j$ is defined in \eqref{eq-def-j}, so that
\begin{equation}
\label{eq-nablaprimeQ-nablaQ-vXshift}
\nabla'^\calQ_X = \nabla^\calQ_X + \iota_\calQ \circ j(v^{X, \connexionShift})
\end{equation}

Finally, let us recall \cite{LazzMass12a} that, using the notations of Sect.~\ref{sec-local-trivializations} and \ref{subsec-transport-connection-P-Q}, one has
\begin{align}\label{eq-trivial-X-v}
\nabla^\calP_X + \iota_\calP(v) &\locto X \oplus (A^\calP(X) + \gamma),
&
\nabla^\calQ_X + \iota_\calQ(\hv) &\locto X \oplus (A^\calQ(Y) + \hgamma),
\end{align}
where $v\locto \gamma$ and $\hv\locto \hgamma$.

\section{The exact commutative diagram}
\label{sec-exact-commutative-diagram}

One of the objectives of the forthcoming considerations in this section is to construct maps between the two transitive Lie algebroids \eqref{eq-sec-Atiyah-P} and \eqref{eq-sec-Atiyah-Q} which generalize $j$ defined in \eqref{eq-def-j}.

\subsection{\texorpdfstring{The short exact sequence induced by $j$}{The short exact sequence induced by j}}

\begin{proposition}
\label{prop-iso-gammaG-gammaH}
There is an isomorphism of Lie algebras and $C^\infty(\calM)$-modules 
\begin{align*}
\hi : \Gamma_H(\calP, \frakg) \xrightarrow{\simeq} \Gamma_G(\calQ, \frakg), 
\quad 
\hi(v)([p,g]) \defeq \Ad_{g^{-1}} \circ v(p).
\end{align*}
\end{proposition}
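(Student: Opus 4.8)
The plan is to follow exactly the pattern of the proof of the proposition defining $j$ in \eqref{eq-def-j}, since $\hi$ is formally the same construction with $i\circ v$ replaced by $v$ (no inclusion is needed because $v$ already takes values in $\frakg$). The only genuinely new point is that $\hi$ must be shown to be a \emph{bijection}, so on top of the well-definedness, equivariance, injectivity and morphism properties I also have to exhibit an inverse. My strategy is therefore: first dispose of the routine properties by mimicking the earlier computation, then concentrate on surjectivity, which is where the argument departs from the $j$ case.

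For the routine part I would proceed as follows. \emph{Well-definedness on $\calQ$}: using the $H$-equivariance $v(ph)=\Ad_{h^{-1}}(v(p))$ of $v\in\Gamma_H(\calP,\frakg)$, one checks
$\hi(v)([ph,h^{-1}g]) = \Ad_{g^{-1}h}\circ v(ph) = \Ad_{g^{-1}h}\circ\Ad_{h^{-1}}(v(p)) = \Ad_{g^{-1}}(v(p)) = \hi(v)([p,g])$,
so the value is independent of the representative. \emph{$G$-equivariance}: for $q=[p,g]$ and $g'\in G$, $\hi(v)(qg')=\Ad_{(gg')^{-1}}\circ v(p)=\Ad_{g'^{-1}}\circ\hi(v)(q)$, exactly as for $j$. \emph{Morphism of Lie algebras and $C^\infty(\calM)$-modules}: this is immediate since each $\Ad_{g^{-1}}$ is a Lie algebra automorphism of $\frakg$ and commutes with multiplication by elements of $C^\infty(\calM)\simeq C^\infty_H(\calP)\simeq C^\infty_G(\calQ)$. \emph{Injectivity}: if $\hi(v)=0$ then $\Ad_{g^{-1}}\circ v(p)=0$ for all $(p,g)$, whence $v(p)=0$ for all $p$, i.e. $v=0$.

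The main work, and the step I expect to be the real content, is surjectivity, for which I construct an explicit inverse by pulling back along the canonical inclusion $\zeta:\calP\to\calQ$, $\zeta(p)=[p,e]$. Given $w\in\Gamma_G(\calQ,\frakg)$, I set $v(p)\defeq w([p,e])=(\zeta^*w)(p)$. To see that $v\in\Gamma_H(\calP,\frakg)$, I use the identity $[ph,e]=[p,h]=R^\calQ_h[p,e]=\zeta(p)\,h$ together with the $G$-equivariance of $w$ restricted to $H\subset G$, giving $v(ph)=w(\zeta(p)h)=\Ad_{h^{-1}}\circ w(\zeta(p))=\Ad_{h^{-1}}(v(p))$, so $v$ is $H$-equivariant. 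Finally, again by the $G$-equivariance of $w$, $\hi(v)([p,g])=\Ad_{g^{-1}}\circ v(p)=\Ad_{g^{-1}}\circ w([p,e])=w([p,e]\,g)=w([p,g])$, hence $\hi(v)=w$. The conceptual point underlying this step is that $\calQ$ is, fiberwise, the $G$-orbit of $\zeta(\calP)$, so a $G$-equivariant map on $\calQ$ is completely determined by its restriction to $\zeta(\calP)\cong\calP$; this is precisely what makes $\hi$ surjective, whereas $j$ landed only in the $\Ad$-image of $\frakh$ and was a mere injection. This establishes that $\hi$ is an isomorphism with inverse $w\mapsto\zeta^*w$.
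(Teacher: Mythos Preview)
Your proof is correct and follows essentially the same approach as the paper's: well-definedness, $G$-equivariance, injectivity, and the Lie algebra and module morphism properties are checked by the same computations, and surjectivity is obtained via the same explicit inverse $w\mapsto \zeta^*w$ using $[ph,e]=[p,e]\,h$. The only cosmetic difference is that the paper phrases well-definedness by first lifting to a map $\vartheta$ on $\calP\times G$ and checking $\alpha$-invariance, whereas you verify independence of the representative directly on $\calQ$; the two are equivalent.
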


\begin{proof}
Given $v\in \Gamma_H(\calP, \frakg)$, one constructs the map $\vartheta: \calP \times G \to \frakg$ defined by $\vartheta(p, g) \defeq \Ad_{g^{-1}} \circ v(p)$. Then
\begin{align*}
(\alpha_h^* \vartheta)(p,g) 
&= \vartheta(ph, h^{-1} g)
= \Ad_{g^{-1} h} \circ v(ph)
= \Ad_{g^{-1} h} \circ \Ad_{h^{-1}} \circ v(p)
= \Ad_{g^{-1}} \circ v(p)
= \vartheta(p, g),
\end{align*}
and
\begin{align*}
({R^{\calP \times G}_{g'}}^* \vartheta)(p,g)
&= \vartheta(p,g g')
= \Ad_{{g'}^{-1} g^{-1}} \circ v(p)
= \Ad_{{g'}^{-1}} \circ \vartheta(p,g).
\end{align*}
So $\vartheta$ is $\alpha$-invariant and $R^{\calP \times G}$-equivariant and thus it defines a map $\hi(v) \in \Gamma_G(\calQ, \frakg)$ by $\hi(v)(q) \defeq \vartheta(p,g)$ for any $q = [p,g]$.

If $\hi(v)(q) = 0$ for any $q \in \calQ$, then $\Ad_{g^{-1}} \circ v(p) = 0$ for any $(p,g) \in \calP \times G$, and so $v(p) = 0$ for any $p \in \calP$, which proves the injectivity of $\hi$. Given $\hv \in \Gamma_G(\calQ, \frakg)$, let $v = \hv \circ \zeta$, \textit{i.e.} $v(p) \defeq \hv([p,e])$ for any $p \in \calP$. Then $v(ph) = \hv([p, e]h) = \Ad_{h^{-1}} \circ \hv([p, e]) = \Ad_{h^{-1}} \circ v(p)$, so that $v\in \Gamma_H(\calP, \frakg)$. It is straightforward to check that $\hi(v) = \hv$ since $\hi(v)([p,g]) = \Ad_{g^{-1}} \circ v(p) = \Ad_{g^{-1}} \circ \hv([p,e]) = \hv([p,g])$. This proves the surjectivity of $\hi$. 

Let us show that $\hi$ is a morphism of Lie algebras. Given $v, w \in \Gamma_H(\calP, \frakg)$, one has 
\begin{align*}
[\hi(v), \hi(w)]([p, g]) 
&= [\hi(v)([p, g]), \hi(w)([p, g])] 
= [\Ad_{g^{-1}} \circ v(p), \Ad_{g^{-1}} \circ w(p)] 
= \Ad_{g^{-1}} \circ [v, w](p)
\\
&
= \hi([v, w])([p,g]).
\end{align*}
The map $\hi$ is clearly a morphism of modules over the algebra $C^\infty(\calM) \simeq C_H^\infty(\calP) \simeq C_G^\infty(\calQ)$. 
\end{proof}

The isomorphism $\hi$ of Prop.~\ref{prop-iso-gammaG-gammaH} permits to define the injection
\begin{equation}
\label{eq-def-iota}
\iota : \Gamma_H(\calP, \frakh) \to \Gamma_G(\calQ), \quad \iota \defeq \iota_\calQ \circ \hi.
\end{equation}
and is related to $j : \Gamma_H(\calP, \frakh) \to \Gamma_G(\calQ, \frakg)$ defined by \eqref{eq-def-j} in the following way:
\begin{equation}
\label{eq-j-i-hi}
\begin{tikzcd}[column sep=25pt, row sep=30pt]
\Gamma_H(\calP, \frakh)
	\arrow[r, hookrightarrow, "i"]
	\arrow[rr, bend left=20, "j"]
& \Gamma_H(\calP, \frakg)
	\arrow[r, "\hi", "\simeq"']
& \Gamma_G(\calQ, \frakg)
\end{tikzcd}
\end{equation}
so that $j(v)([p, g]) = \hi \circ i(v) ([p, g])$.

For any $\xi \in \frakg$, we denote by $[\xi] \in \frakg /\frakh$ its (quotient) class. The vector space $\frakg/\frakh$ is a $H$-module for $\Ad_{h^{-1}}[\xi] \defeq [\Ad_{h^{-1}} \xi ]$. We denote by $\Gamma_H(\calP, \frakg / \frakh)$ the space of $H$-equivariant maps $\calP \to \frakg /\frakh$.

\begin{proposition}
\label{prop-map-r}
The map $r : \Gamma_H(\calP, \frakg) \to \Gamma_H(\calP, \frakg/\frakh)$, defined by $r(v)(p)\defeq [v(p)]$ for any $v \in \Gamma_H(\calP, \frakg)$ and $p \in \calP$, is such that the short exact sequence of $C^\infty(\calM)$-modules
\begin{equation}
\label{eq-sec-i-r}
\begin{tikzcd}[column sep=25pt, row sep=30pt]
\algzero
	\arrow[r]
& \Gamma_H(\calP, \frakh)
	\arrow[r, "i"]
& \Gamma_H(\calP, \frakg)
	\arrow[r, "r"]
& \Gamma_H(\calP, \frakg / \frakh)
	\arrow[r]
& \algzero
\end{tikzcd}
\end{equation}
is  exact. In particular, $r$ is surjective.
\end{proposition}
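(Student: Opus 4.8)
The plan is to verify that $r$ is a well-defined morphism of $C^\infty(\calM)$-modules and that the sequence is exact at each spot, the only substantial point being the surjectivity of $r$.

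I would first check that $r(v)$ indeed lies in $\Gamma_H(\calP, \frakg/\frakh)$: from the equivariance $v(ph) = \Ad_{h^{-1}}(v(p))$ and the definition $\Ad_{h^{-1}}[\xi] \defeq [\Ad_{h^{-1}}\xi]$ of the $H$-action on $\frakg/\frakh$, one gets $r(v)(ph) = [v(ph)] = [\Ad_{h^{-1}}v(p)] = \Ad_{h^{-1}}r(v)(p)$. Since $r$ is induced pointwise by the linear projection $\frakg \to \frakg/\frakh$, it commutes with multiplication by $f \in C^\infty(\calM) \simeq C_H^\infty(\calP)$ and is additive, hence a morphism of $C^\infty(\calM)$-modules. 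Exactness on the left reduces to the injectivity of $i$, already recorded in the text. The inclusion $\im i \subseteq \Ker r$ is immediate because $i(v)(p) = i_\frakh(v(p)) \in \frakh = \Ker(\frakg \to \frakg/\frakh)$, so $r\circ i = 0$; conversely, if $r(w)=0$ then $w(p)\in\frakh$ for all $p$, and since $\Ad_{h^{-1}}$ preserves $\frakh$ for $h \in H$ such a $w$ is an $H$-equivariant map into $\frakh$, i.e. $w \in \im i$. This establishes exactness at $\Gamma_H(\calP,\frakg)$.

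The heart of the matter is the surjectivity of $r$, equivalently the lifting of an $H$-equivariant map $\bar v : \calP \to \frakg/\frakh$ to an $H$-equivariant map $\calP \to \frakg$. A pointwise lift always exists since $\frakg \to \frakg/\frakh$ is onto, but the difficulty is that a naive lift is neither smooth nor $H$-equivariant: an $\Ad_H$-equivariant linear splitting $\frakg/\frakh \to \frakg$ exists only when the pair $(\frakg,\frakh)$ is reductive, which is not assumed here. I would circumvent this by lifting locally and gluing. Over a trivializing open set $\calU_\alpha$ with section $s_\alpha$, the reconstruction map $\Psi_\frakg$ of Section~\ref{sec-local-trivializations} identifies $\Gamma_H(\calP_{|\calU_\alpha},\frakg)$ with $C^\infty(\calU_\alpha,\frakg)$, and similarly for the quotient; composing the local representative of $\bar v$ with any fixed linear section of $\frakg \to \frakg/\frakh$ then yields $v_\alpha \in \Gamma_H(\calP_{|\calU_\alpha},\frakg)$ with $r(v_\alpha) = \bar v_{|\calU_\alpha}$.

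Finally I would patch these with a partition of unity $\{\chi_\alpha\} \subset C^\infty(\calM)$ subordinate to the cover, setting $v \defeq \sum_\alpha \chi_\alpha\, v_\alpha$. Each $\chi_\alpha v_\alpha$ extends by zero to a global element of $\Gamma_H(\calP,\frakg)$, remaining $H$-equivariant because $\chi_\alpha$, viewed in $C_H^\infty(\calP)$, is $H$-invariant; and the $C^\infty(\calM)$-linearity of $r$ established above gives $r(v) = \sum_\alpha \chi_\alpha\, \bar v_{|\calU_\alpha} = \bar v$. Equivalently, this is the statement that the short exact sequence of associated vector bundles $\algzero \to \calL^\calP \to \calP\times_\Ad\frakg \to \calP\times_H(\frakg/\frakh)\to\algzero$, obtained by applying the exact functor $\calP\times_H(-)$, splits smoothly, since every surjection of vector bundles over a paracompact base admits a smooth section. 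The one hypothesis to keep in view is paracompactness of $\calM$, which guarantees the partition of unity used in the gluing.
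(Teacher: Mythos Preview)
Your proof is correct and follows essentially the same route as the paper: verify $H$-equivariance of $r(v)$, identify $\ker r$ with $\im i$, and prove surjectivity by lifting locally over a trivializing cover and gluing with a partition of unity. The only cosmetic difference is that you lift using a fixed linear section $\frakg/\frakh \to \frakg$, whereas the paper appeals (somewhat superfluously) to the $\calU_i$ being connected and simply connected to produce the local lifts; your justification is cleaner, and your closing remark on the associated vector bundle sequence is an extra perspective not in the paper.
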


\begin{proof}
By definition, one has 
\begin{equation*}
r(v)(ph) = [v(ph)] = [\Ad_{h^{-1}} \circ v(p)] = \Ad_{h^{-1}} [v(p)] = \Ad_{h^{-1}} \circ r(v)(p),
\end{equation*}
so that $r(v) \in \Gamma_H(\calP, \frakg/\frakh)$.  Clearly this is a morphism of $C^\infty(\calM)$-modules and one has $\ker r = \Gamma_H(\calP, \frakh)$. 

Let  $\{ f_i\}$ be a partition of unity associated to a good covering $\{ \calU_i\}$ of $\calM$ such that $\calP_{| \calU_i}$ is trivial. For any $u \in \Gamma_H(\calP, \frakg/\frakh)$, let $u_i \defeq f_i u \in \Gamma_H(\calP_{| \calU_i}, \frakg/\frakh)$. Using local sections $s_i : \calU_i \to \calP_{| \calU_i}$, let $u_{\text{loc},i} \defeq s_i^* u_i : \calU_i \to \frakg/\frakh$. Then, for any $i$, since $\calU_i$ is connected and simply connected, we can find a map $v_{\text{loc},i} : \calU_i \to \frakg$ such that $[v_{\text{loc},i}(x)] = u_{\text{loc},i}(x)$ for any $x \in \calU_i$. Those $v_{\text{loc},i}$ define $H$-equivariant maps $v_i : \calP \to \frakg$ with supports in $\calP_{| \calU_i}$ such that $r(v_i) = u_i$. Then $v \defeq \sum_i v_i \in \Gamma_H(\calP, \frakg)$ satisfies $r(v) = r(\sum_i v_i) = \sum_i r(v_i) =  \sum_i f_i u = u$, which proves the surjectivity of $r$.  
\end{proof}

Prop.~\ref{prop-map-r} induces the short exact sequence of $C^\infty(\calM)$-modules
\begin{equation}
\label{eq-sec-j-rhi}
\begin{tikzcd}[column sep=25pt, row sep=30pt]
\algzero
	\arrow[r]
& \Gamma_H(\calP, \frakh)
	\arrow[r, "j"]
& \Gamma_G(\calQ, \frakg)
	\arrow[r, "\hr"]
& \Gamma_H(\calP, \frakg / \frakh)
	\arrow[r]
& \algzero
\end{tikzcd}
\end{equation}
where
\begin{equation}
\label{eq-def-hr}
\hr \defeq r \circ \hi^{-1}.
\end{equation}

\subsection{\texorpdfstring{The map $J$}{The map J}}

The aim of this subsection is to define a map $J : \Gamma_H(\calP) \to \Gamma_G(\calQ)$ which extends the map $j : \Gamma_H(\calP, \frakh) \to \Gamma_G(\calQ, \frakg)$. 

\begin{proposition}
\label{prop-def-J-properties}
For any $\frakX \in \Gamma_H(\calP)$, the vector field $\frakX \oplus 0 \in \Gamma(T\calP \oplus TG)$ satisfies \eqref{eq-tXxi-Q} and is $R^{\calP \times G}$-invariant, so that it defines $J(\frakX) \in \Gamma_G(\calQ)$.  The map
\begin{align*}
J : \Gamma_H(\calP) \to \Gamma_G(\calQ)
\end{align*}
is a morphism of Lie algebras and $C^\infty(\calM)$-modules such that
\begin{align}
\label{eq-J-iota-rho}
J \circ \iota_\calP &= \iota_\calQ \circ j,
&
\rho_\calQ \circ J &= \rho_P.
\end{align}
For any $v \in \Gamma_H(\calP, \frakh)$, $\hv \in \Gamma_H(\calP, \frakg)$, and $\frakX \in \Gamma_H(\calP)$, one has 
\begin{align}
\label{eq-j-fX-J-j}
j(\frakX \cdot v) &= J(\frakX) \cdot j(v),
&
\hi(\frakX \cdot \hv) &= J(\frakX) \cdot \hi(\hv).
\end{align}
and $J(\frakX)_{| [p,e]} = T_p \zeta (\frakX_{| p})$ for any $p \in \calP$.
\end{proposition}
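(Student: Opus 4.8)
The plan is to exploit throughout that $J(\frakX)$ is, by construction, $\Pi_\calQ$-related to the vector field $\frakX \oplus 0$ on $\calP \times G$, and then to transport every required identity down from $\calP \times G$ to $\calQ$ by this relatedness. Concretely, since $J(\frakX)$ is defined as the field induced by $\frakX \oplus 0$, condition \eqref{eq-tXxi-Q} is exactly the statement that $T\Pi_\calQ \circ (\frakX \oplus 0) = J(\frakX) \circ \Pi_\calQ$, so the single geometric input $\frakX \oplus 0 \sim_{\Pi_\calQ} J(\frakX)$ underlies all the claims.

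First I would verify well-definedness and equivariance. Using the right invariance of $\frakX$ and the computation $T_{(p,g)}\alpha_h(\frakX_{|p} \oplus 0_{|g}) = T_p R^\calP_h(\frakX_{|p}) \oplus 0_{|h^{-1}g} = \frakX_{|ph} \oplus 0_{|h^{-1}g}$ (as in the proof of Proposition~\ref{prop-connextion-P-to-Q}), together with $\Pi_\calQ \circ \alpha_h = \Pi_\calQ$, one obtains \eqref{eq-tXxi-Q} at once, so $\frakX \oplus 0$ descends to a well-defined $J(\frakX)$ on $\calQ$. Since $T_{(p,g)} R^{\calP \times G}_{g'}(\frakX_{|p} \oplus 0_{|g}) = \frakX_{|p} \oplus 0_{|gg'}$, the field $\frakX \oplus 0$ is $R^{\calP \times G}$-invariant, and because $R^\calQ_{g'} \circ \Pi_\calQ = \Pi_\calQ \circ R^{\calP \times G}_{g'}$ this invariance passes to $J(\frakX)$, giving $J(\frakX) \in \Gamma_G(\calQ)$.

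Next I would treat the algebraic structure. The $C^\infty(\calM)$-linearity is immediate once one notes that $f \in C^\infty(\calM)$ acts by pullbacks compatible under $\Pi_\calQ$ via $C^\infty_H(\calP) \simeq C^\infty_G(\calQ)$, so $(f\frakX) \oplus 0$ projects to $f\, J(\frakX)$. For the Lie algebra morphism property I would invoke the standard fact that $\Pi_\calQ$-related fields have $\Pi_\calQ$-related brackets: since $[\frakX \oplus 0, \frakY \oplus 0] = [\frakX,\frakY] \oplus 0$, this bracket projects simultaneously to $J([\frakX,\frakY])$ (by definition) and to $[J(\frakX), J(\frakY)]$ (by relatedness), forcing their equality. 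For \eqref{eq-J-iota-rho}, the identity $J \circ \iota_\calP = \iota_\calQ \circ j$ is read off Lemma~\ref{lem-iotaQj-iotaP}, as both sides equal $T_{(p,g)}\Pi_\calQ(\iota_\calP(v)_{|p} \oplus 0_{|g})$; while $\rho_\calQ \circ J = \rho_\calP$ follows from $\pi_\calQ \circ \Pi_\calQ = \pi_\calP \circ \mathrm{pr}_1$, so that $T\pi_\calQ$ applied to $J(\frakX)$ recovers $T\pi_\calP(\frakX)$.

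Finally, the relations \eqref{eq-j-fX-J-j} are a derivation-transport argument resting on the principle that if $\widehat{W} \sim_{\Pi_\calQ} W$ and a map $f$ on $\calQ$ pulls back to $\hat f = f \circ \Pi_\calQ$, then $(\widehat{W}\cdot f)\circ \Pi_\calQ = W \cdot \hat f$. Applying this with $\widehat{W} = J(\frakX)$, $W = \frakX \oplus 0$, and $f = j(v)$ (with pullback $\hj(v)(p,g) = \Ad_{g^{-1}} \circ i \circ v(p)$), the vanishing $G$-component of $\frakX \oplus 0$ leaves $(\frakX \oplus 0)\cdot \hj(v)(p,g) = \Ad_{g^{-1}} \circ i \circ (\frakX \cdot v)(p) = \hj(\frakX \cdot v)(p,g)$, using $i(\frakX \cdot v) = \frakX \cdot i(v)$ from Lemma~\ref{lem-actions-vector-fields}; this descends to $j(\frakX \cdot v) = J(\frakX) \cdot j(v)$, and the second identity is identical with $\vartheta(p,g) = \Ad_{g^{-1}} \circ \hv(p)$ replacing $\hj(v)$. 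For the last claim, writing $\zeta = \Pi_\calQ \circ \iota_e$ with $\iota_e(p) = (p,e)$ gives $T_p\zeta(\frakX_{|p}) = T_{(p,e)}\Pi_\calQ(\frakX_{|p} \oplus 0_{|e}) = J(\frakX)_{|[p,e]}$ by definition. The main obstacle is not any single hard computation but setting up the $\Pi_\calQ$-relatedness cleanly, so that well-definedness, bracket compatibility, and the derivation identities all follow uniformly; once this framework is fixed, each verification becomes routine.
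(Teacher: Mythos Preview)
Your proof is correct and follows essentially the same approach as the paper's: both arguments rest on the fact that $\frakX \oplus 0$ on $\calP \times G$ is $\Pi_\calQ$-related to $J(\frakX)$, and then transport each identity down to $\calQ$. The only cosmetic differences are that the paper phrases the well-definedness and the identity $J \circ \iota_\calP = \iota_\calQ \circ j$ in terms of flows rather than tangent maps (your direct appeal to Lemma~\ref{lem-iotaQj-iotaP} for the latter is in fact slightly cleaner), while the derivation identities \eqref{eq-j-fX-J-j} and the relation $J(\frakX)_{|[p,e]} = T_p\zeta(\frakX_{|p})$ are handled identically.
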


\begin{proof}
Let $\calP \times \bbR \ni (p, t) \mapsto \phi_\frakX(p, t)$ be the flow of $\frakX$, which satisfies the right equivariance property $\phi_\frakX(p h, t) = \phi_\frakX(p, t)h$ for any $h \in H$ \cite[Sect.~4.2]{LazzMass12a}. The flow of $\frakX \oplus 0$ is then $\calP \times G \times \bbR \ni (p,g, t) \mapsto (\phi_\frakX(p, t), g) \in \calP \times G$. Then
\begin{align*}
T_{(ph, h^{-1} g)} \Pi_\calQ (\frakX_{| p h} \oplus 0_{| h^{-1} g})
&= \frac{d}{dt} \Pi_\calQ( \phi_\frakX(p h, t), h^{-1} g )_{| t=0}
= \frac{d}{dt} \Pi_\calQ( \phi_\frakX(p, t), g )_{| t=0}
= T_{(p, g)} \Pi_\calQ (\frakX_{| p} \oplus 0_{| g}),
\end{align*}
which proves \eqref{eq-tXxi-Q}. For any $g' \in G$, one has
\begin{align*}
T_{(p, g)} R^{\calP \times G}_{g'} (\frakX_{| p} \oplus 0_{| g})
&= \frac{d}{dt} R^{\calP \times G}_{g'}\left( \phi_\frakX(p, t), g \right)_{| t=0}
= \frac{d}{dt} \left( \phi_\frakX(p, t), g g' \right)_{| t=0}
= \frakX_{| p} \oplus 0_{| g g'},
\end{align*}
which proves the $R^{\calP \times G}$-invariance of $\frakX \oplus 0$. Notice that the flow of $J(\frakX)$ is $([p,g], t) \mapsto [\phi_\frakX(p, t), g]$  and this proves $J(\frakX)_{| [p,e]} = T_p \zeta (\frakX_{| p})$.

Since $\Gamma_H(\calP) \to \Gamma(T\calP \oplus TG)$, $\frakX \mapsto \frakX \oplus 0$ is a morphism of Lie algebras and $C^\infty(\calM)$-modules, so is $J$.

The flow of $\iota_\calP(v)$ for $v \in \Gamma_H(\calP, \frakh)$ is $(p,t) \mapsto p e^{- t v(p)}$ so that the flow of $\iota_\calP(v) \oplus 0$ is $(p, g,t) \mapsto (p e^{- t v(p)}, g)$. Then the flow of $J \circ \iota_\calP(v)$ is $([p,g], t) \mapsto [p e^{- t v(p)}, g]$. On the other hand, the flow of $\iota_\calQ \circ j(v)$ is $([p,g], t) \mapsto [p,g] e^{- t j(v)([p, g])}$ and $[p,g] e^{- t j(v)([p, g])} = [p,g] e^{- t \Ad_{g^{-1}} \circ i \circ v(p)} = [p, e^{- t i \circ v(p)} g] = [p e^{- t v(p)}, g]$ since $e^{- t i \circ v(p)} \in H \subset G$. So, the flows of $J \circ \iota_\calP(v)$ and $\iota_\calQ \circ j(v)$ coincide, which proves $J \circ \iota_\calP = \iota_\calQ \circ j$. By definition, one has $\rho_\calQ \circ J(\frakX) = (\rho_\calP \oplus 0)(\frakX \oplus 0) = \rho_\calP(\frakX)$, which proves $\rho_\calQ \circ J = \rho_\calP$.

We compute $J(\frakX) \cdot j(v)$ at the level of $\calP \times G$, where we lift $j(v)$ as $\hj(v)$ and $J(\frakX)$ as $\frakX \oplus 0$. Since this vector field acts only on the variables along $\calP$, one gets $((\frakX \oplus 0)\cdot \hj(v))(p,g) = \Ad_{g^{-1}} ((\frakX \cdot v)(p)) = \hj(\frakX \cdot v)(p, g)$, so that, projecting back on $\calQ$, one has $J(\frakX) \cdot j(v) = j(\frakX \cdot v)$.

Using the same method, lifting $\hi(\hv)$ as $(p, g) \mapsto \Ad_{g^{-1}} \circ \hv(p)$, we get $\hi(\frakX \cdot \hv) = J(\frakX) \cdot \hi(\hv)$.
\end{proof}

Using results of Sect.~\ref{subsec-transport-connection-P-Q}, we can give another defining expression for $J$.

\begin{proposition}
\label{prop-def-alt-J}
Let $\omega^\calP$ be a connection $1$-form on $\calP$. Then, for any $\frakX \in \Gamma_H(\calP)$, let $v \in \Gamma_H(\calP, \frakh)$ be such that $\frakX = \nabla^\calP_X + \iota_\calP(v)$ with $X = \rho_\calP(\frakX)$. Then $J : \Gamma_H(\calP) \to \Gamma_G(\calQ)$ is given by
\begin{equation*}
J(\frakX) = \nabla^\calQ_X + \iota_\calQ \circ j (v),
\end{equation*}
where the RHS is independent of the choice of the connection $\omega^\calP$. In particular, one has
\begin{equation}
\label{eq-J-nabla-P-nabla-Q}
J \circ \nabla^\calP = \nabla^\calQ.
\end{equation}
\end{proposition}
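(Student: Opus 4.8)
The plan is to reduce everything to additivity of $J$, which is a $C^\infty(\calM)$-module morphism by Proposition~\ref{prop-def-J-properties} and hence in particular $\bbR$-linear. Fixing the connection $\omega^\calP$ and using the decomposition $\frakX = \nabla^\calP_X + \iota_\calP(v)$ recalled in Section~\ref{subsec-transport-connection-P-Q} (valid since $\nabla^\calP_X,\ \iota_\calP(v) \in \Gamma_H(\calP)$), I would first write
\[
J(\frakX) = J(\nabla^\calP_X) + J(\iota_\calP(v))
\]
and then identify the two summands separately with their counterparts on $\calQ$.

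For the kernel term, I would invoke the relation $J \circ \iota_\calP = \iota_\calQ \circ j$ from \eqref{eq-J-iota-rho}, which gives immediately $J(\iota_\calP(v)) = \iota_\calQ \circ j(v)$. For the horizontal term, I would unwind the very definition of $J$ from Proposition~\ref{prop-def-J-properties}: $J(\nabla^\calP_X)$ is the section of $\Gamma_G(\calQ)$ induced by the vector field $\nabla^\calP_X \oplus 0$ on $\calP \times G$, so that $J(\nabla^\calP_X)_{| [p,g]} = T_{(p,g)} \Pi_\calQ\left( \nabla^\calP_X{}_{| p} \oplus 0_{| g} \right)$. This is exactly the right-hand side of Lemma~\ref{lem-nablaQ-TpiQ-nablaP}, whence $J(\nabla^\calP_X) = \nabla^\calQ_X$. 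Adding the two pieces yields the stated formula $J(\frakX) = \nabla^\calQ_X + \iota_\calQ \circ j(v)$.

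The independence of the right-hand side from the choice of $\omega^\calP$ I would then obtain essentially for free: the left-hand side $J(\frakX)$ is defined in Proposition~\ref{prop-def-J-properties} purely from $\frakX \oplus 0$ and $\Pi_\calQ$, with no reference to any connection, so the equality forces the right-hand side to be connection-independent as well. As a cross-check one may verify this by hand using \eqref{eq-nablaprimeP-nablaP-vXshift} and \eqref{eq-nablaprimeQ-nablaQ-vXshift}: passing from $\omega^\calP$ to $\omega^\calP + \connexionShift$ shifts $\nabla^\calP_X$ by $\iota_\calP(v^{X, \connexionShift})$, hence $v$ by $-v^{X, \connexionShift}$, while it shifts $\nabla^\calQ_X$ by $\iota_\calQ \circ j(v^{X, \connexionShift})$, and these two shifts cancel by linearity of $j$ and $\iota_\calQ$. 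Finally, specializing to $\frakX = \nabla^\calP_X$, that is $v = 0$, gives $J(\nabla^\calP_X) = \nabla^\calQ_X$ for every $X$, which is the identity $J \circ \nabla^\calP = \nabla^\calQ$. I do not anticipate a genuine obstacle, since all the analytic content has already been discharged in Lemma~\ref{lem-nablaQ-TpiQ-nablaP} and in the relation $J \circ \iota_\calP = \iota_\calQ \circ j$; the only point requiring care is to keep the decomposition of $\frakX$ and its image under $J$ aligned, which the additivity of $J$ handles cleanly.
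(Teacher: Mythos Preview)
Your proof is correct and follows essentially the same route as the paper: both split $\frakX = \nabla^\calP_X + \iota_\calP(v)$, identify $J(\nabla^\calP_X) = \nabla^\calQ_X$ via Lemma~\ref{lem-nablaQ-TpiQ-nablaP}, and handle the vertical piece via $J\circ\iota_\calP = \iota_\calQ\circ j$ (the paper cites the pointwise Lemma~\ref{lem-iotaQj-iotaP} instead of \eqref{eq-J-iota-rho}, but this is the same statement). Your observation that connection-independence is automatic from the connection-free definition of $J$ is a slight streamlining; the paper instead performs exactly your cross-check with \eqref{eq-nablaprimeP-nablaP-vXshift}--\eqref{eq-nablaprimeQ-nablaQ-vXshift}.
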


\begin{proof}
From Lemmas~\ref{lem-iotaQj-iotaP} and \ref{lem-nablaQ-TpiQ-nablaP}, we know that $\nabla^\calQ_X + \iota_\calQ \circ j (v) \in \Gamma_G(\calQ)$ comes from $\big(\nabla^\calP_X + \iota_\calP(v)\big) \oplus 0 = \frakX \oplus 0 \in \Gamma(T\calP \oplus TG)$: this gives the proposed expression for $J(\frakX)$.

Let $\omega'^\calP = \omega^\calP + \connexionShift$ be another connection on $\calP$ and $\frakX \in \Gamma_H(\calP)$. Then, one has $\frakX = \nabla^\calP_X + \iota_\calP(v) = \nabla'^\calP_X + \iota_\calP(v')$ for $v, v' \in \Gamma_H(\calP, \frakh)$. From \eqref{eq-nablaprimeP-nablaP-vXshift}, one gets $v = v' + v^{X, \connexionShift}$. Then, using \eqref{eq-nablaprimeQ-nablaQ-vXshift}, $\nabla'^\calQ_X + \iota_\calQ \circ j (v') = \nabla^\calQ_X + \iota_\calQ \circ j(v^{X, \connexionShift}) + \iota_\calQ \circ j (v') = \nabla^\calQ_X + \iota_\calQ \circ j (v)$, which shows that the RHS does not depend on the choice of $\omega^\calP$. \eqref{eq-J-nabla-P-nabla-Q} is proved for $\frakX = \nabla^\calP_X$.
\end{proof}

This relation clearly shows how $J$ extends $j$: the map $j$ applies to the “inner” part $v$ of $\frakX$, while we need the associated connection $\nabla^\calQ$ to take care of the geometric part $X$.

\subsection{\texorpdfstring{The map $R$}{The map R}}

For any $\hfX, \hfY \in \Gamma_G(\calQ)$, we define the following equivalence relation: $\hfX \sim \hfY$ if and only if there exists $\frakZ \in \Gamma_H(\calP)$ such that $\hfX - \hfY = J(\frakZ)$. We denote by $R : \Gamma_G(\calQ) \to \Gamma_G(\calQ)/{\sim}$ the quotient map for this equivalence relation. In the following, we use Prop.~\ref{prop-def-alt-J}.

Let $\hfX = \nabla^\calQ_X + \iota_\calQ(v)$, $\hfY = \nabla^\calQ_Y + \iota_\calQ(w)$ and $\frakZ = \nabla^\calP_Z + \iota_\calP(z)$, with $v,w \in \Gamma_G(\calQ, \frakg)$ and $z \in \Gamma_H(\calP, \frakh)$. Then we have $\nabla^\calQ_X - \nabla^\calQ_Y + \iota_\calQ(v-w) = \nabla^\calQ_Z + \iota_\calQ \circ j(z)$. Applying $\rho_\calQ$ on this relation, we get $X-Y=Z$, so that the relation reduces to $v-w = j(z)$. With $\tv = \hi^{-1}(v), \tw = \hi^{-1}(w) \in \Gamma_H(\calP, \frakg)$, the equivalence relation $\hfX \sim \hfY$ reduces to the equivalence relation $\tv \sim \tw$ in $\Gamma_H(\calP, \frakg)$ defining the map $r : \Gamma_H(\calP, \frakg) \to \Gamma_H(\calP, \frakg/\frakh)$ of Prop.~\ref{prop-map-r}.

This shows that 
\begin{equation}
\label{eq-R-r}
R(\hfX) = R( \nabla^\calQ_X + \iota(\tv)) = r(\tv),
\end{equation}
from which we deduce 
\begin{equation*}
\Gamma_G(\calQ)/{\sim}  \simeq \Gamma_H(\calP, \frakg/\frakh).
\end{equation*}
It is straightforward to check that $R$ does not depend on the choice of the connection $\omega^\calP$.

From these results, we get the short exact sequence of $C^\infty(\calM)$-modules
\begin{equation}
\label{eq-sec-J-R}
\begin{tikzcd}[column sep=25pt, row sep=30pt]
\algzero
	\arrow[r]
& \Gamma_H(\calP)
	\arrow[r, "J"]
& \Gamma_G(\calQ)
	\arrow[r, "R"]
& \Gamma_H(\calP, \frakg / \frakh)
	\arrow[r]
& \algzero
\end{tikzcd}
\end{equation}

\subsection{The exact commutative diagram}
\label{sec exact comm diagram}

Collecting all the maps defined so far, we get the following diagram:
\begin{equation}
\label{eq-diagram}
\begin{tikzcd}[column sep=30pt, row sep=35pt, nodes={inner sep=2pt}]
&[0pt] %
&[-55pt] 0 
	\arrow[d] 
&[-55pt] %
& 0 
	\arrow[d]  
& 0 
	\arrow[d] 
&[-10pt] %
\\[-10pt]
0 
	\arrow[rr]
& %
&|[alias=GammaPh]| \Gamma_H(\calP, \frakh) 
	\arrow[rr, "\iota_\calP"] 
	\arrow[dl, "i"'] 
& %
& \Gamma_H(\calP) 
	\arrow[r, "\rho_\calP"] 
	\arrow[dd, "J" ]
& \Gamma(T\calM) 
	\arrow[r] 
	\arrow[dd, equal]
& 0
\\
& \Gamma_H(\calP, \frakg) 
	\arrow[dddr, pos=0.70, "r"'] 
	\arrow[drrr, start anchor=east, pos=0.75, end anchor={[yshift=+0.5ex]west}, "\iota"] 
	\arrow[ddrr, outer sep = -0.2ex, "\hi", "\simeq"'] 
& %
& %
& %
& %
& %
\\[-30pt]
|[alias=zeromiddle]| 0 
	\arrow[ur, end anchor=west] 
& %
& %
& %
& \Gamma_G(\calQ) 
	\arrow[r, "\rho_\calQ"] 
	\arrow[dd,  "R"]
& \Gamma(T\calM) 
	\arrow[r] 
	\arrow[dd]
& 0
\\[-30pt]
& %
& %
&|[alias=GammaQg]| \Gamma_G(\calQ, \frakg) 
	\arrow[ru, start anchor=east, , end anchor={[yshift=-0.5ex]west}, "\iota_\calQ"'] 
	\arrow[dl, pos=0.35, "\hr"]
& %
& %
& %
\\
0 \arrow[rr]
& %
& \Gamma_H(\calP, \frakg / \frakh) 
	\arrow[rr, equal] \arrow[d] 
& %
& \Gamma_H(\calP, \frakg / \frakh) 
	\arrow[r] \arrow[d]
& 0
& %
\\[-10pt]
& %
& 0 
& %
& 0 
&
&
	\arrow[from=zeromiddle, to=GammaQg, crossing over, crossing over clearance=1ex, end anchor=west]
	\arrow[from=GammaPh, to=GammaQg, crossing over, crossing over clearance=1ex, pos=0.25, "j"]
\end{tikzcd}  
\end{equation}

\begin{proposition}
The diagram of $C^\infty(\calM)$-modules \eqref{eq-diagram} is exact and commutative.
\end{proposition}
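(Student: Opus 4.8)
The plan is to verify exactness at each node of the diagram~\eqref{eq-diagram} and then commutativity of every face, drawing systematically on the results already established. The diagram is built from three short exact sequences---the Atiyah sequences \eqref{eq-sec-Atiyah-P} and \eqref{eq-sec-Atiyah-Q}, the algebraic sequence \eqref{eq-sec-i-r}, and the sequence \eqref{eq-sec-J-R}---together with the injections $i$, $j$, $\iota$, and the isomorphism $\hi$. First I would recall that exactness of the two horizontal Atiyah rows is precisely the content of \eqref{eq-sec-Atiyah-P} and \eqref{eq-sec-Atiyah-Q}, and that exactness of the bottom row and the diagonal/vertical columns through $\Gamma_H(\calP,\frakg)$ and $\Gamma_G(\calQ,\frakg)$ is exactly Prop.~\ref{prop-map-r} (the sequence~\eqref{eq-sec-i-r}) and its transported version~\eqref{eq-sec-j-rhi}. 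Finally, exactness of the central vertical column $\algzero \to \Gamma_H(\calP) \xrightarrow{J} \Gamma_G(\calQ) \xrightarrow{R} \Gamma_H(\calP,\frakg/\frakh) \to \algzero$ is the sequence~\eqref{eq-sec-J-R}, already derived. Thus no new exactness argument is needed: each sequence appearing in the diagram has been proved earlier, and it remains only to confirm that the arrows drawn are the ones carrying these exactness statements.

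Next I would turn to commutativity, checking each face in turn. The upper-right square $\rho_\calQ \circ J = \rho_\calP$ and the relation $J \circ \iota_\calP = \iota_\calQ \circ j$ are both recorded in \eqref{eq-J-iota-rho} of Prop.~\ref{prop-def-J-properties}. The triangle $j = \hi \circ i$ is the content of \eqref{eq-j-i-hi}, and the defining relation $\iota = \iota_\calQ \circ \hi$ of \eqref{eq-def-iota} gives the face linking $\Gamma_H(\calP,\frakg)$ to $\Gamma_G(\calQ)$. The lower-left triangle, expressing $\hr = r \circ \hi^{-1}$, is \eqref{eq-def-hr}. For the large bottom square I would invoke \eqref{eq-R-r}: writing $\hfX = \nabla^\calQ_X + \iota(\tv)$ with $\tv = \hi^{-1}(v)$, one has $R(\hfX) = r(\tv)$, which matches $\hr \circ \iota_\calQ$ precomposed appropriately and shows the $R$-column agrees with the $r$-column under $\hi$. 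The remaining small commuting triangles involving $\iota$, $\iota_\calQ$, and $\hi$ reduce to $\iota = \iota_\calQ \circ \hi$ together with injectivity.

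The one identity requiring genuine attention---and the step I expect to be the main obstacle---is the compatibility of the central column $R$ with the algebraic quotient $r$, i.e.\ verifying that $R$ is well defined and that the square relating $R$ to $\hr$ commutes. This is where the equivalence relation $\hfX \sim \hfY \iff \hfX - \hfY = J(\frakZ)$ must be unwound: applying $\rho_\calQ$ forces $X - Y = Z$ by the commutativity $\rho_\calQ \circ J = \rho_\calP$, after which the relation collapses to $v - w = j(z)$ in $\Gamma_G(\calQ,\frakg)$, and transporting by $\hi^{-1}$ this becomes exactly the relation defining $r$ on $\Gamma_H(\calP,\frakg)$. This argument, carried out in the text preceding \eqref{eq-R-r}, simultaneously establishes that $\Gamma_G(\calQ)/{\sim} \simeq \Gamma_H(\calP,\frakg/\frakh)$, that $R$ is independent of the choice of connection $\omega^\calP$, and that the bottom square commutes. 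Once these pieces are assembled, exactness and commutativity of the full diagram follow by routine diagram-chasing, each node inheriting its exactness from one of the constituent short exact sequences and each face from one of the identities \eqref{eq-def-iota}, \eqref{eq-j-i-hi}, \eqref{eq-def-hr}, \eqref{eq-J-iota-rho}, and \eqref{eq-R-r}.
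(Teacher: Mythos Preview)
Your proposal is correct and takes essentially the same approach as the paper: exactness is inherited from the already-established short exact sequences \eqref{eq-sec-Atiyah-P}, \eqref{eq-sec-Atiyah-Q}, \eqref{eq-sec-i-r}, \eqref{eq-sec-j-rhi}, \eqref{eq-sec-J-R}, and commutativity from the identities \eqref{eq-j-i-hi}, \eqref{eq-def-iota}, \eqref{eq-J-iota-rho}, \eqref{eq-R-r}. The paper's own proof is a terse two-sentence citation of exactly these references, whereas you spell out which relation governs which face; but the substance is identical, and in particular your ``main obstacle'' (compatibility of $R$ with $r$ via $\hi^{-1}$) is precisely the computation already carried out in the text leading to \eqref{eq-R-r}, so no new work is needed there.
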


\begin{proof}
The exactness of this diagram is a direct consequence of the exactness of the short exact sequences \eqref{eq-sec-Atiyah-P}, \eqref{eq-sec-Atiyah-Q}, \eqref{eq-sec-i-r}, \eqref{eq-sec-j-rhi}, and \eqref{eq-sec-J-R}. The commutativity comes from \eqref{eq-j-i-hi}, \eqref{eq-def-iota}, \eqref{eq-J-iota-rho}, and \eqref{eq-R-r}.
\end{proof}

Notice that in \eqref{eq-diagram} the two top horizontal lines are short exact sequences of Lie algebras.

\smallskip
\begin{remark}
\label{rmk Mackenzie}
All the maps described in this diagram are constructed using sections of vector bundles, some of them being identified as vector valued equivariant maps on principal fiber bundles. As mentioned in the introduction, the first two rows of this diagram have been described in the language of fiber bundles, see diagram~(5) p.~289 in \cite{Mack87a}, in a more general situation of a bundle map $F(\Id, \phi) :  \calP(\calM, H) \to  \calP'(\calM, H')$ where $\Id : \calM \to \calM$ is the identity map, $\phi : H \to H'$ is a Lie groups morphism and $F :  \calP \to  \calP'$ satisfies $F(ph) = F(p) \phi(h)$ for any $(p,h) \in  \calP \times H$. It is straightforward to check that our maps $j$ and $J$ are related respectively to the maps $F^+_*$ and $F_*$ in diagram~(5), where here $F = \zeta$ is the bundle map inclusion $\calP \to \calQ$ and $\phi : H \hookrightarrow G$ is the inclusion. The new results in \eqref{eq-diagram} are then the identification of the last row to get vertical short exact sequences and the (direct) rewriting of all the maps in the language of sections adapted to physical applications.
\end{remark}

\subsection{The diagram in a local trivialization}
\label{sec The diagram in a local trivialization}

Let us use the notations of Sect.~\ref{sec-local-trivializations}. We have $\Gamma_H(\calP, \frakh) \locto \Gamma(\calU \times \frakh)$, $\Gamma_H(\calP, \frakg) \locto \Gamma(\calU \times \frakg)$, $\Gamma_H(\calP, \frakg/\frakh) \locto \Gamma(\calU \times \frakg/\frakh)$, and $\Gamma_G(\calQ, \frakg) \locto \Gamma(\calU \times \frakg)$ by the associations $v \locto \gamma \defeq s^*v$ for $v \in \Gamma_H(\calP, \frakh)$, $\hv \locto \hgamma \defeq s^*\hv$ for $\hv \in \Gamma_H(\calP, \frakg)$, $\bv \locto \bgamma \defeq s^*\bv$ for $\bv \in \Gamma_H(\calP, \frakg/\frakh)$, and $\hv \locto \hgamma \defeq \hs^* \hv$ for $\hv \in \Gamma_G(\calQ, \frakg)$.

Let us use the subscript $\loc$ to designate the local versions of the maps in \eqref{eq-diagram}. Then it is straightforward to establish that
\begin{align*}
&\rho_{\calP, \loc}(X \oplus \gamma) = X,
&
&\rho_{\calQ, \loc}(X \oplus \hgamma) = X,
\\
&\iota_{\calP, \loc}(\gamma) = 0 \oplus \gamma,
&
&\iota_{\calQ, \loc}(\hgamma) = 0 \oplus \hgamma,
\\
& i_\loc(\gamma) = i_\frakh (\gamma),
&
& j_\loc(\gamma) = i_\frakh (\gamma),
\\
&\iota_\loc(\hgamma) = \hi(\hgamma) = 0 \oplus \hgamma,
&
&r_\loc(\hgamma) =\hr_\loc(\hgamma) = [\hgamma],
\\
&J_\loc(X \oplus \gamma) = X \oplus i_\frakh (\gamma),
&
&R_\loc(X \oplus \hgamma) = [\hgamma].
\end{align*}

Notice that all the maps in \eqref{eq-diagram} reduce to very simple expressions in a local trivialization. This confirms the naturalness of the construction of this diagram, which will be further reinforced in the forthcoming considerations about connections.

\section{Connections and metrics}  
\label{sec Connections and metrics}

In \cite{LazzMass12a, FourLazzMass13a}, connections and generalized connections were considered in the framework of transitive Lie algebroids, in particular on Atiyah Lie algebroids. An Ehresmann connection on $\calP$ is in one-to-one correspondence with a splitting of \eqref{eq-sec-Atiyah-P}, $\nabla^\calP : \Gamma(T\calM) \to \Gamma_H(\calP)$, to which we can associate a unique map $\alpha^\calP : \Gamma_H(\calP) \to \Gamma_H(\calP, \frakh)$. In \cite{LazzMass12a}, this map was considered as a $1$-form in the differential calculus $\Omega^\grast_\lie(\calP, \frakh)$ described in Sec.~\ref{sec-differential-calculi}. In the following, we first characterized Ehresmann connections, and then Cartan connections, within the diagram~\eqref{eq-diagram}.

\subsection{Ehresmann connections}
\label{sec-Ehresmann-connect}

Let $\nabla^\calP : \Gamma(T\calM) \to \Gamma_H(\calP)$ be an Ehresmann connection on $\calP$ and $\alpha^\calP : \Gamma_H(\calP) \to \Gamma_H(\calP, \frakh)$ its associated unique map defined by $\frakX = \nabla^\calP_X - \iota_\calP \circ \alpha^\calP(\frakX)$ for any $\frakX \in \Gamma_H(\calP)$ where $X \defeq \rho_\calP(\frakX)$, see Section~\ref{subsec-transport-connection-P-Q}. 
Let $\nabla^\calQ = J \circ \nabla^\calP : \Gamma(T\calM) \to \Gamma_G(\calQ)$ be the associated connection on $\calQ$ (see \eqref{eq-J-nabla-P-nabla-Q}) and $\alpha^\calQ : \Gamma_G(\calQ) \to \Gamma_G(\calQ, \frakh)$ such that $\hfX = \nabla^\calQ_X - \iota_\calQ \circ \alpha^\calQ(\hfX)$ for any $\hfX \in \Gamma_G(\calQ)$.

\begin{proposition}
\label{prop-alphaQ-J-j-alphaP}
One has $\alpha^\calQ \circ J =j \circ \alpha^\calP$ and $R = -\hr \circ \alpha^\calQ$. 
\end{proposition}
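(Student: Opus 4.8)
The plan is to derive both identities directly from the intertwining relations for $J$ established in Propositions~\ref{prop-def-J-properties} and~\ref{prop-def-alt-J}, together with the characterization \eqref{eq-R-r} of $R$; the only subtle point is a relative sign coming from the conventions defining $\alpha^\calP$ and $\alpha^\calQ$.

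For the identity $\alpha^\calQ \circ J = j \circ \alpha^\calP$, I would take an arbitrary $\frakX \in \Gamma_H(\calP)$ and apply the module morphism $J$ to its defining decomposition $\frakX = \nabla^\calP_X - \iota_\calP \circ \alpha^\calP(\frakX)$, where $X = \rho_\calP(\frakX)$. Using $J \circ \nabla^\calP = \nabla^\calQ$ from \eqref{eq-J-nabla-P-nabla-Q} and $J \circ \iota_\calP = \iota_\calQ \circ j$ from \eqref{eq-J-iota-rho}, this yields
\begin{equation*}
J(\frakX) = \nabla^\calQ_X - \iota_\calQ \circ j \circ \alpha^\calP(\frakX).
\end{equation*}
Since $\rho_\calQ \circ J = \rho_\calP$ by \eqref{eq-J-iota-rho}, the element $J(\frakX) \in \Gamma_G(\calQ)$ projects onto the same $X$, so its own defining decomposition reads $J(\frakX) = \nabla^\calQ_X - \iota_\calQ \circ \alpha^\calQ(J(\frakX))$. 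Comparing the two expressions and invoking the injectivity of $\iota_\calQ$ (equivalently, the uniqueness of the decomposition defining $\alpha^\calQ$) gives $\alpha^\calQ(J(\frakX)) = j \circ \alpha^\calP(\frakX)$, which is the claim.

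For the identity $R = -\hr \circ \alpha^\calQ$, I would use the definition of $\alpha^\calQ$ to write an arbitrary $\hfX \in \Gamma_G(\calQ)$ as $\hfX = \nabla^\calQ_X - \iota_\calQ \circ \alpha^\calQ(\hfX)$ with $X = \rho_\calQ(\hfX)$. This is precisely the form $\hfX = \nabla^\calQ_X + \iota_\calQ(v)$ used in the construction of $R$ that leads to \eqref{eq-R-r}, upon setting $v \defeq -\alpha^\calQ(\hfX) \in \Gamma_G(\calQ, \frakg)$. The computation \eqref{eq-R-r} then gives $R(\hfX) = r(\hi^{-1}(v))$, and substituting $v = -\alpha^\calQ(\hfX)$ together with $\hr = r \circ \hi^{-1}$ from \eqref{eq-def-hr} yields $R(\hfX) = -\hr(\alpha^\calQ(\hfX)) = -(\hr \circ \alpha^\calQ)(\hfX)$.

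The whole argument is essentially book-keeping, since every ingredient is already in place. I expect the only genuine pitfall to be the mismatch of sign conventions: the maps $\alpha^\calP$ and $\alpha^\calQ$ are normalized with a minus sign, $\frakX = \nabla^\calP_X - \iota_\calP(\alpha^\calP(\frakX))$, whereas the decomposition feeding \eqref{eq-R-r} is written with a plus sign, $\hfX = \nabla^\calQ_X + \iota_\calQ(v)$. Carefully tracking this relative minus is what produces the sign in $R = -\hr \circ \alpha^\calQ$; in the first identity the minus sign is common to both sides and cancels, so no sign survives there.
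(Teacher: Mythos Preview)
Your proof is correct and follows essentially the same approach as the paper: for the first identity you apply $J$ to the decomposition of $\frakX$, use $J\circ\nabla^\calP=\nabla^\calQ$ and $J\circ\iota_\calP=\iota_\calQ\circ j$, and compare with the defining decomposition of $J(\frakX)$ via injectivity of $\iota_\calQ$; for the second identity you feed the decomposition $\hfX=\nabla^\calQ_X-\iota_\calQ\circ\alpha^\calQ(\hfX)$ into \eqref{eq-R-r} and read off the sign. The paper's proof is slightly more terse but the logic and the ingredients are identical.
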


\begin{proof}
For any $\frakX \in \Gamma_H(\calP)$ written as $\frakX = \nabla^\calP_X - \iota_\calP \circ \alpha^\calP(\frakX)$, one has, from Prop.~\ref{prop-def-alt-J}, $J(\frakX) = \nabla^\calQ_X - \iota_\calQ \circ j \circ \alpha^\calP(\frakX)$. On the other hand, since $J(\frakX) \in \Gamma_G(\calQ)$, one has $J(\frakX) = \nabla^\calQ_X - \iota_\calQ \circ \alpha^\calQ \circ J(\frakX)$. By injectivity of $\iota_\calQ$, we get $j \circ \alpha^\calP(\frakX) = \alpha^\calQ \circ J(\frakX)$.

For any $\hfX \in \Gamma_G(\calQ)$, using \eqref{eq-R-r} and \eqref{eq-def-hr}, one has $R(\hfX) = R(\nabla^\calQ_X - \iota_\calQ \circ \alpha^\calQ(\hfX)) = R(\nabla^\calQ_X - \iota \circ \hi^{-1} \circ \alpha^\calQ(\hfX)) =  - \hr \circ \alpha^\calQ(\hfX)$.
\end{proof}

We denote by $\omegaLie^\calP \in \Omega^1_\lie(\calP, \frakh)$ the associated $1$-form $\omegaLie^\calP = -\alpha^\calP$, so that $\omegaLie^\calP \circ \iota_\calP = \Id$. In the same way, define $\omegaLie^\calQ = - \alpha^\calQ \in  \Omega^1_\lie(\calQ, \frakg)$, which satisfies 
\begin{align}
\label{eq omegaLieQ J = j omegaLieP and R = hr omegaLieQ}
\omegaLie^\calQ \circ J &= j \circ \omegaLie^\calP,
&
R &= \hr \circ \omegaLie^\calQ
\end{align}
by Prop.~\ref{prop-alphaQ-J-j-alphaP}. Then, for any $\frakX \in \Gamma_H(\calP)$ and $\hfX \in \Gamma_G(\calQ)$, we have
\begin{align*}
\frakX &= \nabla^\calP_X + \iota_\calP \circ \omegaLie^\calP(\frakX),
&
\hfX &= \nabla^\calQ_X + \iota_\calQ \circ \omegaLie^\calQ(\hfX).
\end{align*}
Let $\omega^\calP \in \Omega^1(\calP) \otimes \frakh$ be the connection $1$-form associated to $\nabla^\calP$. Then $\omegaLie^\calP$ can also be directly defined by 
\begin{equation}
\label{eq-def-omegaLie}
\omegaLie^\calP(\frakX)(p) \defeq - \omega^\calP_{| p}(\frakX_{| p}) \text{ for any $\frakX \in \Gamma_H(\calP)$ and $p \in \calP$}. 
\end{equation}
Notice that it is a departure from our convention in \cite{LazzMass12a, FourLazzMass13a} where there the present $\alpha^\calP$ plays the role of the associated $1$-form to $\nabla^\calP$. This new convention is used here to be compatible with a similar construction for Cartan connections (Sect.~\ref{sec Cartan connections} and in particular the chosen normalization to get Prop.~\ref{prop commutativity varpiLie}). 

Then Prop.~\ref{prop-alphaQ-J-j-alphaP} implies the commutativity of the maps with solid lines in the following diagram.
\begin{equation}
\label{eq-diagram-ehresmann}
\begin{tikzcd}[column sep=30pt, row sep=30pt]
\Gamma_H(\calP, \frakh) 
	\arrow[r, "\iota_\calP", dashed] 
	\arrow[d, "j"] 
& \Gamma_H(\calP) 
	\arrow[l, bend right=25, "\omegaLie^\calP"'] 
	\arrow[r, "\rho_\calP", dashed] 
	\arrow[d, "J" ]
& \Gamma(T\calM) 
	\arrow[l, bend right=25, "\nabla^\calP"'] 
	\arrow[d, equal]
\\
\Gamma_G(\calQ, \frakg) 
	\arrow[r, "\iota_\calQ", dashed] 
	\arrow[d,  "\hr"]
& \Gamma_G(\calQ) 
	\arrow[l, bend right=25, "\omegaLie^\calQ"'] 
	\arrow[r, "\rho_\calQ", dashed] 
	\arrow[d,  "R"]
& \Gamma(T\calM) 
	\arrow[l, bend right=25, "\nabla^\calQ"']  
\\
\Gamma_H(\calP, \frakg / \frakh) 
	\arrow[r, equal]
& \Gamma_H(\calP, \frakg / \frakh)
&
\end{tikzcd}  
\end{equation}

Let $\omega^\calP \in \Omega^1(\calP) \otimes \frakh$ be the connection $1$-form defining $\nabla^\calP$. Using notations from Sect.~\ref{sec-local-trivializations} and \ref{subsec-transport-connection-P-Q}, one has $\nabla^\calP_X = S(X \oplus A^\calP(X))$ for any $X \in \Gamma(T\calU)$. For any $\frakX \in \Gamma_H(\calP_{| \calU})$ with $\frakX \locto X \oplus \gamma$, one has, on the one hand, using \eqref{eq-S-notnabla-Psi}, $\frakX = S(X \oplus \gamma) = \mrnabla_X + \iota_\calP \circ \Psi_\frakh(\gamma)$ and on the other hand, $\frakX = \nabla^\calP_X + \iota_\calP \circ \omegaLie^\calP(\frakX)$. Since $\nabla^\calP_X - \mrnabla_X = S(X \oplus A^\calP(X)) - S(X \oplus 0) = S(0 \oplus A^\calP(X)) = \iota_\calP \circ \Psi_\frakh \circ A^\calP(X)$, one has $\iota_\calP \circ \omegaLie^\calP \circ S(X \oplus \gamma) = - \nabla^\calP_X + \mrnabla_X + \iota_\calP \circ \Psi_\frakh(\gamma) = - \iota_\calP \circ \Psi_\frakh( A^\calP(X) - \gamma)$. Using the same computation on $\calQ$, one finally gets
\begin{align}
\label{eq-local-omegalie}
\omegaLie^\calP &\locto - A^\calP + \theta_\frakh,
&
\omegaLie^\calQ &\locto - A^\calQ + \theta_\frakg,
\end{align}
where $\theta_\frakh$ and $\theta_\frakg$ are the Maurer-Cartan $1$-forms on $H$ and $G$ respectively.

\smallskip
Let $\frakX = \nabla^\calP_X + \iota_\calP(v)$ and $\frakY = \nabla^\calP_Y + \iota_\calP(w)$ in $\Gamma_H(\calP)$ with $X, Y \in \Gamma(T\calM)$ and $v, w \in \Gamma_H(\calP, \frakh)$. Then, let $\Omega^\calP(X,Y) \in \Gamma_H(\calP, \frakh)$ be the curvature $2$-form of $\nabla^\calP$ \cite[Prop.~3.9]{LazzMass12a} defined by $\iota_\calP(\Omega^\calP(X,Y)) \defeq [\nabla^\calP_X, \nabla^\calP_Y] - \nabla^\calP_{[X,Y]}$, and let $\Omega^\calP \locto \Omega^\calP_\loc$ be its local curvature $2$-form. Then one has $[\frakX, \frakY] = [\nabla^\calP_X, \nabla^\calP_Y] + [\nabla^\calP_X, \iota_\calP(w)] - [\nabla^\calP_Y, \iota_\calP(v)] + [\iota_\calP(v), \iota_\calP(w)] = \nabla^\calP_{[X,Y]} + \iota_\calP( \Omega^\calP(X,Y) + \nabla^\calP_X \cdot w - \nabla^\calP_Y \cdot v + [v,w])$, so that, using \eqref{eq-trivial-X-v} with $\frakX \locto X \oplus (A^\calP(X) + \gamma)$ and $\frakY \locto Y \oplus (A^\calP(Y) + \eta)$ (\textit{i.e.} $v\locto \gamma$ and $w\locto \eta$), one has
\begin{align}
\label{eq local bracket}
[\frakX, \frakY] \locto [X,Y] \oplus (A^\calP([X,Y]) + \Omega^\calP_\loc(X,Y) + X \cdot \eta + [ A^\calP(X), \eta] - Y \cdot \gamma -[ A^\calP(Y), \gamma] + [\gamma, \eta]),
\end{align}
or equivalently $\Omega^\calP(X,Y) + \nabla^\calP_X \cdot w - \nabla^\calP_Y \cdot v + [v,w] \locto \Omega^\calP_\loc(X,Y) + X \cdot \eta + [ A^\calP(X), \eta] - Y \cdot \gamma -[ A^\calP(Y), \gamma] + [\gamma, \eta]$.

\subsection{Cartan connections}
\label{sec Cartan connections}

From now on, we suppose that $\calP$ defines a Cartan geometry for the groups $H \subset G$.

Let $\varpi \in \Omega^1(\calP) \otimes \frakg$ be a Cartan connection on $\calP$. This $1$-form satisfies:
\begin{align*}
&{R^\calP_h}^*\varpi = \Ad_{h^{-1}}\varpi \text{ for any $h \in H$},
\\                                             
&\varpi(\xi^v) = \xi \text{ for any $\xi \in \frakh$},
\\
&\varpi_{| p} : T_p\calP \to \frakg \text{ is an isomorphism}.   
\end{align*}
We associate to $\varpi$ the map $\varpiLie : \Gamma_H(\calP) \to \Gamma_H(\calP, \frakg)$ defined by (as a generalization of \eqref{eq-def-omegaLie})
\begin{equation}
\label{eq-def-varpiLie}
\varpiLie(\frakX)(p) \defeq - \varpi_{| p}(\frakX_{| p}) \text{ for any $\frakX \in \Gamma_H(\calP)$ and $p \in \calP$}. 
\end{equation}
The map $p \mapsto \varpiLie(\frakX)(p) \in \frakg$ is indeed $H$-equivariant: for any $p \in \calP$ and $h \in H$,
\begin{align*}
\varpiLie(\frakX)(ph) 
&= - \varpi_{| ph}(\frakX_{| ph}) 
= - \varpi_{|ph}(T_p R^\calP_h (\frakX_{| p}) ) 
= - ({R^\calP_h}^*\varpi)_{| p}(\frakX_{| p}) 
=  - \Ad_{h^{-1}} \varpi_{| p}(\frakX_{| p}) 
= \Ad_{h^{-1}} \circ \varpiLie(\frakX)(p).
\end{align*}

\begin{proposition}
\label{prop commutativity varpiLie}
$\varpiLie : \Gamma_H(\calP) \to \Gamma_H(\calP, \frakg)$ is an isomorphism of $C^\infty(\calM)$-modules and the following diagram
\begin{equation}
\label{eq-diagram-triangle-upleft}
\begin{tikzcd}[column sep=30pt, row sep=30pt]
 \Gamma_H(\calP, \frakh) \arrow[r, "\iota_\calP"] \arrow[d, "i"] 
& \Gamma_H(\calP) \arrow[dl,  "\varpiLie",  "\simeq"']
\\
 \Gamma_H(\calP, \frakg)
 &
\end{tikzcd} 
\end{equation}
is commutative: $\varpiLie \circ \iota_\calP = i$.
\end{proposition}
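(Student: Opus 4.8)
The plan is to reduce each assertion to a pointwise statement about the linear isomorphism $\varpi_{| p} : T_p \calP \to \frakg$, in the same spirit as the treatment of $\omegaLie^\calP$ in the Ehresmann case. First I would record the $C^\infty(\calM)$-linearity, which is immediate: under $C^\infty(\calM) \simeq C_H^\infty(\calP)$ a function acts on $\frakX \in \Gamma_H(\calP)$ by pointwise scalar multiplication, so linearity of each $\varpi_{| p}$ gives $\varpiLie(f\frakX)(p) = - \varpi_{| p}(f(p)\,\frakX_{| p}) = f(p)\,\varpiLie(\frakX)(p)$. Injectivity is equally direct: if $\varpiLie(\frakX) = 0$ then $\varpi_{| p}(\frakX_{| p}) = 0$ for every $p$, and since $\varpi_{| p}$ is an isomorphism this forces $\frakX_{| p} = 0$, hence $\frakX = 0$.

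For surjectivity I would build the inverse pointwise. Given $\hv \in \Gamma_H(\calP, \frakg)$, set $\frakX_{| p} \defeq - \varpi_{| p}^{-1}(\hv(p)) \in T_p\calP$; then $\varpiLie(\frakX) = \hv$ by construction, so the only real issue is to check that $\frakX$ lies in $\Gamma_H(\calP)$, i.e. that it is right invariant, and this is the step I expect to carry the weight of the proof. Rewriting ${R^\calP_h}^*\varpi = \Ad_{h^{-1}}\varpi$ as $\varpi_{| ph} \circ T_p R^\calP_h = \Ad_{h^{-1}} \circ \varpi_{| p}$, and hence $T_p R^\calP_h = \varpi_{| ph}^{-1} \circ \Ad_{h^{-1}} \circ \varpi_{| p}$, one computes $T_p R^\calP_h(\frakX_{| p}) = - \varpi_{| ph}^{-1} \circ \Ad_{h^{-1}}(\hv(p)) = - \varpi_{| ph}^{-1}(\hv(ph)) = \frakX_{| ph}$, where the decisive middle equality is precisely the $H$-equivariance $\hv(ph) = \Ad_{h^{-1}}\hv(p)$ of $\hv$. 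So the two equivariances, that of $\varpi$ and that of $\hv$, conspire to make $\frakX$ right invariant. Smoothness of $\frakX$ is automatic, the pointwise inverse of the smoothly varying isomorphism $\varpi$ being again smooth. Combined with linearity and injectivity, this shows $\varpiLie$ is an isomorphism of $C^\infty(\calM)$-modules.

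It then remains to establish the triangle $\varpiLie \circ \iota_\calP = i$, which I would obtain by unfolding $\iota_\calP$ and invoking the normalization $\varpi(\xi^v) = \xi$. For $v \in \Gamma_H(\calP, \frakh)$ one has $\iota_\calP(v)_{| p} = - v(p)^v_{| p}$, so that $\varpiLie \circ \iota_\calP(v)(p) = - \varpi_{| p}\bigl(- v(p)^v_{| p}\bigr) = \varpi_{| p}\bigl(v(p)^v_{| p}\bigr) = v(p)$, the last equality being the normalization applied to $\xi = v(p) \in \frakh$. Since the inclusion $i_\frakh : \frakh \hookrightarrow \frakg$ identifies this value with $i(v)(p)$, we conclude $\varpiLie \circ \iota_\calP = i$. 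In summary, the entire argument is pointwise and elementary; the only place where genuine care is required is matching the equivariance of $\varpi$ against that of $\hv$ in the surjectivity step.
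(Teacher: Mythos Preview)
Your proof is correct and follows essentially the same approach as the paper's own proof: both argue pointwise via the linear isomorphism $\varpi_{|p}$, construct the inverse by $\frakX_{|p} = -\varpi_{|p}^{-1}(\hv(p))$ and verify right invariance by matching the equivariance of $\varpi$ against that of $\hv$, and obtain the triangle from the normalization $\varpi(\xi^v)=\xi$ applied to $\iota_\calP(v)_{|p} = -v(p)^v_{|p}$. You add an explicit check of $C^\infty(\calM)$-linearity and a remark on smoothness of the inverse, which the paper leaves implicit, but there is no substantive difference.
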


\begin{proof}
Using the isomorphism $\varpi_{| p}$, to any $v \in \Gamma_H(\calP, \frakg)$ we associate $\frakX \in \Gamma(T\calP)$ by $\frakX_{| p} \defeq -\varpi_{| p}^{-1}(v(p))$. Let us check that $\frakX \in \Gamma_H(\calP)$: one has $\varpi_{| ph}(T_p R^\calP_h (\frakX_{| p})) = ({R^\calP_h}^* \varpi_{| p})(\frakX_{| p}) = \Ad_{h^{-1}} \varpi_{| p}(\frakX_{| p}) = - \Ad_{h^{-1}}v(p) = - v(ph) = \varpi_{| ph}(\frakX_{| ph})$ so that $\frakX_{| ph} = T_p R^\calP_h (\frakX_{| p})$. By construction $\varpiLie(\frakX) = v$, which proves the surjectivity.

Suppose that $\frakX \in \Gamma_H(\calP)$ is such that $\varpiLie(\frakX)=0$. Then, for any $p \in \calP$, $\varpi_{| p}(\frakX_{| p}) = 0$, so that $\frakX_{| p} = 0$ since $\varpi_{| p}$ is injective. This proves the injectivity.

For any $v\in \Gamma_H(\calP, \frakh)$, one has $\iota_\calP(v)_{| p} = - [v(p) ]^v_{| p}$ so that $\varpiLie(\iota_\calP(v))(p) = - \varpi_{| p}(\iota_\calP(v)_{| p}) = \varpi_{| p}\left( [v(p) ]^v_{| p} \right) = v(p)$. Since $i$ is just the inclusion map, we get $\varpiLie \circ \iota_\calP = i$.
\end{proof}

Since $r \circ \varpiLie \circ \iota_\calP = r \circ i = 0$, $\iota_\calP(\Gamma_H(\calP, \frakh))$ is in the kernel of the map $r\circ \varpiLie$. This induces a map $\tvarpiLie : \Gamma(T\calM) \to \Gamma_H(\calP, \frakg/\frakh)$ defined as follows: for any $X \in \Gamma(T\calM)$, consider any lift $\frakX \in \Gamma_H(\calP)$ such that $\rho_\calP(\frakX) = X$ (defined up to an element in $\iota_\calP(\Gamma_H(\calP, \frakh))$), then 
\begin{equation}
\label{eq-def-tvarpiLie}
\tvarpiLie(X) \defeq r \circ \varpiLie(\frakX).
\end{equation}

\begin{proposition}  \label{prop-varpiLie-tilde}
$\tvarpiLie : \Gamma(T\calM) \to \Gamma_H(\calP, \frakg/\frakh)$ is an isomorphism of $C^\infty(\calM)$-modules.
\end{proposition}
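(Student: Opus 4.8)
The plan is to recognise the claim as an instance of the short five lemma. The defining relation~\eqref{eq-def-tvarpiLie} says exactly that $\tvarpiLie \circ \rho_\calP = r \circ \varpiLie$, and Proposition~\ref{prop commutativity varpiLie} gives $\varpiLie \circ \iota_\calP = i$, so the two short exact sequences \eqref{eq-sec-Atiyah-P} and \eqref{eq-sec-i-r} fit into a commutative ladder
\begin{equation*}
\begin{tikzcd}[column sep=25pt, row sep=30pt]
\algzero \arrow[r] & \Gamma_H(\calP, \frakh) \arrow[r, "\iota_\calP"] \arrow[d, equal] & \Gamma_H(\calP) \arrow[r, "\rho_\calP"] \arrow[d, "\varpiLie"] & \Gamma(T\calM) \arrow[r] \arrow[d, "\tvarpiLie"] & \algzero \\
\algzero \arrow[r] & \Gamma_H(\calP, \frakh) \arrow[r, "i"] & \Gamma_H(\calP, \frakg) \arrow[r, "r"] & \Gamma_H(\calP, \frakg / \frakh) \arrow[r] & \algzero
\end{tikzcd}
\end{equation*}
with exact rows, the identity on the left and the isomorphism $\varpiLie$ in the middle. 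The five lemma then forces the right vertical map $\tvarpiLie$ to be an isomorphism. Since $\rho_\calP$, $\varpiLie$ and $r$ are all $C^\infty(\calM)$-linear, so is $\tvarpiLie$: for a lift $\frakX$ of $X$ and $f \in C^\infty(\calM)$, the element $f\frakX$ lifts $fX$, whence $\tvarpiLie(fX) = r \circ \varpiLie(f\frakX) = f\, \tvarpiLie(X)$.

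To keep the argument self-contained and in line with the explicit tone of the paper, I would spell out the two verifications rather than merely quote the five lemma. For \emph{surjectivity}, given $\bar w \in \Gamma_H(\calP, \frakg/\frakh)$ I use that $r$ is onto (exactness of~\eqref{eq-sec-i-r}) to pick $\hat w \in \Gamma_H(\calP, \frakg)$ with $r(\hat w) = \bar w$, then that $\varpiLie$ is onto to pick $\frakX \in \Gamma_H(\calP)$ with $\varpiLie(\frakX) = \hat w$; then $\tvarpiLie(\rho_\calP(\frakX)) = r \circ \varpiLie(\frakX) = \bar w$.

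For \emph{injectivity}, suppose $\tvarpiLie(X) = 0$ and choose any lift $\frakX$ with $\rho_\calP(\frakX) = X$. Then $r \circ \varpiLie(\frakX) = 0$, so $\varpiLie(\frakX) \in \Ker r = \im i$ by exactness; writing $\varpiLie(\frakX) = i(v) = \varpiLie(\iota_\calP(v))$ via Proposition~\ref{prop commutativity varpiLie} and using injectivity of $\varpiLie$ gives $\frakX = \iota_\calP(v)$, whence $X = \rho_\calP \circ \iota_\calP(v) = 0$ by exactness of~\eqref{eq-sec-Atiyah-P}. I do not anticipate any genuine obstacle here: the only points needing care are the two commutativities feeding the ladder—the right square being the very definition~\eqref{eq-def-tvarpiLie} of $\tvarpiLie$ (together with its well-definedness, already checked above the statement) and the left square being $\varpiLie \circ \iota_\calP = i$—after which everything reduces to the isomorphism property of $\varpiLie$ established in Proposition~\ref{prop commutativity varpiLie}.
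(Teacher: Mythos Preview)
Your proof is correct and follows exactly the paper's own approach: the paper simply invokes the five lemma on the same commutative ladder you wrote down, with the identity on the left and the isomorphism $\varpiLie$ in the middle. Your additional explicit verifications of injectivity, surjectivity, and $C^\infty(\calM)$-linearity go a bit beyond what the paper spells out but change nothing in substance.
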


\begin{proof}
To show that $\tvarpiLie$ is an isomorphism, apply the five lemma to the commutative diagram
\begin{equation} \label{eq-diag-five}
\begin{tikzcd}[column sep=30pt, row sep=25pt]
0 \arrow[r] \arrow[d, equal] 
& \Gamma_H(\calP, \frakh) \arrow[r, "\iota_\calP"] \arrow[d, equal] 
& \Gamma_H(\calP) \arrow[r, "\rho_\calP"] \arrow[d,  "\varpiLie",  "\simeq"'] 
& \Gamma(T\calM) \arrow[r] \arrow[d, "\tvarpiLie",  "\simeq"'] 
& 0 \arrow[d, equal] 
\\
 0 \arrow[r] 
 & \Gamma_H(\calP, \frakh) \arrow[r, "i"] 
 & \Gamma_H(\calP, \frakg) \arrow[r, "r"] 
 & \Gamma_H(\calP, \frakg/\frakh)\arrow[r] 
 & 0
\end{tikzcd}
 \end{equation}
 \end{proof} 

We have thus seen that a Cartan connection induces maps in the diagram~\eqref{eq-diagram} so that we get the following maps of $C^\infty(\calM)$-modules (this diagram is not commutative everywhere):
\begin{equation}
\label{eq-diagram-cartan}
\begin{tikzcd}[column sep=30pt, row sep=30pt]
\Gamma_H(\calP, \frakh) \arrow[r, "\iota_\calP"] \arrow[d, "i"] 
& \Gamma_H(\calP) \arrow[r, "\rho_\calP"] \arrow[d, "J" ] \arrow[dl, "\varpiLie",  "\simeq"']
& \Gamma(T\calM) \arrow[d, equal] 
\\
\Gamma_H(\calP, \frakg) \arrow[r, "\iota"] \arrow[d,  "r"]
& \Gamma_G(\calQ) \arrow[r, "\rho_\calQ"] \arrow[d,  "R"]
& \Gamma(T\calM)  \arrow[dl, "\tvarpiLie",  "\simeq"'] 
\\
\Gamma_H(\calP, \frakg / \frakh) \arrow[r, equal] 
& \Gamma_H(\calP, \frakg / \frakh) 
\end{tikzcd}  
\end{equation}

Let us now show the opposite: that $\varpiLie$ as in \eqref{eq-diagram-cartan}, such that \eqref{eq-diagram-triangle-upleft} commutes, defines a Cartan connection $\varpi$ on $\calP$. Notice that $\varpi$ must be defined on all vector fields on $\calP$ while $\varpiLie$ is only defined on right-invariant vector fields. The trick is to use the following result. 

From \cite[Prop.~4.4]{LazzMass12a}, we know that for every open subset $\calU \subset \calM$ which is an open set for a chart on $\calM$ and which trivializes $\calP$, there exists a family of generators $\{ \frakX^i \}$ of right-invariant vector fields on $\calP_{| \calU}$ such that:
\begin{enumerate}[label=(\roman*)]
\item $\forall p \in \calP_{| \calU}$, $\{ \frakX^i_{| p} \}$ is a basis for $T_p\calP_{| \calU}$.

\item It generates $\Gamma_H(\calP_{| \calU})$ as a $C^\infty(\calU)$-module, and, for any $\frakX \in \Gamma_H(\calP_{| \calU})$, the decomposition $\frakX = f_i \frakX^i$, with $f_i \in C_H^\infty(\calP_{| \calU})\simeq C^\infty(\calU)$, is unique.

\item It generates $\Gamma(T \calP_{|\calU})$ as a $C^\infty(\calP_{|\calU})$-module, and, for any $\tX \in \Gamma(T \calP_{|\calU})$,  the decomposition $\tX = g_i \frakX^i$, with $g_i \in C^\infty(\calP_{|\calU})$, is unique. 

\item Two such families $\{ \frakX^i \}$ are related by linear combinations whose coefficients are in $C_H^\infty(\calP_{| \calU})$.
\end{enumerate}

We are going to define the Cartan connection $\varpi$ on $\calP_{|\calU}$ for any open subset as before with its associated family of generators $\{ \frakX^i \}$. 

Decompose any $\tX \in \Gamma(T \calP_{|\calU})$ as $\tX = g_i \frakX^i$ with $g_i \in C^\infty(\calP_{|\calU})$ and define $\varpi \in \Omega^1(\calP_{|\calU}) \otimes \frakg$ by $\varpi(\tX) \defeq - g_i \varpiLie(\frakX^i)$. This definition makes sense since the decomposition is unique and the RHS is independent of the choice of the family of generators: let $\frakY^j = {a^j}_i \frakX^i$ with ${a^j}_i \in C_H^\infty(\calP_{| \calU})$ be another family of generators, then $\tX = h_j \frakY^j = g_i \frakX^i$ with $h_j \in C^\infty(\calP_{|\calU})$, so that $h_j {a^j}_i = g_i$, which implies $g_i \varpiLie(\frakX^i) = h_j {a^j}_i \varpiLie(\frakX^i) = h_j \varpiLie({a^j}_i \frakX^i) = h_j \varpiLie(\frakY^j)$. It is important to notice that $\varpiLie$ is $C_H^\infty(\calP_{| \calU})$-linear (functions ${a^j}_i$) but not $C^\infty(\calP_{|\calU})$-linear (functions $g_i$ and $h_j$).

Using a partition of unity associated to a covering $\{ \calU_i\}$ of $\calM$ such that the $\calP_{| \calU_i}$ are trivial, this defines $\varpi \in \Omega^1(\calP) \otimes \frakg$.

\begin{proposition}
With the above definition of $\varpi$ in terms of $\varpiLie$, one has:
\begin{enumerate}[label=(\roman*)]
\item\label{item-varpi-varpilie} $\varpi(\frakX) = - \varpiLie(\frakX)$ for any $\frakX \in \Gamma_H(\calP)$. 

\item\label{item-varpixiv-xi} $\varpi(\xi^v)=\xi$ for any $\xi \in \frakh$.

\item\label{item-RPh-varpi} ${R^\calP_h}^* \varpi = \Ad_{h^{-1}} \varpi$ for any $h \in H$.

\item\label{item-varpip-iso} For any $p \in \calP$, $\varpi_{| p} : T_p \calP \to \frakg$ is an isomorphism.
\end{enumerate}
\end{proposition}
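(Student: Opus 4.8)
The plan is to verify the four properties directly from the defining formula $\varpi(\tX) = -g_i\,\varpiLie(\frakX^i)$ (for $\tX = g_i\frakX^i$ with $g_i\in C^\infty(\calP_{|\calU})$), leaning on three facts already in hand: that $\varpiLie$ is a $C^\infty(\calM)$-module isomorphism, that the triangle \eqref{eq-diagram-triangle-upleft} gives $\varpiLie\circ\iota_\calP = i$, and that each $\varpiLie(\frakX)$ is $H$-equivariant (established just after \eqref{eq-def-varpiLie}). Since all four assertions are fibrewise or pointwise, it suffices to argue on $\calP_{|\calU}$ for a trivializing chart $\calU$ with its generating family $\{\frakX^i\}$; the local forms glue because two such families differ only by $C_H^\infty$-valued coefficients, which $\varpiLie$ factors through.

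For \ref{item-varpi-varpilie} I would take $\frakX\in\Gamma_H(\calP)$ and use that its decomposition $\frakX = f_i\frakX^i$ has coefficients $f_i\in C_H^\infty(\calP_{|\calU})\simeq C^\infty(\calU)$; by uniqueness of the $C^\infty(\calP_{|\calU})$-decomposition these coincide with the $g_i$ in the definition of $\varpi$, and $C^\infty(\calM)$-linearity of $\varpiLie$ gives $\varpi(\frakX) = -f_i\varpiLie(\frakX^i) = -\varpiLie(f_i\frakX^i) = -\varpiLie(\frakX)$. For \ref{item-varpixiv-xi}, given $p$ and $\xi\in\frakh$, I would choose $v\in\Gamma_H(\calP,\frakh)$ with $v(p)=\xi$ and invoke the pointwise identity $\xi^v_{|p} = -\iota_\calP(v)_{|p}$ coming from $\iota_\calP(v)_{|p} = -[v(p)]^v_{|p}$; then \ref{item-varpi-varpilie} together with $\varpiLie\circ\iota_\calP = i$ yields $\varpi_{|p}(\iota_\calP(v)_{|p}) = -i(v)(p) = -\xi$, whence $\varpi_{|p}(\xi^v_{|p}) = \xi$.

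For \ref{item-RPh-varpi} I would expand $\tX_{|p} = g_i(p)\frakX^i_{|p}$, push forward by $T_pR^\calP_h$ using the right-invariance $T_pR^\calP_h(\frakX^i_{|p}) = \frakX^i_{|ph}$ of the generators, and then apply the $H$-equivariance $\varpiLie(\frakX^i)(ph) = \Ad_{h^{-1}}\varpiLie(\frakX^i)(p)$; since $\Ad_{h^{-1}}$ is linear it factors out of $-g_i(p)\varpiLie(\frakX^i)(ph)$, producing exactly $\Ad_{h^{-1}}\varpi_{|p}(\tX_{|p})$, i.e.\ ${R^\calP_h}^*\varpi = \Ad_{h^{-1}}\varpi$.

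The main work is \ref{item-varpip-iso}. First I would record the dimension match forced by the Cartan geometry hypothesis, $\dim T_p\calP = \dim\calM + \dim\frakh = \dim(\frakg/\frakh) + \dim\frakh = \dim\frakg$, so the number of generators equals $\dim\frakg$ and it is enough to show that $\{\varpiLie(\frakX^i)(p)\}$ is a basis of $\frakg$. The crux is converting the module isomorphism into a pointwise statement: since $\{\frakX^i\}$ is a free $C^\infty(\calU)$-basis of $\Gamma_H(\calP_{|\calU})$ and $\varpiLie$ is a module isomorphism, the family $\{\varpiLie(\frakX^i)\}$ is a free $C^\infty(\calU)$-basis of $\Gamma_H(\calP_{|\calU},\frakg)\simeq C^\infty(\calU)\otimes\frakg$. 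Expressing these sections in the constant frame of $\frakg$ gives a square matrix $M\in M_{\dim\frakg}(C^\infty(\calU))$ invertible over the ring $C^\infty(\calU)$; as $\det M$ is then a unit it is nowhere vanishing, so $M(x)$ is invertible for every $x$ and $\{\varpiLie(\frakX^i)(s(x))\}$ is a basis of $\frakg$ at each point. Because $\varpi_{|s(x)}$ sends the basis $\{\frakX^i_{|s(x)}\}$ to minus this basis it is an isomorphism, and \ref{item-RPh-varpi} (equivalently, the $\Ad$-equivariance of $\varpiLie$) transports this to every $p = s(x)h$. The one delicate point is precisely this passage from ``module basis'' to ``pointwise basis'', which rests on the characterization of units in $C^\infty(\calU)$ as nowhere-vanishing functions; everything else is bookkeeping with the defining formula.
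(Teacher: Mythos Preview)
Your arguments for \ref{item-varpi-varpilie}, \ref{item-varpixiv-xi}, and \ref{item-RPh-varpi} coincide with the paper's proof essentially word for word (the paper also explicitly constructs the $v\in\Gamma_H(\calP,\frakh)$ with $v(p)=\xi$ via a bump function and a local section, but that is the only added detail).

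For \ref{item-varpip-iso} you and the paper take different but equally valid routes. The paper argues \emph{surjectivity} of $\varpi_{|p}$ directly: given $\xi\in\frakg$, it builds an equivariant $\hv\in\Gamma_H(\calP,\frakg)$ with $\hv(p)=\xi$ (again via a bump function), sets $\frakX \defeq -\varpiLie^{-1}(\hv)$, and applies \ref{item-varpi-varpilie} to get $\varpi_{|p}(\frakX_{|p})=\xi$; the dimension match then finishes. Your route instead transports the free $C^\infty(\calU)$-basis $\{\frakX^i\}$ through $\varpiLie$ to a free basis of $\Gamma_H(\calP_{|\calU},\frakg)$, and converts ``module basis'' into ``pointwise basis'' via the unit-determinant argument. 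The paper's approach is a bit more elementary (it uses only surjectivity of $\varpiLie$ and avoids talking about determinants or units in $C^\infty(\calU)$), while yours makes the role of the generating family $\{\frakX^i\}$ more visible and yields pointwise invertibility in one stroke rather than surjectivity plus a dimension count. Both are correct; neither requires anything beyond what is already established in the paper.
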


\begin{proof}
\ref{item-varpi-varpilie}: For $\frakX \in \Gamma_H(\calP_{|\calU})$, in the decomposition $\frakX = g_i \frakX^i$ the functions $g_i$ are in $C_H^\infty(\calP_{| \calU})$ so that $\varpi(\frakX) = - g_i \varpiLie(\frakX^i) = - \varpiLie(g_i \frakX^i) = - \varpiLie(\frakX)$ by $C_H^\infty(\calP_{| \calU})$-linearity of $\varpiLie$.

\ref{item-varpixiv-xi}: Let $\xi \in \frakh$. Fix a point $p \in \calP_{| \calU}$ and a smooth section $s : \calU \to \calP_{| \calU}$ such that $p = s(x_0)$. Define $v_0 \in \Gamma_H(\calP_{|\calU}, \frakh)$ by $v_0(s(x) h) \defeq \Ad_{h^{-1}} \xi$ for any $x \in \calU$ and $h \in H$, and let $f \in C^\infty(\calM)$ be such that $f = 1$ in a neighborhood of $x_0 \in \calU$ and $f = 0$ outside $\calU$. Then $v \defeq f v_0 \in \Gamma_H(\calP, \frakh)$ is such that $v(p) = v_0(s(x_0)) = \xi$. Then, using \ref{item-varpi-varpilie} and the commutativity of \eqref{eq-diagram-triangle-upleft}, $\varpi_{| p}( [\xi]^v_{| p}) = \varpi_{| p}( \iota_\calP(v)_{| p} ) = - \varpiLie(\iota_\calP(v))(p) = v(p) = \xi$.

\ref{item-RPh-varpi}: We compute this relation on $\calP_{| \calU}$. Let $\tX \in \Gamma(T \calP_{|\calU})$ with $\tX = g_i \frakX^i$. Then, for any $p \in \calP_{| \calU}$ and $h \in H$, 
\begin{align*}
\Ad_{h^{-1}} \circ \varpi_{| p} (\tX_{| p}) 
&= - \Ad_{h^{-1}} ( g_i(p) \varpiLie(\frakX^i)(p) ) 
= - g_i(p) \Ad_{h^{-1}}(\varpiLie(\frakX^i)(p))
= - g_i(p) \varpiLie(\frakX^i)(ph),
\\
({R^\calP_h}^* \varpi)_{| p} (\tX_{| p}) 
&= \varpi_{| ph} (T_pR^\calP_h (\tX_{| p}))
= \varpi_{| ph} (g_i(p) T_pR^\calP_h (\frakX^i_{| p}))
= \varpi_{| ph} (g_i(p) \frakX^i_{| ph})
= - g_i(p) \varpiLie(\frakX^i)(ph).
\end{align*}

\ref{item-varpip-iso}: The proof relies on a way to translate a “local” property on $\varpiLie$ over $\calP_{|\calU}$ to a “pointwise” property on $\varpi$ at $p \in \calP_{|\calU}$. Let us use the notations introduced in the proof of \ref{item-varpixiv-xi}.

Let $\xi \in \frakg$ and define $\hv_0 \in \Gamma_H(\calP_{|\calU}, \frakg)$ by $\hv_0(s(x)h) \defeq \Ad_{h^{-1}} \xi$ for any $x \in \calU$ and $h \in H$. Then $\hv \defeq f \hv_0 \in \Gamma_H(\calP, \frakg)$ is such that $\hv(p) = \xi$. Since $\varpiLie$ is an isomorphism, one defines $\frakX \defeq - \varpiLie^{-1}(\hv)$ and one gets $\varpi_{|p}(\frakX_{|p}) = - \varpiLie(\frakX)(p) = \xi$, so that $\varpi_{|p} : T_p \calP \to \frakg$ is surjective. Since $T_p\calP$ and $\frakg$ have same dimensions, $\varpi_{|p}$ is an isomorphism.
\end{proof}

We have then proved the following important equivalence of structures.
\begin{theorem}
\label{thm eq cartan iso}
Cartan connections $\varpi$ on $\calP$ are in one-to-one correspondence with isomorphisms of $C^\infty(\calM)$-modules $\varpiLie : \Gamma_H(\calP) \to \Gamma_H(\calP, \frakg)$ such that \eqref{eq-diagram-triangle-upleft} commutes. 
\end{theorem}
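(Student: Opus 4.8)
The plan is to package the two halves already assembled in this subsection and then verify that the resulting assignments are mutually inverse, since essentially all the analytic content is in place. In one direction, starting from a Cartan connection $\varpi$, Proposition~\ref{prop commutativity varpiLie} already produces the map $\varpiLie$ of \eqref{eq-def-varpiLie}, proves it is an isomorphism of $C^\infty(\calM)$-modules, and establishes $\varpiLie \circ \iota_\calP = i$, that is the commutativity of \eqref{eq-diagram-triangle-upleft}. Thus the assignment $\varpi \mapsto \varpiLie$ indeed lands in the set of isomorphisms subject to the normalization, and nothing further is needed for this half.

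In the other direction, starting from an isomorphism $\varpiLie : \Gamma_H(\calP) \to \Gamma_H(\calP, \frakg)$ with $\varpiLie \circ \iota_\calP = i$, I would invoke the construction carried out just above the theorem: on a trivializing chart $\calU$ choose the family of generators $\{\frakX^i\}$ of \cite[Prop.~4.4]{LazzMass12a}, set $\varpi(\tX) \defeq -g_i\,\varpiLie(\frakX^i)$ for $\tX = g_i\frakX^i$, and glue by a partition of unity. The four items of the preceding Proposition then do all the work: \ref{item-varpi-varpilie} that $\varpi$ restricts to $-\varpiLie$ on right-invariant fields, \ref{item-varpixiv-xi} the vertical normalization $\varpi(\xi^v)=\xi$, \ref{item-RPh-varpi} the $H$-equivariance ${R^\calP_h}^*\varpi = \Ad_{h^{-1}}\varpi$, and \ref{item-varpip-iso} that each $\varpi_{|p}$ is an isomorphism. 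These are exactly the three defining axioms of a Cartan connection, so the reverse assignment $\varpiLie \mapsto \varpi$ genuinely produces a Cartan connection.

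It then remains to check that the two assignments invert one another. Starting from $\varpiLie$ and building $\varpi$, item~\ref{item-varpi-varpilie} gives $\varpi(\frakX)=-\varpiLie(\frakX)$ for every $\frakX\in\Gamma_H(\calP)$; re-applying \eqref{eq-def-varpiLie} therefore returns $p \mapsto -\varpi_{|p}(\frakX_{|p}) = \varpiLie(\frakX)(p)$, recovering $\varpiLie$ exactly. Conversely, starting from $\varpi$, forming $\varpiLie$, and reconstructing a form $\varpi'$, item~\ref{item-varpi-varpilie} shows that $\varpi'$ and $\varpi$ agree on all right-invariant fields; since at each $p$ the values $\{\frakX^i_{|p}\}$ form a basis of $T_p\calP$ (property~(i) of the generators) and $\varpi'$ was defined by extending $C^\infty(\calP_{|\calU})$-linearly in the coefficients $g_i$, the two forms agree on every $\tX = g_i\frakX^i$, hence $\varpi'=\varpi$ everywhere. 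This closes the bijection.

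The one genuinely delicate point, already resolved in the construction preceding the theorem, is the mismatch of base rings. The map $\varpiLie$ is only $C_H^\infty(\calP)$-linear (equivalently $C^\infty(\calM)$-linear on right-invariant fields), whereas a Cartan connection is a $C^\infty(\calP)$-linear form on all of $T\calP$; one cannot simply extend $\varpiLie$ by $C^\infty(\calP)$-linearity, precisely because it is \emph{not} $C^\infty(\calP)$-linear. What makes the passage possible is that each family $\{\frakX^i\}$ is simultaneously a $C^\infty(\calM)$-basis of $\Gamma_H(\calP_{|\calU})$ and a pointwise basis of each $T_p\calP$, so that the extension $g_i\frakX^i \mapsto -g_i\,\varpiLie(\frakX^i)$ is well defined and, as checked above, independent of the chosen family. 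Beyond this bookkeeping the theorem is the formal assembly of Proposition~\ref{prop commutativity varpiLie} and the four items above, and I expect no new computation to be required.
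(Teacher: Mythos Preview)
Your proposal is correct and follows exactly the paper's approach: the theorem is a formal packaging of Proposition~\ref{prop commutativity varpiLie} (forward direction) and the unnamed proposition with items \ref{item-varpi-varpilie}--\ref{item-varpip-iso} (reverse direction), and indeed the paper's own proof consists of nothing more than the sentence ``We have then proved the following important equivalence of structures.'' Your explicit verification that the two assignments are mutually inverse, and your remark on the $C^\infty(\calM)$- versus $C^\infty(\calP)$-linearity issue resolved by the basis $\{\frakX^i\}$, are details the paper leaves implicit but which are entirely in line with its reasoning.
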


Recall that the curvature of a Cartan connection $\varpi$ is $\bOmega = \dd \varpi + \tfrac{1}{2} [\varpi, \varpi] \in \Omega^2(\calP) \otimes \frakg$. Notice that $\varpiLie \in \Omega^1_\lie(\calP, \frakg)$.

\begin{proposition}
$\bOmegaLie \defeq \hd \varpiLie - \tfrac{1}{2} [\varpiLie, \varpiLie] \in \Omega^2_\lie(\calP, \frakg)$ satisfies:
\begin{enumerate}[label=(\roman*)]
\item\label{item-bOmegalie-restriction} $\bOmegaLie$ is the restriction of $- \bOmega$ to $\Gamma_H(\calP) \subset \Gamma(T\calP)$.

\item\label{item-bOmega-iotaP} $\bOmegaLie$ vanishes on $\iota_\calP(\Gamma_H(\calP, \frakh))$. 
\end{enumerate}
\end{proposition}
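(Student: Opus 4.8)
The plan is to establish (i) by matching, term by term, the Lie-algebroid structure equation for $\varpiLie$ with the ordinary structure equation for $\varpi$ on $\calP$, and then to obtain (ii) from (i) together with the horizontality of the Cartan curvature in the $\frakh$-vertical directions.

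For (i), the first thing I would record is that $\varpiLie(\frakX) = -\varpi(\frakX)$ (as a $\frakg$-valued function on $\calP$) for every $\frakX \in \Gamma_H(\calP)$, and that on right-invariant vector fields the algebroid bracket is the Lie bracket of vector fields on $\calP$ while the representation $\frakX \cdot v$ is the directional derivative of $v$ along $\frakX$. With this dictionary, applying the differential~\eqref{eq-differentialAtiyahPh} (with values in $\frakg$) to $\varpiLie$ gives
\[
(\hd\varpiLie)(\frakX, \frakY) = \frakX \cdot \varpiLie(\frakY) - \frakY \cdot \varpiLie(\frakX) - \varpiLie([\frakX, \frakY]),
\]
which, upon substituting $\varpiLie = -\varpi$, is precisely $-(\dd\varpi)(\frakX, \frakY)$ by the Koszul formula for the de~Rham differential. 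For the quadratic term, evaluating on $(\frakX, \frakY)$ yields $\tfrac12[\varpiLie, \varpiLie](\frakX, \frakY) = [\varpiLie(\frakX), \varpiLie(\frakY)] = [\varpi(\frakX), \varpi(\frakY)] = \tfrac12[\varpi, \varpi](\frakX, \frakY)$, where the two minus signs cancel. Combining the two contributions gives $\bOmegaLie(\frakX, \frakY) = -\big(\dd\varpi + \tfrac12[\varpi, \varpi]\big)(\frakX, \frakY) = -\bOmega(\frakX, \frakY)$; the minus sign in the definition of $\bOmegaLie$ is exactly what absorbs the sign in $\varpiLie = -\varpi$.

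For (ii), I would use (i) to reduce the claim to $\bOmega(\iota_\calP(v), \frakY) = 0$ for $v \in \Gamma_H(\calP, \frakh)$. Since $\bOmega$ is tensorial, its value at $p$ depends only on the tangent vectors at $p$, and $\iota_\calP(v)_{| p} = -v(p)^v_{| p}$ points in the $\frakh$-vertical direction with $v(p) \in \frakh$. It therefore suffices to prove the standard horizontality $i_{\xi^v}\bOmega = 0$ for a \emph{fixed} $\xi \in \frakh$. For this I would combine the magic formula for the ordinary Lie derivative along $\xi^v$, namely $L_{\xi^v}\varpi = i_{\xi^v}\dd\varpi$ (since $i_{\xi^v}\varpi = \xi$ is constant), with the equivariance ${R^\calP_h}^*\varpi = \Ad_{h^{-1}}\varpi$, which differentiates to $L_{\xi^v}\varpi = -[\xi, \varpi]$; hence $(\dd\varpi)(\xi^v, \frakY) = -[\xi, \varpi(\frakY)]$. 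Using the normalization $\varpi(\xi^v) = \xi$ then gives $\bOmega(\xi^v, \frakY) = -[\xi, \varpi(\frakY)] + [\xi, \varpi(\frakY)] = 0$, and evaluating at each $p$ with $\xi = v(p)$ yields $\bOmega(\iota_\calP(v), \frakY) = 0$.

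The hard part is really only the bookkeeping in (i): I must be certain that the algebroid operations on $\Gamma_H(\calP)$ coincide on the nose with the manifold operations on $\calP$ under the restriction to right-invariant vector fields, so that the algebroid structure equation for $\varpiLie$ and the Cartan structure equation for $\varpi$ are genuinely the same identity up to the global sign. Once that identification is secured, the quadratic-term computation and the horizontality argument use nothing beyond the three defining properties of a Cartan connection recalled above.
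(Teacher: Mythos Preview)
Your proof of (i) is correct and matches the paper's argument essentially verbatim: both unwind $\hd\varpiLie$ via the algebroid Koszul formula, substitute $\varpiLie=-\varpi$, and note that the quadratic term is insensitive to the global sign.

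For (ii) your argument is also correct, but it differs from the paper's. You deduce (ii) from (i) by invoking the tensoriality of $\bOmega$ and the classical horizontality $i_{\xi^v}\bOmega=0$ for $\xi\in\frakh$, which you reprove via Cartan's magic formula and the $H$-equivariance of $\varpi$. The paper instead computes $\bOmegaLie(\frakX,\iota_\calP(v))$ directly from the algebroid definition, using the normalization $\varpiLie\circ\iota_\calP=i$ together with the identities of Lemma~\ref{lem-actions-vector-fields}, namely $[\frakX,\iota_\calP(v)]=\iota_\calP(\frakX\cdot v)$, $\frakX\cdot i(v)=i(\frakX\cdot v)$, and $\iota_\calP(v)\cdot\hw=[i(v),\hw]$; the four terms then cancel in pairs. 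The paper's route is thus logically independent of (i) and stays entirely within the algebroid calculus, which fits the module-theoretic spirit of the article and would survive verbatim for a generalized $\varpiLie$ satisfying only $\varpiLie\circ\iota_\calP=i$. Your route is more geometric and has the advantage of identifying (ii) as nothing more than the familiar horizontality of the Cartan curvature pulled back through (i).
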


\begin{proof}
\ref{item-bOmegalie-restriction}: For any $\frakX, \frakY \in \Gamma_H(\calP)$, one has
\begin{align*}
\bOmegaLie(\frakX, \frakY)
&= \frakX \cdot \varpiLie(\frakY) - \frakY \cdot \varpiLie(\frakX) - \varpiLie([\frakX, \frakY]) - [\varpiLie(\frakX), \varpiLie(\frakY)]
\\
&= - \frakX \cdot \varpi(\frakY) + \frakY \cdot \varpi(\frakX) + \varpi([\frakX, \frakY]) - [\varpi(\frakX), \varpi(\frakY)]
\\
&
= - \bOmega(\frakX, \frakY)
\end{align*}

\ref{item-bOmega-iotaP}: let $v \in \Gamma_H(\calP, \frakh)$ and $\frakX \in \Gamma_H(\calP)$, then, using $\varpiLie \circ \iota_\calP = i$ and Lemma~\ref{lem-actions-vector-fields},
\begin{align*}
\bOmegaLie(\frakX, \iota_\calP(v))
&= \frakX \cdot \varpiLie(\iota_\calP(v)) - \iota_\calP(v) \cdot \varpiLie(\frakX) - \varpiLie([\frakX, \iota_\calP(v)]) - [\varpiLie(\frakX), \varpiLie(\iota_\calP(v))]
\\
&= \frakX \cdot i(v) - [i(v), \varpiLie(\frakX)] - i(\frakX \cdot v) - [\varpiLie(\frakX), i(v)] 
= 0.
\end{align*}
\end{proof}

Using notations from Sect.~\ref{sec-local-trivializations}, let $\bA \defeq s^* \varpi$ be the local trivialization of the Cartan connection $1$-form. Then, using \eqref{eq-S-notnabla-Psi} and $p = s(x)h$,
\begin{align*}
\varpiLie \circ S(X \oplus \gamma)(p)
&= - \varpi_{| s(x)h} \left( T_{s(x)} R^\calP_h \circ T_x s (X_{| x}) - [\Ad_{h^{-1}} \circ \gamma(x)]^v_{| s(x) h} \right)
= - \Ad_{h^{-1}} \left( \varpi_{| s(x)}( T_x s (X_{| x}) ) - \gamma(x) \right)
\\
&= - \Ad_{h^{-1}} \left(  \bA(X)_{| x} - \gamma(x) \right)
= - \Psi_\frakg \left(  \bA(X) - \gamma \right)(s(x)h).
\end{align*} 
This gives
\begin{equation}
\label{eq-varpilie-loc}
\varpiLie \locto - \bA + i_\frakh \circ \theta_\frakh,
\end{equation}
which generalizes \eqref{eq-local-omegalie}, and the local trivializations of the isomorphisms $\varpiLie : \Gamma_H(\calP) \xrightarrow{\simeq} \Gamma_H(\calP, \frakg)$ and $\tvarpiLie : \Gamma(T\calM) \xrightarrow{\simeq} \Gamma_H(\calP, \frakg/\frakh)$ are
\begin{align*}
\Gamma(T\calU \oplus \calU \times \frakh) &\to \Gamma(\calU \times \frakg), & X \oplus \gamma &\mapsto -\bA(X) + i_\frakh \circ \gamma,
\\
\Gamma(T\calU) &\to \Gamma(\calU \times \frakg/\frakh), & X &\mapsto - [\bA(X)].
\end{align*}

Furthermore, on extending standard result in the literature~\cite[Appendix~A, §~3]{Shar97a} to Atiyah Lie algebroids, we show that specific Ehresmann connections on $\calQ$ provide Cartan connections on $\cal P$:
\begin{proposition}
\label{prop Ehresmann to Cartan Lie}
Let $\omega^\calQ \in \Omega^1(\calQ) \otimes \frakg$ be a connection $1$-form on $\calQ$, let $\nabla^\calQ : \Gamma(T\calM) \to \Gamma_G(\calQ)$ be the corresponding horizontal lift, and let $\omegaLie^\calQ \in \Omega^1_\lie(\calQ, \frakg)$ be its associated $1$-form  which all together define the short exact sequence
\begin{equation}
\label{eq-diagram-ses-omegaQLie}
\begin{tikzcd}[column sep=25pt, row sep=30pt]
\algzero \arrow[r] 
& \Gamma(T\calM) \arrow[r, "\nabla^\calQ"]
& \Gamma_G(\calQ) \arrow[r, "\omegaLie^\calQ"] 
& \Gamma_G(\calQ, \frakg) \arrow[r]
& \algzero\, .
\end{tikzcd}
\end{equation}
Then $\omega^\calQ$ defines a Cartan connection $\varpi$ on $\calP$ if and only if the map $R\circ\nabla^\calQ : \Gamma(T\calM) \to \Gamma_H(\calP, \frakg/\frakh)$ is an isomorphism of $C^\infty(\calM)$-modules. In the correspondence given by Theorem~\ref{thm eq cartan iso}, this Cartan connection $\varpi$ is related to the isomorphism 
\begin{align}
\label{eq varpiLie from omegaLieQ}
\varpiLie \defeq \hi^{-1} \circ \omegaLie^\calQ \circ J : \Gamma_H(\calP) \to \Gamma_H(\calP, \frakg),
\end{align}
for which the diagram \eqref{eq-diagram-triangle-upleft} commutes.

Moreover, the isomorphism $R\circ\nabla^\calQ$ implies 
\begin{equation}
\label{eq-ker omega cap im J = 0}
\ker \omegaLie^\calQ \cap J(\Gamma_H(\calP)) = \{ 0 \}.
\end{equation}
\end{proposition}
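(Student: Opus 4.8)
The plan is to exhibit the candidate isomorphism $\varpiLie := \hi^{-1} \circ \omegaLie^\calQ \circ J$, check that it makes the triangle \eqref{eq-diagram-triangle-upleft} commute \emph{unconditionally}, and then pin down exactly when it is bijective by embedding it into a ladder of short exact sequences and invoking the five lemma; Theorem~\ref{thm eq cartan iso} then translates this into the assertion about Cartan connections.

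First I would verify commutativity of \eqref{eq-diagram-triangle-upleft}. Being a composite of the $C^\infty(\calM)$-module morphisms $J$, $\omegaLie^\calQ$ and $\hi^{-1}$, the map $\varpiLie$ is itself such a morphism. Using $J \circ \iota_\calP = \iota_\calQ \circ j$ from \eqref{eq-J-iota-rho}, the normalization $\omegaLie^\calQ \circ \iota_\calQ = \Id$ valid for the associated $1$-form of any Ehresmann connection on $\calQ$, and the factorization $j = \hi \circ i$ from \eqref{eq-j-i-hi}, one gets
\[
\varpiLie \circ \iota_\calP = \hi^{-1} \circ \omegaLie^\calQ \circ \iota_\calQ \circ j = \hi^{-1} \circ j = i,
\]
so the triangle commutes for \emph{any} choice of $\omega^\calQ$.

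The core step is to fit $\varpiLie$ into the ladder \eqref{eq-diag-five}, whose rows are the exact sequences \eqref{eq-sec-Atiyah-P} and \eqref{eq-sec-i-r}, with left vertical map $\Id$, middle map $\varpiLie$, and right vertical map the $\tau$ determined by $\tau \circ \rho_\calP = r \circ \varpiLie$; this $\tau$ is well defined because $r \circ \varpiLie \circ \iota_\calP = r \circ i = 0$, and it is the map given by the formula \eqref{eq-def-tvarpiLie}. I would then identify $\tau$ with $-\,R\circ\nabla^\calQ$. For $X \in \Gamma(T\calM)$ choose a lift $\frakX \in \Gamma_H(\calP)$, so that $\tau(X) = r \circ \varpiLie(\frakX) = \hr \circ \omegaLie^\calQ \circ J(\frakX)$ using $\hr = r \circ \hi^{-1}$ from \eqref{eq-def-hr}. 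Both $\nabla^\calQ_X$ and $J(\frakX)$ are lifts of $X$ through $\rho_\calQ$ (the latter by \eqref{eq-J-iota-rho}), hence differ by $\iota_\calQ(u)$ with $u = \omegaLie^\calQ(\nabla^\calQ_X - J(\frakX)) = -\,\omegaLie^\calQ(J(\frakX))$, using $\omegaLie^\calQ \circ \nabla^\calQ = 0$ and $\omegaLie^\calQ \circ \iota_\calQ = \Id$. Applying $R$, and using $R(J(\frakX)) = 0$ together with $R \circ \iota_\calQ = \hr$ (both read off from \eqref{eq-R-r}), I obtain $R\circ\nabla^\calQ(X) = \hr(u) = -\,\tau(X)$. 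The main obstacle is localized exactly here: because $\omegaLie^\calQ$ is the $1$-form of an \emph{arbitrary} connection on $\calQ$ and not one transported from $\calP$, the identity $R = \hr \circ \omegaLie^\calQ$ of \eqref{eq omegaLieQ J = j omegaLieP and R = hr omegaLieQ} is unavailable, and one must genuinely compare the two lifts $\nabla^\calQ_X$ and $J(\frakX)$.

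With $\tau = -\,R\circ\nabla^\calQ$ established, the five lemma applied to the ladder shows $\varpiLie$ is an isomorphism if and only if $R\circ\nabla^\calQ$ is. Together with the commutativity proved above and Theorem~\ref{thm eq cartan iso}, this yields the stated equivalence and the correspondence $\varpi \leftrightarrow \varpiLie$. Finally, for \eqref{eq-ker omega cap im J = 0} I would use that the sequence \eqref{eq-diagram-ses-omegaQLie} gives $\ker \omegaLie^\calQ = \nabla^\calQ(\Gamma(T\calM))$: any element of $\ker \omegaLie^\calQ \cap J(\Gamma_H(\calP))$ is of the form $\nabla^\calQ_X = J(\frakX)$, so applying $R$ gives $R\circ\nabla^\calQ(X) = R(J(\frakX)) = 0$, whence $X = 0$ by injectivity of the isomorphism $R\circ\nabla^\calQ$, and the element equals $\nabla^\calQ_0 = 0$.
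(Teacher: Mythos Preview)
Your proof is correct and follows essentially the same architecture as the paper's: verify $\varpiLie\circ\iota_\calP=i$ unconditionally, embed $\varpiLie$ in the ladder \eqref{eq-diag-five-bis}, identify the induced right vertical map with $-R\circ\nabla^\calQ$, apply the five lemma, and deduce \eqref{eq-ker omega cap im J = 0} from injectivity of $R\circ\nabla^\calQ$ via $\ker\omegaLie^\calQ=\im\nabla^\calQ$ and $\im J=\ker R$.

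The one genuine difference lies in how you carry out the identification $\tau=-R\circ\nabla^\calQ$. The paper introduces an auxiliary background Ehresmann connection $\mromegaLie^\calP$ on $\calP$, transports it to $\mromegaLie^\calQ$ on $\calQ$ via Prop.~\ref{prop-connextion-P-to-Q}, forms the difference $\Delta^\calQ=\mromegaLie^\calQ-\omegaLie^\calQ$, and then exploits the identity $R=\hr\circ\mromegaLie^\calQ$ (which \emph{is} available for the transported connection) to reach $r\circ\varpiLie=-R\circ\nabla^\calQ\circ\rho_\calP$. You bypass this entirely: since $\nabla^\calQ_X$ and $J(\frakX)$ are both $\rho_\calQ$-lifts of $X$, their difference is $\iota_\calQ(u)$ with $u=-\omegaLie^\calQ(J(\frakX))$, and then $R(\nabla^\calQ_X)=R\circ\iota_\calQ(u)=\hr(u)=-\tau(X)$, using only $R\circ J=0$ and $R\circ\iota_\calQ=\hr$ from the commutativity of \eqref{eq-diagram}. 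Your route is shorter and avoids the auxiliary connection altogether; the paper's route, in exchange, makes explicit the comparison formula \eqref{eq-Delta-Q} between the two connections on $\calQ$, which has independent interest but is not needed for the proposition itself.
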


\begin{proof}
Let us first prove that if $R\circ\nabla^\calQ$ is an isomorphism then \eqref{eq-ker omega cap im J = 0} holds true. Since $\ker \omegaLie^\calQ = \im\nabla^\calQ$ and $\im J = \ker R$ by exactness of \eqref{eq-sec-J-R} and \eqref{eq-diagram-ses-omegaQLie},  \eqref{eq-ker omega cap im J = 0} is equivalent to $\im\nabla^\calQ \cap \ker R = \{ 0 \}$. Let $\hfX \in \im\nabla^\calQ \cap \ker R \subset \Gamma_G(\calQ)$: then, there is a $X\in \Gamma(T\calM)$ such that $\hfX = \nabla^\calQ_X$ and $R\circ\nabla^\calQ_X = 0$. Since $R\circ\nabla^\calQ$ is an isomorphism, this implies $X = 0$, and so $
\hfX = 0$, which proves that $\im\nabla^\calQ \cap \ker R = \{ 0 \}$.

From now on, we use Theorem~\ref{thm eq cartan iso} to work directly with $\varpiLie$ defined by \eqref{eq varpiLie from omegaLieQ}. The commutativity of \eqref{eq-diagram-triangle-upleft} is proved by a direct computation using the commutativity of \eqref{eq-diagram}: for any $v \in \Gamma_H(\calP, \frakh)$, one has
\begin{align}
\label{eq varpiLie iotaP v}
\varpiLie \circ \iota_\calP(v)
&= \hi^{-1} \circ \omegaLie^\calQ \circ J \circ \iota_\calP(v)
= \hi^{-1} \circ \omegaLie^\calQ \circ \iota_\calQ \circ j(v)
= \hi^{-1} \circ j(v)
= i(v).
\end{align}
Notice that the injectivity of $\varpiLie$ is a direct consequence of \eqref{eq-ker omega cap im J = 0}: let $\frakX \in \Gamma_H(\calP)$ be such that $\varpiLie(\frakX) = 0$, then $\omegaLie^\calQ \circ J(\frakX) = 0$, and so $J(\frakX) \in \ker \omegaLie^\calQ \cap J(\Gamma_H(\calP)) = \{ 0 \}$, so that $\frakX = 0$. On the contrary, the surjectivity of $\varpiLie$ is not so straightforward.\footnote{It seems to us that the condition \eqref{eq-ker omega cap im J = 0} alone, which ought to parallel the condition given in \cite[Appendix~A, Prop.~3.1]{Shar97a} on the Ehresmann connection on $\calQ$, is not “strong enough” to deal with the module aspect of our construction to show this surjectivity even if $\varpiLie$ is injective between two isomorphic modules.}

Let $\mromegaLie^\calP : \Gamma_H(\calP) \to \Gamma_H(\calP, \frakh)$ be a background (Ehresmann) connection on $\calP$, let $\mrnabla^\calP : \Gamma(T\calM) \to \Gamma_H(\calP)$ be its associated lift. From Prop.~\ref{prop-connextion-P-to-Q}, let $\mromegaLie^\calQ : \Gamma_G(\calQ) \to \Gamma_G(\calQ, \frakg)$ be the transported connection and $\mrnabla^\calQ : \Gamma(T\calM) \to \Gamma_G(\calQ)$ its associated lift. Then there is $\Delta^\calQ : \Gamma_G(\calQ) \to \Gamma_G(\calQ, \frakg)$ such that 
\begin{align}
\label{eq-compare-omegaLie}
\Delta^\calQ = \mromega_\lie^\calQ - \omegaLie^\calQ
\end{align}
with $\Delta^\calQ \circ \iota_\calQ = 0$ owing to the normalization of the connection 1-forms. Hence, for any $\hfX \in \Gamma_G(\calQ)$ with $X=\rho_\calQ(\hfX)$, one has the two decompositions
\begin{equation}
\label{eq hfX two decompositions}
\hfX = \mathring{\nabla}^\calQ_X + \iota_\calQ \circ \mromega_\lie^\calQ (\hfX) = \nabla^\calQ_X + \iota_\calQ \circ \omegaLie^\calQ (\hfX).
\end{equation}
Inserting these two decompositions into \eqref{eq-compare-omegaLie}, and using the normalization condition and the respective splittings, one gets
\begin{align}
\label{eq-Delta-Q}
\Delta^\calQ (\hfX) 
= \mromega_\lie^\calQ \circ \nabla^\calQ_X 
= - \omegaLie^\calQ \circ \mathring{\nabla}^\calQ_X 
\quad\text{\textsl{i.e.}}\quad 
\Delta^\calQ 
= \mromega_\lie^\calQ \circ \nabla^\calQ \circ \rho_\calQ 
= - \omegaLie^\calQ \circ \mathring{\nabla}^\calQ \circ \rho_\calQ .
\end{align}
Similarly to \eqref{eq-diag-five}, let us consider the diagram
\begin{equation} \label{eq-diag-five-bis}
\begin{tikzcd}[column sep=30pt, row sep=25pt]
0 \arrow[r] \arrow[d, equal] 
& \Gamma_H(\calP, \frakh) \arrow[r, "\iota_\calP"] \arrow[d, equal] 
& \Gamma_H(\calP) \arrow[r, "\rho_\calP"]
\arrow[d, "\varpiLie"] 
& \Gamma(T\calM) \arrow[r] \arrow[d, "-R\circ\nabla^\calQ"]
& 0 \arrow[d, equal] 
\\
 0 \arrow[r] 
 & \Gamma_H(\calP, \frakh) \arrow[r, "i"] 
 & \Gamma_H(\calP, \frakg) \arrow[r, "r"] 
 & \Gamma_H(\calP, \frakg/\frakh)\arrow[r] 
 & 0
\end{tikzcd}
 \end{equation}
The non trivial left square commutes by \eqref{eq varpiLie iotaP v}. Let us show that the non trivial right square commutes also. Any $\frakX\in \Gamma_H(\calP)$ can be decomposed as $\frakX = \mathring{\nabla}^\calP_X + \iota_\calP \circ \mromega_\lie^\calP (\frakX)$. Then, using \eqref{eq varpiLie iotaP v}, one has $\varpiLie(\frakX) = \varpiLie\circ \mathring{\nabla}^\calP_X + i\circ \mromega_\lie^\calP (\frakX)$ with
\begin{align*}
\varpiLie\circ \mathring{\nabla}^\calP_X 
& = \hi^{-1}\circ \omegaLie^\calQ \circ J \circ \mathring{\nabla}^\calP_X 
= \hi^{-1}\circ \omegaLie^\calQ \circ \mathring{\nabla}^\calQ_X 
= - \hi^{-1}\circ \mromega_\lie^\calQ \circ \nabla^\calQ_X
\end{align*}
where \eqref{eq varpiLie from omegaLieQ}, \eqref{eq-J-nabla-P-nabla-Q} (for the background connections), and \eqref{eq-Delta-Q} have been used successively. Then using \eqref{eq-sec-i-r} and the properties of \eqref{eq-diagram-ehresmann} (for the background connections)
\begin{equation*}
(r\circ\varpiLie)(\frakX) 
= r\circ\varpiLie\circ \mathring{\nabla}^\calP_X 
= - \hr\circ \mromega_\lie^\calQ \circ \nabla^\calQ_X 
= - R\circ \nabla^\calQ_X 
= - R\circ \nabla^\calQ \circ \rho_\calQ(\frakX) .
\end{equation*}
Hence, the diagram \eqref{eq-diag-five-bis} is commutative and exact. Using the five lemma in this diagram, there is a equivalence between $\varpiLie$ and $-R\circ \nabla^\calQ$ being isomorphisms. In particular, when $\varpiLie$ is a Cartan connection, then $\tvarpiLie = - R\circ \nabla^\calQ$.
\end{proof}

\subsection{Cartan connections: the reductive case}
\label{sec Cartan reductive}

Following \cite{Shar97a}, a Cartan geometry is said to be \emph{reductive} if the Lie algebra $\frakg$ admits a decomposition $\frakg=\frakh \oplus \frakp$ where $\frakp$ is a $\frakh$-module. In that case, denote by $\pi_\frakh : \frakg \to \frakh$ and $\pi_\frakp : \frakg \to \frakp$ the corresponding projections. We identify the $\frakh$-modules $\frakg/\frakh$ and $\frakp$ in a natural way, and the projections induce projections $\pi_\frakh : \Gamma_H(\calP, \frakg) \to \Gamma_H(\calP, \frakh)$ and $\pi_\frakp : \Gamma_H(\calP, \frakg) \to \Gamma_H(\calP, \frakp) \simeq \Gamma_H(\calP, \frakg/\frakh)$. The inclusion $i_\frakp : \frakp \hookrightarrow \frakg$ induces an inclusion $\bi : \Gamma_H(\calP, \frakp) \hookrightarrow \Gamma_H(\calP, \frakg)$. Notice that
\begin{align}
\label{eq-pifrakh-i-pifrakp-prifrakp-bi}
\pi_\frakh \circ i &= \Id,
&
\pi_\frakp &= r,
&
\pi_\frakp \circ \bi &= \Id.
\end{align}

A Cartan connection $\varpi$ splits as $\varpi = i_\frakh \circ \omega + i_\frakp \circ \beta$ where $\omega \defeq \pi_\frakh \circ \varpi$ and $\beta \defeq \pi_\frakp \circ \varpi$. Its associated $1$-form $\varpiLie$ splits also as $\varpiLie = i \circ \omegaLie \oplus \bi \circ \betaLie$ where
\begin{align*}
\omegaLie &\defeq \pi_\frakh \circ \varpiLie : \Gamma_H(\calP) \to \Gamma_H(\calP, \frakh), 
&
\betaLie &\defeq \pi_\frakp \circ \varpiLie : \Gamma_H(\calP) \to  \Gamma_H(\calP, \frakp).
\end{align*}
Then $\omega$ is an Ehresmann connection $1$-form on $\calP$ and $\omegaLie$ is its associated $1$-form: we denote by $\nabla^\omega : \Gamma(T\calM) \to \Gamma_H(\calP)$ its horizontal lift.

\begin{proposition}
One has $\omegaLie \circ \iota_\calP = \Id$, $\betaLie \circ \iota_\calP = 0$ and, for any $\frakX \in \Gamma_H(\calP)$ with $X = \rho_\calP(\frakX)$, $\frakX = \nabla^\omega_X + \iota_\calP \circ \omegaLie(\frakX)$ and $\tvarpiLie(X) = \betaLie(\frakX) = \betaLie \circ \nabla^\omega_X$.
\end{proposition}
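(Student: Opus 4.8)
The plan is to reduce every assertion to the splitting $\varpiLie = i \circ \omegaLie + \bi \circ \betaLie$ coming from $\frakg = \frakh \oplus \frakp$, combined with three ingredients already at hand: the normalization $\varpiLie \circ \iota_\calP = i$ of Proposition~\ref{prop commutativity varpiLie}, the reductive identities \eqref{eq-pifrakh-i-pifrakp-prifrakp-bi}, and the canonical Ehresmann splitting recalled in Section~\ref{sec-Ehresmann-connect}.

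First I would establish the two normalizations. Composing $\varpiLie \circ \iota_\calP = i$ with $\pi_\frakh$ and using $\pi_\frakh \circ i = \Id$ from \eqref{eq-pifrakh-i-pifrakp-prifrakp-bi} gives $\omegaLie \circ \iota_\calP = \pi_\frakh \circ \varpiLie \circ \iota_\calP = \pi_\frakh \circ i = \Id$. Composing instead with $\pi_\frakp = r$ and noting that $\pi_\frakp \circ i = 0$, since $i$ lands in the $\frakh$-summand which $\pi_\frakp$ annihilates, yields $\betaLie \circ \iota_\calP = \pi_\frakp \circ i = 0$.

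Next, the decomposition $\frakX = \nabla^\omega_X + \iota_\calP \circ \omegaLie(\frakX)$ is precisely the canonical splitting of Section~\ref{sec-Ehresmann-connect} for the Ehresmann connection $\omega = \pi_\frakh \circ \varpi$. Indeed, $\omegaLie(\frakX)(p) = \pi_\frakh\big(-\varpi_{| p}(\frakX_{| p})\big) = -\omega_{| p}(\frakX_{| p})$ matches the defining formula \eqref{eq-def-omegaLie} for $\omega$, so $\omegaLie$ is exactly the Lie-algebroid $1$-form associated to $\omega$; the identity $\omegaLie \circ \iota_\calP = \Id$ just proved is the normalization guaranteeing that the splitting holds with horizontal lift $\nabla^\omega$.

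Finally, to identify $\betaLie$ with $\tvarpiLie$, I would apply $r$ to the splitting of $\varpiLie$. Using $r \circ i = 0$ (exactness of \eqref{eq-sec-i-r}) and $r \circ \bi = \pi_\frakp \circ \bi = \Id$, one finds $r \circ \varpiLie = \betaLie$, whence $\tvarpiLie(X) = r \circ \varpiLie(\frakX) = \betaLie(\frakX)$ by the definition \eqref{eq-def-tvarpiLie} for any lift $\frakX$ of $X$. Substituting $\frakX = \nabla^\omega_X + \iota_\calP \circ \omegaLie(\frakX)$ and using $\betaLie \circ \iota_\calP = 0$ collapses this to $\betaLie(\frakX) = \betaLie \circ \nabla^\omega_X$, which simultaneously confirms that the expression is independent of the chosen lift. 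No step is a genuine obstacle; the only point needing care is the consistency of $\pi_\frakh \circ i = \Id$ and $\pi_\frakp \circ i = 0$ at the level of $H$-equivariant maps, which holds because $\frakp$ being an $\frakh$-module makes $\frakg = \frakh \oplus \frakp$ invariant under $\Ad_h$ for $h \in H$, so the pointwise projections descend to the section spaces.
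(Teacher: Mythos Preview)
Your proof is correct and follows essentially the same approach as the paper: identical computations for $\omegaLie \circ \iota_\calP = \Id$, $\betaLie \circ \iota_\calP = 0$, and the identification $\tvarpiLie = r \circ \varpiLie = \betaLie$. For the splitting $\frakX = \nabla^\omega_X + \iota_\calP \circ \omegaLie(\frakX)$ the paper phrases the argument via the auxiliary form $\alpha^\omega$ of \cite[Prop.~3.9]{LazzMass12a} and checks $\omegaLie = -\alpha^\omega$ directly, whereas you invoke the general Ehresmann splitting of Section~\ref{sec-Ehresmann-connect}; these are the same observation in slightly different packaging.
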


\begin{proof}
One has $\omegaLie \circ \iota_\calP = \pi_\frakh \circ \varpiLie \circ \iota_\calP = \pi_\frakh \circ i = \Id$ and $\betaLie \circ \iota_\calP = \pi_\frakp \circ \varpiLie \circ \iota_\calP = \pi_\frakp \circ i = 0$.

Let $\alpha^\omega \in \Omega^1_\lie(\calP, \frakh)$ such that $\frakX = \nabla^\omega_X - \iota_\calP \circ \alpha^\omega(\frakX)$ \cite[Prop.~3.9]{LazzMass12a}. Then, using $\omega \circ \nabla^\omega = 0$, we have $\omegaLie(\frakX) = - \pi_\frakh \circ \varpi(\nabla^\omega_X) + \pi_\frakh \circ \varpi \circ \iota_\calP \circ \alpha^\omega(\frakX) = 0 - \omegaLie \circ \iota_\calP \circ \alpha^\omega(\frakX) = - \alpha^\omega(\frakX)$.

One has $\tvarpiLie(X) = r \circ \varpiLie(\frakX) = r \circ i \circ \omegaLie(\frakX) + r \circ \bi \circ \betaLie(\frakX) = \betaLie(\frakX)$, and since $\betaLie \circ \iota_\calP = 0$ one gets $\betaLie(\frakX) = \betaLie \circ \nabla^\omega_X$.
\end{proof}

So, we get the following maps of $C^\infty(\calM)$-modules:
\begin{equation}
\label{eq-diagram-cartan-reductive}
\begin{tikzcd}[column sep=38pt, row sep=30pt]
\Gamma_H(\calP, \frakh) 
	\arrow[r, "\iota_\calP"] 
	\arrow[d, "i", xshift=4pt] 
& |[alias=GHP]| \Gamma_H(\calP) 
	\arrow[r, "\rho_\calP"] 
	\arrow[d, "J" ] 
	\arrow[dl, "\varpiLie",  "\simeq"']
	\arrow[l, bend right=25, "\omegaLie"']
& \Gamma(T\calM) 
	\arrow[d, equal] 
	\arrow[l, bend right=25, "\nabla^\omega"']
\\
\Gamma_H(\calP, \frakg) 
	\arrow[r, "\iota_\calQ"] 
	\arrow[d,  "r = \pi_\frakp", xshift=4pt]
	\arrow[u, "\pi_\frakh",  xshift=-4pt]
& \Gamma_G(\calQ) 
	\arrow[r, "\rho_\calQ"] 
	\arrow[d,  "R"]
& \Gamma(T\calM) 
	\arrow[dl, "\tvarpiLie",  "\simeq"'] 
\\
 |[alias=GHPp]| \Gamma_H(\calP, \frakp) 
	\arrow[u, "\bi",  xshift=-4pt]
 	\arrow[r, equal]
& \Gamma_H(\calP, \frakp)
& 
\arrow[from=GHP, to=GHPp, bend left=13, crossing over, crossing over clearance=0.7ex, pos=0.7, "\betaLie"]
\end{tikzcd}  
\end{equation}
Notice that this diagram is \emph{not} commutative everywhere.

\smallskip
Denote by $\bA = A \oplus B = s^* \varpi$ the local trivialization of $\varpi$, where $A$ takes its values in $\frakh$ and $B$ in $\frakp$. Then $A = s^* \omega$ and $B = s^* \beta$ are the local trivializations of the components of $\varpi$  (strictly speaking, $s^* \omega = \pi_\frakh \circ A$ and $s^* \beta = \pi_\frakp \circ B$ but we omit the projections). Then, using \eqref{eq-varpilie-loc}, one gets
\begin{align}
\label{eq-omegaLie-betaLie-varpiLie-loc}
\varpiLie &\locto -(A + B) + i_\frakh \circ \theta_\frakh.
&
\omegaLie &\locto - A + \theta_\frakh,
&
\betaLie &\locto -B,
\end{align}
and the trivialization of the map $\varpiLie$ is $X \oplus \gamma \mapsto \gamma - A(X) - B(X)$, of the map $\omegaLie$ is $X \oplus \gamma \mapsto \gamma - A(X)$, and of the map $\betaLie$ is $X \oplus \gamma \mapsto -B(X)$. In all these expressions, the inclusions of $\frakh$ and $\frakp$ in $\frakg$ are omitted.

\subsection{Cartan connections and metrics}

Let us recall the definition and the main results concerning the notion of metric on $\Gamma_H(\calP)$ as given in \cite[Sect.~2]{FourLazzMass13a}. A metric on $\Gamma_H(\calP)$ is a $C^\infty(\calM)$-linear symmetric map 
\begin{equation*}
\hg : \Gamma_H(\calP) \otimes_{C^\infty(\calM)} \Gamma_H(\calP) \to C^\infty(\calM).
\end{equation*}
Such a metric defines a inner metric $h \defeq \iota_\calP^* \hg : \Gamma_H(\calP, \frakh) \otimes_{C^\infty(\calM)} \Gamma_H(\calP, \frakh) \to C^\infty(\calM)$, with $h(v,w) = \hg(\iota_\calP(v), \iota_\calP(w))$. The metric is said to be inner non degenerate if $h$ is non degenerated as a metric on the vector bundle $\calL^\calP$. Given a metric $g$ on $\calM$, $\rho_\calP^* g$ defines a metric on $\Gamma_H(\calP)$: $(\rho_\calP^* g)(\frakX, \frakY) = g(\rho_\calP(\frakX), \rho_\calP(\frakY))$.

We will use the following characterization of inner non degenerate metric on $\Gamma_H(\calP)$:
\begin{proposition}[{\cite[Prop.~2.7]{FourLazzMass13a}}]
An inner non degenerate metric $\hg$ on $\Gamma_H(\calP)$ is equivalent to a triple $(g, h, \nabla^\calP)$ where $g$ is a metric on $\calM$, $h$ is an inner non degenerate metric on $\Gamma_H(\calP, \frakh)$ and $\nabla^\calP : \Gamma(T\calM) \to \Gamma_H(\calP)$ is a connection, with
\begin{equation*}
\hg(\frakX, \frakY) = g(X, Y) + h(v, w) 
\quad \text{for} \quad 
\frakX = \nabla^\calP_X + \iota_\calP(v), \frakY = \nabla^\calP_Y + \iota_\calP(w).
\end{equation*}
The connection $\nabla^\calP$ is uniquely determined by the condition
\begin{equation}
\label{eq-def-nabla-metric}
\hg(\nabla^\calP_X, \iota_\calP(v)) = 0 \text{ for any $X \in \Gamma(T\calM)$ and $v \in \Gamma_H(\calP, \frakh)$}.
\end{equation}
\end{proposition}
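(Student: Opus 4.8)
The plan is to establish the asserted equivalence by exhibiting the two assignments $\hg \mapsto (g,h,\nabla^\calP)$ and $(g,h,\nabla^\calP)\mapsto \hg$ and checking that they are mutually inverse. The direction $(g,h,\nabla^\calP) \mapsto \hg$ is the easy one. Given a connection $\nabla^\calP$, the splitting $\frakX = \nabla^\calP_X + \iota_\calP(v)$ of \eqref{eq-sec-Atiyah-P} (with $X = \rho_\calP(\frakX)$ and $v$ unique) lets me define $\hg$ by the displayed formula. Since $X$ and $v$ depend $C^\infty(\calM)$-linearly on $\frakX$, and both $g$ and $h$ are $C^\infty(\calM)$-bilinear and symmetric, $\hg$ inherits these properties. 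Restricting along $\iota_\calP$ (which amounts to $X=0$) returns $\iota_\calP^* \hg = h$, so $\hg$ is inner non degenerate; and putting $v=0$ in the first slot gives $\hg(\nabla^\calP_X, \iota_\calP(w)) = h(0,w) = 0$, so the normalization \eqref{eq-def-nabla-metric} holds.

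For the converse I start from an inner non degenerate $\hg$ and set $h \defeq \iota_\calP^* \hg$, non degenerate by hypothesis. The crux is to recover $\nabla^\calP$. Fixing a background connection $\mrnabla^\calP$, every $X \in \Gamma(T\calM)$ has the lift $\mrnabla^\calP_X$, and I look for the unique correction $u \in \Gamma_H(\calP, \frakh)$ making $\mrnabla^\calP_X + \iota_\calP(u)$ orthogonal to the kernel, i.e.
\begin{equation*}
\hg\bigl(\mrnabla^\calP_X + \iota_\calP(u), \iota_\calP(v)\bigr) = \hg(\mrnabla^\calP_X, \iota_\calP(v)) + h(u,v) = 0 \quad \text{for all } v.
\end{equation*}
The assignment $v \mapsto \hg(\mrnabla^\calP_X, \iota_\calP(v))$ is a $C^\infty(\calM)$-linear functional on $\Gamma_H(\calP, \frakh) \simeq \Gamma(\calL^\calP)$, and here non degeneracy is used decisively: since $h$ is fibrewise non degenerate on $\calL^\calP$, it induces a bundle isomorphism $\calL^\calP \xrightarrow{\simeq} (\calL^\calP)^*$ and hence an isomorphism between $\Gamma(\calL^\calP)$ and its $C^\infty(\calM)$-dual. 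This produces a unique $u$ with $h(u, \cdot) = -\hg(\mrnabla^\calP_X, \iota_\calP(\cdot))$, and I set $\nabla^\calP_X \defeq \mrnabla^\calP_X + \iota_\calP(u)$. Uniqueness of $u$ shows this is independent of the background lift and $C^\infty(\calM)$-linear in $X$; since $\rho_\calP \circ \iota_\calP = 0$ one has $\rho_\calP(\nabla^\calP_X) = X$, so $\nabla^\calP$ is a genuine splitting of \eqref{eq-sec-Atiyah-P}, i.e. a connection, and it satisfies \eqref{eq-def-nabla-metric} by construction.

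It then remains to set $g(X,Y) \defeq \hg(\nabla^\calP_X, \nabla^\calP_Y)$, manifestly a symmetric $C^\infty(\calM)$-bilinear form on $\Gamma(T\calM)$, and to verify the decomposition: expanding $\hg(\frakX,\frakY)$ for $\frakX = \nabla^\calP_X + \iota_\calP(v)$ and $\frakY = \nabla^\calP_Y + \iota_\calP(w)$, the two cross terms vanish by \eqref{eq-def-nabla-metric} while the diagonal terms give exactly $g(X,Y) + h(v,w)$. Finally one checks that the two constructions invert each other: the formula returns the same $h$ by restriction along $\iota_\calP$, the same $\nabla^\calP$ because the orthogonality condition \eqref{eq-def-nabla-metric} characterizing it is built into the formula, and the same $g$ by evaluating on horizontal lifts. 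The main obstacle is the single step where fibrewise non degeneracy of $h$ is promoted to the module-level duality $\Gamma(\calL^\calP) \simeq \Gamma(\calL^\calP)^*$; once one is licensed to represent $C^\infty(\calM)$-linear functionals on the kernel by elements of the kernel via $h$, the existence and uniqueness of $\nabla^\calP$, and with it the whole equivalence, follow formally.
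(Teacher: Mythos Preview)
The paper does not actually prove this proposition; it is quoted verbatim from \cite[Prop.~2.7]{FourLazzMass13a} and stated without proof. Your argument is correct and is the standard one.

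The one methodological difference worth noting, visible from the hint the paper gives later (in the proof of Proposition~\ref{prop cartan connection metric projections}, which invokes ``the proof of \cite[Prop.~2.6]{FourLazzMass13a}''), is that the cited reference bypasses the choice of a background connection: it defines the connection $1$-form $\omegaLie^\calP$ in one stroke by requiring $h(\omegaLie^\calP(\frakX),v)=\hg(\frakX,\iota_\calP(v))$ for all $v\in\Gamma_H(\calP,\frakh)$, which is well-posed by the non-degeneracy of $h$, and then sets $\nabla^\calP_X \defeq \frakX - \iota_\calP\circ\omegaLie^\calP(\frakX)$ for any lift $\frakX$ of $X$. Your background-connection-plus-correction argument is equivalent (your $u$ is precisely $-\omegaLie^\calP(\mrnabla^\calP_X)$), but the direct definition of $\omegaLie^\calP$ makes the independence from auxiliary choices automatic rather than something to verify. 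Either way, the substantive step is the one you identified: lifting fibrewise non-degeneracy of $h$ to the module-level duality $\Gamma(\calL^\calP)\simeq\Gamma(\calL^\calP)^*$.
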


Let $\hh : \Gamma_H(\calP, \frakg) \otimes_{C^\infty(\calM)} \Gamma_H(\calP, \frakg) \to C^\infty(\calM)$ be a fixed non degenerate metric. Then one has the orthogonal decomposition $\Gamma_H(\calP, \frakg) = i(\Gamma_H(\calP, \frakh)) \oplus [i(\Gamma_H(\calP, \frakh))]^\perp$.

\begin{proposition}
\label{prop cartan connection metric projections}
A Cartan connection $\varpi$ defines a unique inner non degenerate metric $\hg \defeq \varpiLie^* \hh$ on $\Gamma_H(\calP)$. Let  $(g, h, \nabla^\calP)$ be its associated triple with $\omegaLie^\calP \in \Omega^1_\lie(\calP, \frakh)$ the $1$-form associated to $\nabla^\calP$ as in \eqref{eq-diagram-ehresmann}. Then $h$ is the restriction of $\hh$ to $\Gamma_H(\calP, \frakh) \xhookrightarrow{i} \Gamma_H(\calP, \frakg)$, so that $h = i^* \hh$, and the $1$-form $\omegaLie^\calP$ is uniquely defined by
\begin{equation*}
h( \omegaLie^\calP(\frakX), v) = \hh( \varpiLie(\frakX), i(v)) 
\quad \text{for any} \quad
\frakX \in \Gamma_H(\calP), v \in \Gamma_H(\calP, \frakh).
\end{equation*}

The map $p_\frakh \defeq i \circ \omegaLie^\calP \circ \varpiLie^{-1} : \Gamma_H(\calP, \frakg) \to \Gamma_H(\calP, \frakg)$ is independent of the choice of the Cartan connection $\varpi$ and is the orthogonal projection onto $i(\Gamma_H(\calP, \frakh))$. So the map $1 - p_\frakh$ is the orthogonal projection onto $[i(\Gamma_H(\calP, \frakh))]^\perp$ and
\begin{equation}
\label{eq-im-varpiLie-i-omegaLie-perp-i}
\im ( \varpiLie - i \circ \omegaLie^\calP ) = [i(\Gamma_H(\calP, \frakh))]^\perp.
\end{equation}

In the orthogonal decomposition $\varpiLie \rdefeq \varpiLie^\frakh \oplus \varpiLie^\perp$ along $i(\Gamma_H(\calP, \frakh)) \oplus [i(\Gamma_H(\calP, \frakh))]^\perp$, one has $\varpiLie^\frakh = i \circ \omegaLie^\calP$.
\end{proposition}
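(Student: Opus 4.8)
The plan is to exploit that $\varpiLie$ is a $C^\infty(\calM)$-module isomorphism which, by the commutativity of \eqref{eq-diagram-triangle-upleft}, intertwines $\iota_\calP$ with $i$; this lets me transport the fixed metric $\hh$ through $\varpiLie$ and then read off all the projection statements as formal linear algebra. First I would dispatch the claims about $\hg$ and $h$. Since $\varpiLie$ is $C^\infty(\calM)$-linear and bijective and $\hh$ is symmetric and non-degenerate, $\hg \defeq \varpiLie^* \hh$ is a symmetric $C^\infty(\calM)$-bilinear form that is non-degenerate (as $\varpiLie$ is an isomorphism), giving the first claim. For the inner metric, using $\varpiLie \circ \iota_\calP = i$,
\[
h(v,w) = \hg(\iota_\calP(v), \iota_\calP(w)) = \hh(\varpiLie \circ \iota_\calP(v), \varpiLie \circ \iota_\calP(w)) = \hh(i(v), i(w)) = (i^* \hh)(v,w),
\]
so $h = i^* \hh$; its non-degeneracy is precisely the hypothesis that produces the orthogonal decomposition $\Gamma_H(\calP, \frakg) = i(\Gamma_H(\calP, \frakh)) \oplus [i(\Gamma_H(\calP, \frakh))]^\perp$, whence $\hg$ is inner non-degenerate.

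Next, for the characterization of $\omegaLie^\calP$, I would use the splitting $\frakX = \nabla^\calP_X + \iota_\calP \circ \omegaLie^\calP(\frakX)$ with $X = \rho_\calP(\frakX)$, together with the defining orthogonality \eqref{eq-def-nabla-metric}, $\hg(\nabla^\calP_X, \iota_\calP(v)) = 0$. For any $\frakX \in \Gamma_H(\calP)$ and $v \in \Gamma_H(\calP, \frakh)$,
\begin{align*}
\hh(\varpiLie(\frakX), i(v))
&= \hg(\frakX, \iota_\calP(v)) \\
&= \hg(\nabla^\calP_X, \iota_\calP(v)) + \hg(\iota_\calP \circ \omegaLie^\calP(\frakX), \iota_\calP(v)) = h(\omegaLie^\calP(\frakX), v),
\end{align*}
where the first equality again uses $\varpiLie \circ \iota_\calP = i$. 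Since $h$ is non-degenerate, this relation determines $\omegaLie^\calP(\frakX)$ uniquely.

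Finally, for the projection statements I would set $p_\frakh \defeq i \circ \omegaLie^\calP \circ \varpiLie^{-1}$ and verify the defining properties of the orthogonal projection onto $i(\Gamma_H(\calP, \frakh))$. Its image lies in $i(\Gamma_H(\calP, \frakh))$ by construction. Writing an element of that submodule as $i(u)$, I have $\varpiLie^{-1}(i(u)) = \iota_\calP(u)$ and $\omegaLie^\calP \circ \iota_\calP = \Id$, so $p_\frakh(i(u)) = i(u)$; thus $p_\frakh$ restricts to the identity there. For the kernel, writing a general element as $\varpiLie(\frakX)$, the relation of the previous paragraph shows $\omegaLie^\calP(\frakX) = 0$ iff $\hh(\varpiLie(\frakX), i(v)) = 0$ for all $v$, i.e. iff $\varpiLie(\frakX) \in [i(\Gamma_H(\calP, \frakh))]^\perp$; hence $\ker p_\frakh = [i(\Gamma_H(\calP, \frakh))]^\perp$. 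Together with the direct sum decomposition this identifies $p_\frakh$ with the orthogonal projection onto $i(\Gamma_H(\calP, \frakh))$, which depends only on $\hh$ and the submodule $i(\Gamma_H(\calP, \frakh))$ and therefore not on $\varpi$. Consequently $1 - p_\frakh$ is the complementary orthogonal projection; since $p_\frakh \circ \varpiLie = i \circ \omegaLie^\calP$, I obtain $\varpiLie - i \circ \omegaLie^\calP = (1 - p_\frakh) \circ \varpiLie$, whose image equals $\im(1 - p_\frakh) = [i(\Gamma_H(\calP, \frakh))]^\perp$ because $\varpiLie$ is onto, and the same identity gives $\varpiLie^\frakh = p_\frakh \circ \varpiLie = i \circ \omegaLie^\calP$.

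The main obstacle is structural rather than computational: everything hinges on the non-degeneracy of $h = i^* \hh$ as a metric on the submodule, which is what makes $p_\frakh$ a genuine orthogonal projection at the level of $C^\infty(\calM)$-modules (equivalently, fiberwise on the associated vector bundles) and not merely pointwise. I would therefore take care to invoke the stated orthogonal decomposition as the substantive input guaranteeing that $i(\Gamma_H(\calP, \frakh))$ is a non-degenerate submodule; once this is granted, the identification of $p_\frakh$ with the orthogonal projection and all remaining assertions follow from the formal computations above.
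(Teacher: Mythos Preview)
Your proof is correct and, for the first two parts (the definition of $\hg$, the identification $h=i^*\hh$, and the characterization of $\omegaLie^\calP$), essentially identical to the paper's argument.

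The genuine difference lies in how you establish that $p_\frakh$ is independent of $\varpi$. The paper proves this by direct computation: it takes a second Cartan connection $\varpi'$, writes $\varpiLie'=\varpiLie+\alpha\circ\rho_\calP$ and $\varpi_\lie'^{-1}=\varpiLie^{-1}+\beta\circ r$, extracts the $\frakh$-component $\alpha^\frakh$, and uses the constraint coming from $\varpiLie'\circ\varpi_\lie'^{-1}=\Id$ to verify algebraically that $\omegaLie'^\calP\circ\varpi_\lie'^{-1}=\omegaLie^\calP\circ\varpiLie^{-1}$; only afterwards does it check that $p_\frakh$ is idempotent and identify its image and the image of $1-p_\frakh$. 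You instead characterize $p_\frakh$ intrinsically: you show that its image lies in $i(\Gamma_H(\calP,\frakh))$, that it restricts to the identity there, and that its kernel is exactly $[i(\Gamma_H(\calP,\frakh))]^\perp$; together with the given orthogonal decomposition this forces $p_\frakh$ to coincide with the (unique) $\hh$-orthogonal projection onto $i(\Gamma_H(\calP,\frakh))$, which manifestly depends only on $\hh$ and the submodule. Your route is shorter and more conceptual; the paper's route, while heavier, yields as a byproduct the explicit transformation laws $\omegaLie'^\calP=\omegaLie^\calP+\alpha^\frakh\circ\rho_\calP$ and $\nabla'^\calP=\nabla^\calP-\iota_\calP\circ\alpha^\frakh$ under a change of Cartan connection, which may be of independent use.
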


Notice that, since $\im (1 - p_\frakh) =  [i(\Gamma_H(\calP, \frakh))]^\perp$, one has
\begin{equation*}
\hh(i(v), p_\frakh(\hw)) = \hh(i(v), \hw) \quad \text{for any } v \in \Gamma_H(\calP, \frakh), \hw \in \Gamma_H(\calP, \frakg).
\end{equation*}

\begin{proof}
By definition,  for any $\frakX, \frakY \in \Gamma_H(\calP)$, it is straightforward to verify that $\hg(\frakX, \frakY) \defeq \hh(\varpiLie(\frakX), \varpiLie(\frakY))$ defines a metric on $\Gamma_H(\calP)$. Then, for any $v, w \in \Gamma_H(\calP, \frakh)$, one has $\hg(\iota_\calP(v), \iota_\calP(w)) = \hh(\varpiLie \circ \iota_\calP(v), \varpiLie \circ \iota_\calP(w)) = \hh( i(v), i(w))$, which implies $h = \iota_\calP^* \hg = i^* \hh$, so that $h$ is the restriction of $\hh$ to $\Gamma_H(\calP, \frakh)$. This also implies that $\hg$ is inner non degenerate and so defines a triple $(g, h, \nabla^\calP)$. 

Following the proof of \cite[Prop.~2.6]{FourLazzMass13a} with the new convention (sign of $\omegaLie^\calP$), $\omegaLie^\calP$ is uniquely defined by the relation $h( \omegaLie^\calP(\frakX), v) = \hg(\frakX, \iota_\calP(v)) = \hh( \varpiLie(\frakX), i(v))$ for any $\frakX \in \Gamma_H(\calP)$ and $v \in \Gamma_H(\calP, \frakh)$, which can be written as $\hh( \varpiLie(\frakX) - i \circ \omegaLie^\calP(\frakX), i(v)) = 0$, so that $\im ( \varpiLie - i \circ \omegaLie^\calP ) \subset [i(\Gamma_H(\calP, \frakh))]^\perp$.

Let $\varpi'$ be a second Cartan connection which defines the maps $\varpiLie'$ and $\omegaLie'^\calP$. Then there exit two maps $\alpha : \Gamma(T\calM) \to \Gamma_H(\calP, \frakg)$ and $\beta : \Gamma_H(\calP, \frakg/\frakh) \to \Gamma_H(\calP)$ such that
\begin{align*}
\varpiLie' &= \varpiLie + \alpha \circ \rho_\calP,
&
\varpi_\lie'^{-1} &= \varpiLie^{-1} + \beta \circ r.
\end{align*}
Defining $\alpha^\frakh$ by the  orthogonal decomposition $\alpha = i \circ \alpha^\frakh \oplus \alpha^\perp$, one has
\begin{align*}
\omegaLie'^\calP &= \omegaLie^\calP + \alpha^\frakh \circ \rho_\calP,
&
\nabla'^\calP &= \nabla^\calP - \iota_\calP \circ \alpha^\frakh,
\end{align*}
from which we deduce that $\varpiLie' \circ \nabla'^\calP = \varpiLie \circ \nabla^\calP + \alpha^\perp$. Since $\varpiLie' \circ \varpi_\lie'^{-1} = \Id$, one gets the constrain
\begin{equation*}
\varpiLie \circ \beta \circ r  + \alpha \circ \rho_\calP \circ \varpiLie^{-1} + \alpha \circ \rho_\calP \circ \beta \circ r =0.
\end{equation*}
Once projected on $\frakh$, this constrain reduces to
\begin{equation*}
\omegaLie^\calP \circ \beta \circ r + \alpha^\frakh \circ \rho_\calP \circ \varpiLie^{-1} + \alpha^\frakh \circ \rho_\calP \circ \beta \circ r =0,
\end{equation*}
and implies that
\begin{align*}
\omegaLie'^\calP \circ \varpi_\lie'^{-1} 
&= ( \omegaLie^\calP + \alpha^\frakh \circ \rho_\calP) \circ (\varpi_\lie^{-1} + \beta \circ r)
= \omegaLie^\calP \circ \varpiLie^{-1}
\end{align*}
So the map  $p_\frakh = i \circ \omegaLie^\calP \circ \varpiLie^{-1}$ is independent of the choice of $\varpi$. Let us show that it is a projection:
\begin{align*}
p_\frakh \circ p_\frakh
&= i \circ \omegaLie^\calP \circ \varpiLie^{-1} \circ i \circ \omegaLie^\calP \circ \varpiLie^{-1}
= i \circ \omegaLie^\calP \circ \iota_\calP \circ \omegaLie^\calP \circ \varpiLie^{-1}
= i \circ  \omegaLie^\calP \circ \varpiLie^{-1}
= p_\frakh.
\end{align*}
By construction, $\im p_\frakh \subset i(\Gamma_H(\calP, \frakh))$ and one has $p_\frakh \circ i = i \circ \omegaLie^\calP \circ \varpiLie^{-1} \circ i = i \circ \omegaLie^\calP \circ \iota_\calP = i$, so that, for any $v \in \Gamma_H(\calP, \frakh)$, $i(v) = p_\frakh \circ i(v) \in \im p_\frakh$, from which we conclude that $\im p_\frakh = i(\Gamma_H(\calP, \frakh))$. From the previous result we know that $\im (1-p_\frakh) = \im ( \varpiLie - i \circ \omegaLie^\calP ) \circ \varpiLie^{-1} \subset [i(\Gamma_H(\calP, \frakh))]^\perp$. From the decompositions $\Gamma_H(\calP, \frakg) = i(\Gamma_H(\calP, \frakh)) \oplus [i(\Gamma_H(\calP, \frakh))]^\perp$ and $\Gamma_H(\calP, \frakg) = \im p_\frakh \oplus \im (1-p_\frakh)$, and since $\im p_\frakh = i(\Gamma_H(\calP, \frakh))$, we get the equality $\im (1-p_\frakh) = [i(\Gamma_H(\calP, \frakh))]^\perp$, and so $\im ( \varpiLie - i \circ \omegaLie^\calP ) = [i(\Gamma_H(\calP, \frakh))]^\perp$.

One has $p_\frakh \circ \varpiLie = i \circ \omegaLie^\calP \circ \varpiLie^{-1} \circ \varpiLie = i \circ \omegaLie^\calP$, which proves that $i \circ \omegaLie^\calP$ is the orthogonal projection of $\varpiLie$ on $i(\Gamma_H(\calP, \frakh))$.
\end{proof}

\begin{proposition} \label{prop-indep}
The map
\begin{equation*}
\Theta^{\hh} \defeq \omegaLie^\calP \circ \varpiLie^{-1} : \Gamma_H(\calP, \frakg) \to \Gamma_H(\calP, \frakh)
\end{equation*}
is independent of the choice of the Cartan connection $\varpi$ and defines an orthogonal splitting of the s.e.s. \eqref{eq-sec-i-r}, with $\Theta^{\hh} \circ i = \Id$ and $\Theta^{\hh} \circ \varpiLie^\perp = 0$.

To $\Theta^{\hh}$ we associate the map $\bnabla^{\hh} : \Gamma_H(\calP, \frakg/\frakh) \to \Gamma_H(\calP, \frakg)$ defined, for any $\bv \in \Gamma_H(\calP, \frakg/\frakh)$, by 
\begin{equation}
\label{eq-def-bnabla}
\bnabla^{\hh} (\bv) \defeq \hv - i \circ \Theta^{\hh}(\hv)
\quad\text{for any $\hv \in \Gamma_H(\calP, \frakg)$ such that $r(\hv) = \bv$}.
\end{equation}
This map defines a metric $\bh : \Gamma_H(\calP, \frakg/\frakh) \otimes_{C^\infty(\calM)} \Gamma_H(\calP, \frakg/\frakh) \to C^\infty(\calM)$ by
\begin{equation*}
\bh(\bv, \bw) \defeq \hh(\bnabla^{\hh}(\bv), \bnabla^{\hh}(\bw)) 
\quad \text{for any} \quad
\bv, \bw \in \Gamma_H(\calP, \frakg/\frakh).
\end{equation*}
Then one has $\bnabla^{\hh} \circ \tvarpiLie = \varpiLie \circ \nabla^\calP$, $g = \tvarpiLie^* \bh$, and $\im \bnabla^{\hh} = [i(\Gamma_H(\calP, \frakh))]^\perp$.
\end{proposition}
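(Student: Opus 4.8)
The plan is to reduce everything to the orthogonal projection $p_\frakh = i \circ \omegaLie^\calP \circ \varpiLie^{-1}$ studied in Proposition~\ref{prop cartan connection metric projections}, using the tautological identity $i \circ \Theta^{\hh} = p_\frakh$ that follows from the definition $\Theta^{\hh} = \omegaLie^\calP \circ \varpiLie^{-1}$. Since $i$ is injective, the independence of $p_\frakh$ from the choice of Cartan connection $\varpi$ (already established there) transfers immediately to $\Theta^{\hh}$. For the splitting properties I would compute $\Theta^{\hh} \circ i = \omegaLie^\calP \circ \varpiLie^{-1} \circ i = \omegaLie^\calP \circ \iota_\calP = \Id$, using $\varpiLie^{-1} \circ i = \iota_\calP$ (the commutativity of \eqref{eq-diagram-triangle-upleft}) and the normalization $\omegaLie^\calP \circ \iota_\calP = \Id$. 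For $\Theta^{\hh} \circ \varpiLie^\perp = 0$, note that $i \circ \Theta^{\hh} = p_\frakh$ is the orthogonal projection onto $i(\Gamma_H(\calP, \frakh))$, whose kernel is $[i(\Gamma_H(\calP, \frakh))]^\perp$; since $\varpiLie^\perp$ takes values in that complement by construction, $i \circ \Theta^{\hh} \circ \varpiLie^\perp = p_\frakh \circ \varpiLie^\perp = 0$, and injectivity of $i$ gives the claim. This says precisely that $\Theta^{\hh}$ is a retraction of \eqref{eq-sec-i-r} with kernel $[i(\Gamma_H(\calP, \frakh))]^\perp$, i.e. an orthogonal splitting.

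Next I would treat $\bnabla^{\hh}$. The key is to rewrite the defining formula \eqref{eq-def-bnabla} as $\bnabla^{\hh}(\bv) = (1 - p_\frakh)(\hv)$ for any lift $\hv$ with $r(\hv) = \bv$, again using $i \circ \Theta^{\hh} = p_\frakh$. Well-definedness then follows because $1 - p_\frakh$ annihilates $\ker r = i(\Gamma_H(\calP, \frakh)) = \im p_\frakh$, so two lifts, which differ by an element of $i(\Gamma_H(\calP, \frakh))$, give the same value. The image statement is then immediate: every value lies in $\im(1 - p_\frakh) = [i(\Gamma_H(\calP, \frakh))]^\perp$ by Proposition~\ref{prop cartan connection metric projections}, and conversely any $\hw$ in that complement is recovered as $\bnabla^{\hh}(r(\hw)) = (1 - p_\frakh)(\hw) = \hw$; hence $\im \bnabla^{\hh} = [i(\Gamma_H(\calP, \frakh))]^\perp$. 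That $\bh$ is a well-defined symmetric $C^\infty(\calM)$-bilinear form is then a formal consequence of the $C^\infty(\calM)$-linearity of $\bnabla^{\hh}$ together with the bilinearity and symmetry of $\hh$.

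For the compatibility $\bnabla^{\hh} \circ \tvarpiLie = \varpiLie \circ \nabla^\calP$, I would exploit the freedom in \eqref{eq-def-tvarpiLie} by choosing, for $\bv = \tvarpiLie(X)$, the specific lift $\frakX = \nabla^\calP_X$, so that $\hv = \varpiLie(\nabla^\calP_X)$ is an admissible representative. Then $\bnabla^{\hh}(\tvarpiLie(X)) = \hv - i \circ \Theta^{\hh}(\hv) = \varpiLie \circ \nabla^\calP_X - i \circ \omegaLie^\calP \circ \nabla^\calP_X$, and the second term vanishes since $\omegaLie^\calP \circ \nabla^\calP = 0$. Finally $g = \tvarpiLie^* \bh$ follows by unwinding definitions: $\bh(\tvarpiLie(X), \tvarpiLie(Y)) = \hh(\bnabla^{\hh}\tvarpiLie(X), \bnabla^{\hh}\tvarpiLie(Y)) = \hh(\varpiLie\nabla^\calP_X, \varpiLie\nabla^\calP_Y) = \hg(\nabla^\calP_X, \nabla^\calP_Y) = g(X, Y)$, where the last equality uses the triple decomposition $\hg(\nabla^\calP_X + \iota_\calP(v), \nabla^\calP_Y + \iota_\calP(w)) = g(X,Y) + h(v,w)$ specialized to $v = w = 0$.

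I do not expect a genuine obstacle: once the identity $i \circ \Theta^{\hh} = p_\frakh$ is recorded, every assertion reduces to elementary manipulations with the orthogonal projection $p_\frakh$ and the normalizations $\omegaLie^\calP \circ \iota_\calP = \Id$ and $\omegaLie^\calP \circ \nabla^\calP = 0$. The only point that needs a little care is the well-definedness of $\bnabla^{\hh}$, namely independence of the chosen lift $\hv$; the cleanest route is the rewriting $\bnabla^{\hh}(\bv) = (1 - p_\frakh)(\hv)$ together with the fact that $1 - p_\frakh$ kills $\ker r$.
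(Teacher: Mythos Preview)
Your proposal is correct and follows essentially the same approach as the paper. Both reduce everything to Proposition~\ref{prop cartan connection metric projections}; the only cosmetic differences are that the paper verifies $\Theta^{\hh} \circ \varpiLie^\perp = 0$ by the direct computation $\Theta^{\hh} \circ (\varpiLie - i \circ \omegaLie^\calP) = \omegaLie^\calP - \omegaLie^\calP \circ \iota_\calP \circ \omegaLie^\calP = 0$ and checks well-definedness of $\bnabla^{\hh}$ by an explicit two-lift argument, while you package both via the identity $i \circ \Theta^{\hh} = p_\frakh$; and for $\bnabla^{\hh} \circ \tvarpiLie = \varpiLie \circ \nabla^\calP$ the paper works with a generic lift $\frakX$ and its decomposition $\frakX = \nabla^\calP_X + \iota_\calP \circ \omegaLie^\calP(\frakX)$, whereas you pick the specific lift $\frakX = \nabla^\calP_X$ from the outset and invoke $\omegaLie^\calP \circ \nabla^\calP = 0$ directly.
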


\begin{proof}
We have shown in the proof of Prop.~\ref{prop cartan connection metric projections} that $\Theta^{\hh}$, and then $\bnabla^{\hh}$ and $\bh$, are independent of $\varpi$.

One has $\Theta^{\hh} \circ i = \omegaLie^\calP \circ \varpiLie^{-1} \circ i = \omegaLie^\calP \circ \iota_\calP = \Id$ and, from $\varpiLie^\perp = \varpiLie - i \circ \omegaLie^\calP$, $\Theta^{\hh} \circ \varpiLie^\perp = \omegaLie^\calP \circ \varpiLie^{-1} \circ (\varpiLie - i \circ \omegaLie^\calP) = \omegaLie^\calP - \omegaLie^\calP \circ \iota_\calP \circ \omegaLie^\calP = 0$.

Let us show that the map $\bnabla^{\hh}$ is well defined. If $\hv, \hv' \in \Gamma_H(\calP, \frakg)$ are such that $r(\hv) = \bv$ and $r(\hv') = \bv$, then there exists $v \in \Gamma_H(\calP, \frakh)$ such that $\hv' = \hv + i(v)$. Then $\hv' - i \circ \Theta^{\hh}(\hv') = \hv + i(v) - i \circ \Theta^{\hh}(\hv) - i \circ \omegaLie^\calP \circ \varpiLie^{-1} \circ i(v) = \hv - i \circ \Theta^{\hh}(\hv) + i(v) - i \circ \omegaLie^\calP \circ \iota_\calP(v) = \hv - i \circ \Theta^{\hh}(\hv)$.

Let $X \in \Gamma(T\calM)$ and $\frakX  \in \Gamma_H(\calP)$ such that $\rho_\calP(\frakX) = X$. Using $\frakX = \nabla^\calP_X + \iota_\calP \circ \omegaLie^\calP(\frakX)$ and \eqref{eq-def-tvarpiLie}, one has
\begin{align*}
\bnabla^{\hh} \circ \tvarpiLie(X)
&= \bnabla^{\hh} \circ r \circ \varpiLie(\frakX)
= \varpiLie(\frakX) - i \circ \Theta^{\hh} \circ \varpiLie(\frakX)
= \varpiLie(\frakX) - i \circ \omegaLie^\calP(\frakX)
\\
&
= \varpiLie(\nabla^\calP_X) + \varpiLie \circ \iota_\calP \circ \omegaLie^\calP(\frakX) - i \circ \omegaLie^\calP(\frakX)
= \varpiLie(\nabla^\calP_X).
\end{align*} 
Now, from the various definitions, for any $X, Y \in \Gamma(T\calM)$, one has
\begin{align*}
g(X,Y)
&= \hg( \nabla^\calP_X, \nabla^\calP_Y )
= \hh( \varpiLie(\nabla^\calP_X), \varpiLie(\nabla^\calP_Y))
= \hh( \bnabla^{\hh} \circ \tvarpiLie(X), \bnabla^{\hh} \circ \tvarpiLie(Y))
= \bh( \tvarpiLie(X), \tvarpiLie(Y)).
\end{align*}
Since $\bnabla^{\hh} \circ r = 1 - p_\frakh$, $r$ is surjective, and $\im (1 - p_\frakh) =  [i(\Gamma_H(\calP, \frakh))]^\perp$, we get $\im \bnabla^{\hh} = [i(\Gamma_H(\calP, \frakh))]^\perp$.
\end{proof}

Hence, from the (non degenerate) metric $\hh$ and the Cartan connection $\varpi$ we have defined (transported) a inner non degenerate metric $\hg$ on $\Gamma_H(\calP)$ with associated triple $(g, h, \nabla^\calP)$, and maps $\omegaLie^\calP$, $\Theta^{\hh}$, and $\bnabla^{\hh}$. All these constructions imply the commutativity of the maps with solid lines in the following diagram and the splitting of its lower row is orthogonal for $\hh$.
\begin{equation}
\label{eq-diagram-splitting-metric}
\begin{tikzcd}[column sep=30pt, row sep=25pt]
\Gamma_H(\calP, \frakh) 
	\arrow[r, "\iota_\calP", dashed] 
	\arrow[d, equal] 
& \Gamma_H(\calP) 
	\arrow[l, bend right=25, "\omegaLie^\calP"']
	\arrow[r, "\rho_\calP", dashed] 
	\arrow[d,  "\varpiLie",  "\simeq"'] 
& \Gamma(T\calM) 
	\arrow[l, bend right=25, "\nabla^\calP"']
	\arrow[d, "\tvarpiLie",  "\simeq"'] 
\\
\Gamma_H(\calP, \frakh) 
	\arrow[r, "i", dashed] 
& \Gamma_H(\calP, \frakg) 
	\arrow[l, bend right=25, "\Theta^{\hh}"']
 	\arrow[r, "r", dashed] 
& \Gamma_H(\calP, \frakg/\frakh) 
	\arrow[l, bend right=25, "\bnabla^{\hh}"']
\end{tikzcd}
\end{equation}
Notice that the splitting of the lower row in \eqref{eq-diagram-splitting-metric} is given by canonical maps which do not depend on $\varpi$.

\medskip
When the Cartan geometry is \emph{reductive}, it is natural to ask for a compatibility condition between the metric $\hh$ and the decomposition $\frakg = \frakh \oplus \frakp$: we require that $i \circ \pi_\frakh$ and $\bi \circ \pi_\frakp$ are orthogonal projections in $\Gamma_H(\calP, \frakg)$, with $i \circ \pi_\frakh + \bi \circ \pi_\frakp = \Id$. This is equivalent to say that $\hh$ is defined with a metric on $\Gamma_H(\calP, \frakh)$ and a metric on $\Gamma_H(\calP, \frakp)$. So, by hypothesis $[i(\Gamma_H(\calP, \frakh))]^\perp = \Gamma_H(\calP, \frakp)$, and, from \eqref{eq-pifrakh-i-pifrakp-prifrakp-bi} and \eqref{eq-im-varpiLie-i-omegaLie-perp-i}, one gets $0 = \pi_\frakh \circ (\varpiLie - i \circ \omegaLie^\calP) = \omegaLie - \omegaLie^\calP$, to which we conclude that the associated (Ehresmann) connection $\nabla^\calP$ in the triple $(g, h, \nabla^\calP)$ is the (Ehresmann) connection $\nabla^\omega$ defined by the reductive decomposition of $\varpi$. This relates the metrics $g$ and $\bh$ by $g = (\betaLie \circ \nabla^\omega)^* \bh$ since $\tvarpiLie = \betaLie \circ \nabla^\omega$.

This implies also that $\Theta^{\hh} = \pi_\frakh$, since $\omegaLie^\calP \circ \varpiLie^{-1} = \omegaLie \circ \varpiLie^{-1} = \pi_\frakh \circ \varpiLie \circ \varpiLie^{-1} = \pi_\frakh$. Then $p_\frakh = i \circ \pi_\frakh$, so that $1 - p_\frakh = 1 - i \circ \pi_\frakh = \bi \circ \pi_\frakp$, from which we get $\bnabla^{\hh} = \bi$ since $r = \pi_\frakp$.

\subsection{Gauge transformations and diffeomorphisms}
\label{subsec-gaugeXdiff}

In \cite[Sect.~3]{LazzMass12a}, infinitesimal (inner) gauge transformations of connections defined on transitive Lie algebroids where identified as Lie derivatives along vector fields $\iota_\calP(v)$ for $v \in \Gamma_H(\calP, \frakh)$ (see \eqref{eq-def-lie-derivative}). For Atiyah Lie algebroids (the present situation), “finite” gauge transformations where also considered, since then the gauge group of the principal bundle acts on connections, and its Lie algebra identifies with $\Gamma_H(\calP, \frakh)$. Let us recall the result in \cite[Prop.~3.13]{LazzMass12a}: for any connection $1$-form $\omega$ on $\calP$ with associated $1$-form $\alpha \in \Omega^1_\lie(\calP, \frakh)$ such that $\frakX = \nabla^\omega - \iota_\calP \circ \alpha(\frakX)$ for any $\frakX \in \Gamma_H(\calP)$, and for any $v \in \Gamma_H(\calP, \frakh)$, one has
\begin{equation}
\label{eq-Lv-alpha}
L_v \alpha \defeq L_{\iota_\calP(v)} \alpha
= - \left( \hd_\frakh v + [\alpha, v] \right)
\end{equation}

We are going to show that the Lie derivative along any element $\frakX \in \Gamma_H(\calP)$ combines an inner gauge transformation and an infinitesimal diffeomorphism. Indeed, $\frakX \locto X \oplus \gamma$, where $\gamma : \calU \to \frakh$ is the generator of an infinitesimal gauge transformation, while $X$, as a vector field, is the generator of an infinitesimal diffeomorphism. 

Notice that applying the Lie derivative along $X \in \Gamma(T\calM)$ on elements in $\Omega^\grast_\lie(\calP, \frakh)$ or $\Omega^\grast_\lie(\calP, \frakg)$ does not make sense. We have to choose a lift $\frakX$ of $X$ in $\Gamma_H(\calP)$ for the Lie derivative to make sense. This necessarily  “add” an inner part in the game, \textit{i.e.} an inner gauge transformation.

Let $\omega$ be a connection $1$-form on $\calP$ and $\omegaLie \in \Omega^1_\lie(\calP, \frakh)$ its associated $1$-form as in \eqref{eq-diagram-ehresmann} (see also \eqref{eq-diagram-cartan-reductive}), then \eqref{eq-Lv-alpha} becomes
\begin{equation*}
L_v \omegaLie 
= \hd_\frakh v + [v, \omegaLie].
\end{equation*}

\begin{proposition}
\label{prop-lie-derivative-A-theta}
For any $\frakX \in \Gamma_H(\calP)$, the $1$-form $L_\frakX \omegaLie$ vanishes on $\iota_\calP(\Gamma_H(\calP, \frakh))$.

With $\omegaLie \locto - A + \theta_\frakh$ and $\frakX \locto X \oplus \gamma$, one has
\begin{equation}
\label{eq Lie derivative local omegaLie}
L_\frakX \omegaLie \locto - L_X A + \underbrace{\dd \gamma + [A, \gamma]}_{\rdefeq D_A \gamma = \delta_\gamma A},
\end{equation}
where the first term represents the action of the infinitesimal  diffeomorphism $X$ on $-A$ and the last two terms the action of the infinitesimal gauge transformation $\gamma$, identified to the covariant derivative of $\gamma$ along the gauge field $A$.
\end{proposition}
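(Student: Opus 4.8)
The plan is to reduce both statements to the Lie-algebroid version of Cartan's magic formula and, for the local identity, to transport that formula through the trivialization of Sect.~\ref{sec-local-trivializations}. First I would record, for any $1$-form $\omega$ in the calculus of Sect.~\ref{sec-differential-calculi} and any $\frakX, \frakY \in \Gamma_H(\calP)$, the identity
\[
(L_\frakX \omega)(\frakY) = \frakX \cdot \omega(\frakY) - \omega([\frakX, \frakY]).
\]
This is immediate from $L_\frakX = i_\frakX \hd_\frakh + \hd_\frakh i_\frakX$ and the degree-one instance of \eqref{eq-differentialAtiyahPh}: one has $(i_\frakX \hd_\frakh \omega)(\frakY) = \frakX\cdot\omega(\frakY) - \frakY\cdot\omega(\frakX) - \omega([\frakX,\frakY])$ and $(\hd_\frakh i_\frakX \omega)(\frakY) = \frakY\cdot\omega(\frakX)$, and the two $\frakY\cdot\omega(\frakX)$ terms cancel. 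Part (i) then follows at once: taking $\frakY = \iota_\calP(v)$, using $\omegaLie\circ\iota_\calP = \Id$ and the relation $[\frakX,\iota_\calP(v)] = \iota_\calP(\frakX\cdot v)$ from Lemma~\ref{lem-actions-vector-fields}, one gets $(L_\frakX\omegaLie)(\iota_\calP(v)) = \frakX\cdot v - \omegaLie(\iota_\calP(\frakX\cdot v)) = \frakX\cdot v - \frakX\cdot v = 0$.

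For part (ii) I would evaluate the (trivialized) magic formula on a general $\frakY\locto Y\oplus\eta$, the trivialization being an isomorphism for the differential, the bracket and the module action. The needed local data are all recorded earlier: $\omegaLie\locto -A + \theta_\frakh$ (from \eqref{eq-local-omegalie}), giving $\omegaLie(\frakY)\locto -A(Y)+\eta$; the local action $\frakX\cdot v\locto X\cdot\mu + [\gamma,\mu]$ for $v\locto\mu$; and the local bracket $[\frakX,\frakY]\locto [X,Y]\oplus(X\cdot\eta - Y\cdot\gamma + [\gamma,\eta])$ of Sect.~\ref{sec-local-trivializations}. Substituting and simplifying, the $X\cdot\eta$ and $[\gamma,\eta]$ contributions cancel and one is left with
\[
(L_\frakX\omegaLie)(\frakY) \locto -X\cdot A(Y) + A([X,Y]) + Y\cdot\gamma + [A(Y),\gamma].
\]
Recognizing $-X\cdot A(Y) + A([X,Y]) = -(L_X A)(Y)$, $Y\cdot\gamma = (\dd\gamma)(Y)$ and $[A(Y),\gamma] = [A,\gamma](Y)$, the right-hand side equals $(-L_X A + \dd\gamma + [A,\gamma])(Y) = (-L_X A + D_A\gamma)(Y)$, which is \eqref{eq Lie derivative local omegaLie}.

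The point to watch is that the outcome depends on $\frakY$ only through its anchor $Y$, the $\eta$-terms having dropped out; hence $L_\frakX\omegaLie$ is a genuine de~Rham $\frakh$-valued $1$-form with no Chevalley--Eilenberg component, which is what lets me read off \eqref{eq Lie derivative local omegaLie} from the evaluation on $Y$ alone and also re-confirms part (i), since such a form kills any element with vanishing anchor such as $\iota_\calP(v)\locto 0\oplus\gamma$. Routing the whole argument through the scalar magic formula is precisely what sidesteps the main technical nuisance, namely the bigraded sign bookkeeping one would otherwise meet in differentiating the mixed de~Rham/Chevalley--Eilenberg form $\theta_\frakh$ directly (where $\ds'_\frakh\theta_\frakh = \tfrac12[\theta_\frakh,\theta_\frakh]$ and its signs intervene); the remaining work is then just the bracket algebra displayed above.
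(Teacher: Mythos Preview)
Your proof is correct and follows essentially the same route as the paper: both establish the Cartan formula $(L_\frakX\omegaLie)(\frakY)=\frakX\cdot\omegaLie(\frakY)-\omegaLie([\frakX,\frakY])$, deduce part~(i) from $\omegaLie\circ\iota_\calP=\Id$ and $[\frakX,\iota_\calP(v)]=\iota_\calP(\frakX\cdot v)$, and then compute in the local trivialization with $\frakY\locto Y\oplus\eta$, observing that all $\eta$-contributions cancel. Your added remark that the $\eta$-independence re-confirms part~(i) is a nice touch the paper does not make explicit.
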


Here $D_A \gamma = \delta_\gamma A \defeq \dd \gamma + [A, \gamma]$ is the usual infinitesimal (gauge) action of $\gamma$ on the local connection $1$-form $A$.

\begin{proof}
For any $\frakY \in \Gamma_H(\calP)$, one has $(L_\frakX \omegaLie)(\frakY) = \frakX \cdot \omegaLie(\frakY) - \omegaLie([\frakX, \frakY])$. So, for any $v \in \Gamma_H(\calP, \frakh)$, one has $(L_\frakX \omegaLie)(\iota_\calP(v)) = \frakX \cdot \omegaLie \circ \iota_\calP(v) - \omegaLie([\frakX, \iota_\calP(v)]) = 0$ thanks to $\omegaLie \circ \iota_\calP(v) = v$ and $[\frakX, \iota_\calP(v)] = \iota_\calP(\frakX \cdot v)$.

With $\frakY \locto X \oplus \eta$ and $\omegaLie(\frakY) \locto -A(Y) + \eta$, one has 
\begin{align*}
\frakX \cdot \omegaLie(\frakY) & \locto
X \cdot (-A(Y) + \eta) + [\gamma, -A(Y) + \eta]
= -X \cdot A(Y) + X\cdot \eta - [\gamma, A(Y)] + [\gamma, \eta],
\\
- \omegaLie([\frakX, \frakY]) & \locto
A([X,Y]) - ( X\cdot \eta - Y \cdot \gamma + [\gamma, \eta])
\end{align*}
so that 
\begin{align*}
(L_\frakX \omegaLie)(\frakY) &\locto
-( X \cdot A(Y) - A([X,Y]) ) + Y \cdot \gamma + [A(Y), \gamma] 
= - (L_X A)(Y) + (\dd \gamma + [A, \gamma])(Y),
\end{align*}
where all contributions with $\eta$ cancel as expected.
\end{proof}

Now, let $\varpi$ be a Cartan $1$-form on $\calP$ and $\varpiLie \in \Omega^1_\lie(\calP, \frakg)$ its associated $1$-form as in \eqref{eq-def-varpiLie}.

\begin{proposition}
\label{prop-lie-derivative-varpiLie-loc}
For any $\frakX \in \Gamma_H(\calP)$, the $1$-form $L_\frakX \varpiLie$ vanishes on $\iota_\calP(\Gamma_H(\calP, \frakh))$.

With $\varpiLie \locto - \bA + i_\frakh \circ \theta_\frakh$ and $\frakX \locto X \oplus \gamma$, one has
\begin{equation*}
L_\frakX \varpiLie \locto - L_X \bA + \underbrace{\dd i_\frakh(\gamma) + [\bA, i_\frakh(\gamma)]}_{\rdefeq \delta_\gamma \bA},
\end{equation*}
where $\delta_\gamma \bA$ describes the infinitesimal gauge transformation of $\gamma$ on $\bA$ and $L_X \bA$ the infinitesimal diffeomorphism $X$ acting on $\bA$.
\end{proposition}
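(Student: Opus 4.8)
The plan is to follow the proof of Proposition~\ref{prop-lie-derivative-A-theta} almost line for line, substituting $\varpiLie$ for $\omegaLie$, replacing the Ehresmann normalization $\omegaLie \circ \iota_\calP = \Id$ by the Cartan normalization $\varpiLie \circ \iota_\calP = i$ from Proposition~\ref{prop commutativity varpiLie}, and carrying the inclusion $i_\frakh : \frakh \hookrightarrow \frakg$ through every bracket and every derivative.

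First I would establish the vanishing on $\iota_\calP(\Gamma_H(\calP,\frakh))$. Starting from the Cartan-type expression $(L_\frakX \varpiLie)(\frakY) = \frakX \cdot \varpiLie(\frakY) - \varpiLie([\frakX, \frakY])$, which is just \eqref{eq-def-lie-derivative} together with the degree-one case of \eqref{eq-differentialAtiyahPh}, I would specialize to $\frakY = \iota_\calP(v)$ with $v \in \Gamma_H(\calP, \frakh)$. Using $\varpiLie \circ \iota_\calP(v) = i(v)$, the identity $[\frakX, \iota_\calP(v)] = \iota_\calP(\frakX \cdot v)$ and the compatibility $\frakX \cdot i(v) = i(\frakX \cdot v)$ (both from Lemma~\ref{lem-actions-vector-fields}), this collapses to $(L_\frakX \varpiLie)(\iota_\calP(v)) = i(\frakX \cdot v) - i(\frakX \cdot v) = 0$.

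Next I would compute the local expression. Trivializing as in Section~\ref{sec-local-trivializations}, I would write $\frakY \locto Y \oplus \eta$, so that $\varpiLie(\frakY) \locto -\bA(Y) + i_\frakh(\eta)$ by \eqref{eq-varpilie-loc}, use the local bracket $[\frakX,\frakY] \locto [X,Y] \oplus (X\cdot\eta - Y\cdot\gamma + [\gamma,\eta])$, and use that the local action of $\frakX \locto X \oplus \gamma$ on a $\frakg$-valued function is $X \cdot (\,\cdot\,) + [i_\frakh(\gamma), \,\cdot\,]$ (the $\frakg$-analogue of the $\frakh$-action employed in the proof of Proposition~\ref{prop-lie-derivative-A-theta}, coming from the $\ds'_\frakg$ part of the differential and from Lemma~\ref{lem-actions-vector-fields}). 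Expanding $(L_\frakX \varpiLie)(\frakY) = \frakX \cdot \varpiLie(\frakY) - \varpiLie([\frakX,\frakY])$ and invoking that $i_\frakh$ is a Lie algebra morphism (so $i_\frakh([\gamma,\eta]) = [i_\frakh(\gamma), i_\frakh(\eta)]$), the terms carrying $\eta$ cancel in pairs and one is left with $-(L_X\bA)(Y) + i_\frakh(Y\cdot\gamma) - [i_\frakh(\gamma), \bA(Y)]$. Reading off $(L_X\bA)(Y) = X\cdot\bA(Y) - \bA([X,Y])$ and $(\dd\, i_\frakh(\gamma) + [\bA, i_\frakh(\gamma)])(Y) = i_\frakh(Y\cdot\gamma) - [i_\frakh(\gamma), \bA(Y)]$ then gives the announced identity $L_\frakX \varpiLie \locto -L_X\bA + \delta_\gamma\bA$.

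I do not expect a genuine obstacle here, since the argument is structurally identical to the Ehresmann case; the only point requiring vigilance is the bookkeeping of $i_\frakh$, because the gauge parameter $\gamma$ takes values in $\frakh$ whereas $\varpiLie$ and $\bA$ are $\frakg$-valued, so each bracket and each derivative must be read off in $\frakg$ after applying the inclusion. The pairwise cancellation of the $\eta$-dependent terms is the heart of the computation: it is the local shadow of the coordinate-free first assertion (evaluating the result on a purely vertical $\frakY \locto 0 \oplus \eta$ gives zero), which provides a convenient internal consistency check on the signs and on the placement of the inclusions.
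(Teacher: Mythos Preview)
Your proposal is correct and follows exactly the approach the paper takes: the paper's proof simply says to mimic the argument of Proposition~\ref{prop-lie-derivative-A-theta} using $\varpiLie \circ \iota_\calP = i$ and $\frakX \cdot i(v) = i(\frakX \cdot v)$ from Lemma~\ref{lem-actions-vector-fields}, which is precisely what you do, including the careful bookkeeping of the inclusion $i_\frakh$ in the local computation.
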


\begin{proof}
The proof is similar to the one for Prop.~\ref{prop-lie-derivative-A-theta}, using $\varpiLie \circ \iota_\calP = i$ and $\frakX \cdot i(v) = i(\frakX \cdot v)$, see \eqref{eq-actions-vector-fields}.
\end{proof}

\begin{corollary}
\label{cor-lie-derivative-varpiLie-reductive-loc}
If the Cartan geometry is reductive, then, with $\varpiLie \locto -(A + B) + i_\frakh \circ \theta_\frakh$ as in \eqref{eq-omegaLie-betaLie-varpiLie-loc}, one has
\begin{equation*}
L_\frakX \varpiLie \locto - L_X A - L_X B + \underbrace{\dd i_\frakh(\gamma) + [A, i_\frakh(\gamma)]}_{\rdefeq \delta_\gamma A} - \underbrace{[i_\frakh(\gamma), B]}_{\rdefeq \delta_\gamma B}.
\end{equation*}
\end{corollary}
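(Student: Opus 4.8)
The plan is to deduce the corollary directly from Proposition~\ref{prop-lie-derivative-varpiLie-loc} by projecting its $\frakg$-valued identity onto the two summands of the reductive splitting $\frakg = \frakh \oplus \frakp$. Concretely, I would start from the formula $L_\frakX \varpiLie \locto - L_X \bA + \dd i_\frakh(\gamma) + [\bA, i_\frakh(\gamma)]$ already established in that proposition, recall from \eqref{eq-omegaLie-betaLie-varpiLie-loc} that in the reductive case $\bA = A + B$ with $A$ valued in $\frakh$ and $B$ in $\frakp$ (the inclusions being omitted), and then simply sort the three contributions on the right-hand side into their $\frakh$- and $\frakp$-parts.

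The key steps, in order, are the following. First I would split $- L_X \bA = - L_X A - L_X B$, noting that $L_X$ is the ordinary Lie derivative acting componentwise on forms valued in the fixed vector space $\frakg$, so it preserves the decomposition $\frakg = \frakh \oplus \frakp$ and sends $A$ to an $\frakh$-valued form and $B$ to a $\frakp$-valued form. Next I would observe that $\dd i_\frakh(\gamma)$ is $\frakh$-valued since $i_\frakh(\gamma) \in \frakh$. Then I would expand the bracket as $[\bA, i_\frakh(\gamma)] = [A, i_\frakh(\gamma)] + [B, i_\frakh(\gamma)]$ and use that $\frakh$ is a subalgebra to place $[A, i_\frakh(\gamma)]$ in $\frakh$, and the reductivity condition $[\frakh, \frakp] \subseteq \frakp$ to place $[B, i_\frakh(\gamma)]$ in $\frakp$. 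Finally I would rewrite the latter by antisymmetry as $[B, i_\frakh(\gamma)] = - [i_\frakh(\gamma), B]$. Collecting the $\frakh$-valued pieces yields $- L_X A + \dd i_\frakh(\gamma) + [A, i_\frakh(\gamma)]$, whose last two terms are $\delta_\gamma A$, while the $\frakp$-valued pieces yield $- L_X B - [i_\frakh(\gamma), B]$ with $[i_\frakh(\gamma), B] = \delta_\gamma B$, which is exactly the claimed expression.

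I do not expect any genuine obstacle here, since the corollary is a routine bookkeeping specialization of the preceding proposition; the only point that actually uses the reductive hypothesis is the module relation $[\frakh, \frakp] \subseteq \frakp$, which is what guarantees that $[B, i_\frakh(\gamma)]$ has no component in $\frakh$ and therefore that the $\frakh$- and $\frakp$-parts separate cleanly. It is worth emphasizing that no bracket of the form $[\frakp, \frakp]$ ever arises, because every Lie bracket appearing involves $i_\frakh(\gamma) \in \frakh$; hence no closure assumption on $\frakp$ beyond its being an $\frakh$-module is needed. The vanishing of $L_\frakX \varpiLie$ on $\iota_\calP(\Gamma_H(\calP, \frakh))$, which is the only other content, is inherited verbatim from Proposition~\ref{prop-lie-derivative-varpiLie-loc} and requires no separate argument.
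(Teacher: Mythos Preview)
Your proposal is correct and matches the paper's approach: the paper gives no separate proof for this corollary, treating it as an immediate specialization of Proposition~\ref{prop-lie-derivative-varpiLie-loc} obtained by inserting the reductive splitting $\bA = A + B$ and using $[\frakh,\frakp]\subset\frakp$, exactly as you outline.
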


Here $\delta_\gamma B \defeq [i_\frakh(\gamma), B]$ is the infinitesimal (gauge) action of $\gamma$ on the local $1$-form $B$ which transforms homogeneously under (finite) gauge transformations.

\medskip
Let $\mromega$ be a fixed (background) connection $1$-form on $\calP$, $\mrnabla$ its horizontal lift and let $\mromega \locto \mrA$. 
\begin{proposition}
\label{prop infinitezimal action}
For any $\frakX = \mrnabla_X + \iota_\calP(v) \in \Gamma_H(\calP)$, any $X \in \Gamma(T\calM)$, and any $v \in \Gamma_H(\calP, \frakh)$ such that $v \locto \gamma$, one has
\begin{equation*}
L_{\frakX} \varpiLie \locto - L_X \bA + \dd i_\frakh(\mrA(X)) + [\bA, i_\frakh(\mrA(X))] + \delta_\gamma \bA,
\end{equation*}
so that, in the \emph{reductive} case, one has
\begin{equation}
\label{eq-lie-derivative-varpiLie-reductive-frakX-loc}
L_{\frakX} \varpiLie \locto - L_X A + D_A i_\frakh(\mrA(X)) - L_X B - [i_\frakh(\mrA(X)), B]  + \delta_\gamma A - \delta_\gamma B.
\end{equation}
\end{proposition}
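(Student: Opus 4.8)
The plan is to reduce everything to the local formula already established in Proposition~\ref{prop-lie-derivative-varpiLie-loc}, the only genuinely new input being the local trivialization of the particular lift $\frakX = \mrnabla_X + \iota_\calP(v)$. First I would apply \eqref{eq-trivial-X-v} with the background connection $\mromega$ playing the role of $\omega^\calP$: since $\mrA = s^*\mromega$ is the local trivialization of $\mromega$ and $v \locto \gamma$, this yields
\begin{equation*}
\frakX = \mrnabla_X + \iota_\calP(v) \locto X \oplus \bigl( \mrA(X) + \gamma \bigr).
\end{equation*}
In other words, the full local vertical component of $\frakX$ is $\mrA(X) + \gamma$, not merely $\gamma$: the background connection contributes the extra piece $\mrA(X)$.

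Next I would feed this trivialization into Proposition~\ref{prop-lie-derivative-varpiLie-loc}, which computes $L_\frakX \varpiLie$ for an arbitrary $\frakX \locto X \oplus \gamma'$ as $-L_X \bA + \dd i_\frakh(\gamma') + [\bA, i_\frakh(\gamma')]$. Taking $\gamma' = \mrA(X) + \gamma$ and using linearity of $\dd$, of $i_\frakh$, and bilinearity of the bracket, one separates the contribution built from $\mrA(X)$ from the contribution built from $\gamma$; recognizing $\dd i_\frakh(\gamma) + [\bA, i_\frakh(\gamma)] = \delta_\gamma \bA$ then gives exactly the first displayed identity
\begin{equation*}
L_{\frakX} \varpiLie \locto - L_X \bA + \dd i_\frakh(\mrA(X)) + [\bA, i_\frakh(\mrA(X))] + \delta_\gamma \bA.
\end{equation*}

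For the reductive statement \eqref{eq-lie-derivative-varpiLie-reductive-frakX-loc} I would substitute $\bA = A + B$ with $A$ valued in $\frakh$ and $B$ in $\frakp$ (inclusions omitted as throughout Sect.~\ref{sec Cartan reductive}) and project onto the two summands of $\frakg = \frakh \oplus \frakp$. The term $-L_X \bA$ splits as $-L_X A - L_X B$. In the bracket $[\bA, i_\frakh(\mrA(X))]$ note that $\mrA(X) \in \frakh$, so that $[A, i_\frakh(\mrA(X))] \in \frakh$ recombines with $\dd i_\frakh(\mrA(X))$ into the $\frakh$-covariant derivative $D_A i_\frakh(\mrA(X))$, while $[B, i_\frakh(\mrA(X))] = - [i_\frakh(\mrA(X)), B] \in \frakp$ stays in the $\frakp$-sector. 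Finally $\delta_\gamma \bA$ splits as $\delta_\gamma A - \delta_\gamma B$ precisely as in Corollary~\ref{cor-lie-derivative-varpiLie-reductive-loc}. Collecting the six resulting terms reproduces \eqref{eq-lie-derivative-varpiLie-reductive-frakX-loc}.

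The computation is entirely routine linearity and bookkeeping, so I expect no real obstacle; the one point deserving care is the reductive splitting of the single bracket $[\bA, i_\frakh(\mrA(X))]$, where one must invoke $[\frakh,\frakh]\subset\frakh$ and $[\frakh,\frakp]\subset\frakp$ to see that the $A$-part feeds the covariant derivative $D_A$ while the $B$-part produces the homogeneous $\frakp$-valued term $-[i_\frakh(\mrA(X)), B]$, together with matching the sign convention fixed for $\delta_\gamma B$ in Corollary~\ref{cor-lie-derivative-varpiLie-reductive-loc}.
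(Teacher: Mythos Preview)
Your proposal is correct and follows essentially the same approach as the paper: trivialize $\frakX = \mrnabla_X + \iota_\calP(v) \locto X \oplus (\mrA(X) + \gamma)$ via \eqref{eq-trivial-X-v}, then apply Proposition~\ref{prop-lie-derivative-varpiLie-loc} and Corollary~\ref{cor-lie-derivative-varpiLie-reductive-loc} with $\gamma$ replaced by $\mrA(X) + \gamma$. Your write-up simply spells out in more detail the reductive splitting that the paper leaves implicit.
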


\begin{proof}
One has $\mrnabla_X + \iota_\calP(v) \locto X \oplus (\mrA(X) + \gamma)$ so that, using Prop.~\ref{prop-lie-derivative-varpiLie-loc} and Corollary~\ref{cor-lie-derivative-varpiLie-reductive-loc} with $\gamma$ replaced by $\mrA(X) + \gamma$, one gets the results.
\end{proof}

We shall see below in Section \ref{subsec-GRanomalies} a direct application of this local description of the Lie derivative on an Atiyah Lie algebroid.

\section{Applications}
\label{sec Applications}

Let us now make some bridges with quite recent or earlier results disseminated in the literature.

\subsection{Comparison with Crampin and Saunders' approach}
\label{sec comparison Crampin Saunder}

In \cite{CramSaun16a}, the authors have recast the notion of Cartan geometry in the language of Lie groupoids and Lie algebroids. To define \emph{infinitesimal Cartan geometry}, they introduce two Lie algebroids, which we denote 
respectively by $A\mathcal{H}$ and $A\mathcal{G}$. When these Lie algebroids come from Lie groupoids, they are realized as spaces of projectable vector fields of the fiber bundle $E \defeq \calQ \times_G G/H$ (using our notation) and the spaces of smooth sections of these Lie algebroids are isomorphic to our $\Gamma_H(\calP)$ and $\Gamma_G(\calQ)$ \cite[Sect.~3.5]{CramSaun16a}. But their \emph{infinitesimal} construction does not always reduce to this case.

They define a Cartan connection (an \emph{infinitesimal Cartan connection}) on $A\mathcal{G}$ as a splitting $\gamma : T\calM \rightarrow A\mathcal{G}$ which fulfills (in particular) the following condition \cite[Sect.~6.4]{CramSaun16a}:
\begin{equation}
\forall x\in \calM,  \gamma_x(T_x \calM)\cap(A\mathcal{H})_x = \{0\}.
\label{CrampinCondition}
\end{equation}
This condition must be compared to the well-known condition for an Ehresman connection on $\calQ$ which reduces to a Cartan connection on $\calP$ (see \cite[Sect.~A.3]{Shar97a}). This can also be compared with~\eqref{eq-ker omega cap im J = 0} established in the framework of Lie algebroids as a particular outcome of Proposition~\ref{prop Ehresmann to Cartan Lie}. 

Our approach and results focus on the following points, that are not addressed in detail in the approach developed in \cite{CramSaun16a}. Firstly, we focus our approach to Cartan connections in a “gauge field” interpretation of Atiyah Lie algebroids, as we did in \cite{FourLazzMass13a,LazzMass12a}. Next, we build a new global view of the interrelations between all the “usual” structures involved in Cartan geometry transcribed in the framework of Atiyah Lie algebroids, summarized in the diagram \eqref{eq-diagram}, in which Cartan connections are naturally embedded and characterized, see Theorem~\ref{thm eq cartan iso}. Finally, as already mentioned, our constructions rely on sections of fiber bundles without introducing, as far as possible, any geometrical (pointwise) structures. This last point can sometimes make it difficult to establish direct connections with \cite{CramSaun16a}.

On the other hand, the approach in \cite{CramSaun16a} focuses on the following points, that are not addressed by the present paper. Cartan geometries are defined at the level of Lie groupoids, even if a large part of the developments focuses on the infinitesimal version at the level of Lie algebroids \cite[Sect.~6.4]{CramSaun16a}. Moreover, their Cartan geometries are a generalization of the “ordinary” notion of Cartan geometry, as defined for  instance in \cite{Shar97a} (see \cite[Sect.~7.1~\&~7.2]{CramSaun16a} for relations with the usual approaches). Here, we definitively stay in the framework developed in \cite{Shar97a}.

Since \cite{CramSaun16a} contains a lot of developments out of the scope of this paper (some of them may be addressed in forthcoming papers), we encourage the reader to study these points directly in this book.

\subsection{A mathematical framework for gravitational anomalies}
\label{subsec-GRanomalies}

In Section~\ref{subsec-gaugeXdiff}, we have shown that a Lie derivative along $\frakX \in \Gamma_H(\calP)$ represents an infinitesimal inner gauge transformation combined with an infinitesimal diffeomorphism. Representing  in a mathematically rigorous and natural way this combined action on fields has been a challenge in theoretical physics for a while. 

For instance,  \cite{LangSchuStor84a} is devoted to the study of gravitational anomalies of the Adler--Bardeen type. The authors use the so-called BRST differential algebra of a gauge theory in which the infinitesimal local gauge parameter is turned into the Faddeev--Popov ghost field: this structure is algebraic in nature. On the other hand,  the action of the Lie algebra of $\Diff(\calM)$ (the space of smooth vector fields $\Gamma(T\calM)$ on $\calM$) is geometric in nature. In order to find a BRST treatment that encompasses both (pure) gauge transformations (inner/algebraic) and diffeomorphisms (outer/geometric), one has to find a suitable global mathematical structure in which both makes sense and can be “embedded”.

To be more precise, working in a local trivialization of a gravitation theory written in terms of the reductive Cartan geometry $\frakg = \frakh \oplus \frakp = \frakso(1,3) \oplus \bbR^{1, 3}$, the study in \cite{LangSchuStor84a} requires to define the combined action of an infinitesimal inner gauge transformation and an infinitesimal diffeomorphism on (reductive) Cartan connections. Formulas are proposed for the actions of $X \in \Gamma(\calM)$ and $v \in \Gamma_H(\calP, \frakh)$ using a background connection (we use here our notations). 

Once these actions are given, the paper then raises the following question: what is the correct geometrico-algebraic framework allowing to make sense of these relations? Let us quote here the reference~19 in \cite{Stor06a}: `\emph{These are examples where difficulties to write down WZCC signal an improper -- incomplete -- algebraic setup. The following examples also bring some water to this mill. It is tempting to require as a principle that the Wess--Zumino algebraic consistency be fulfilled in further constructions of that sort.}' In \cite{Stor93a}, Stora reiterates the fact that `\emph{the geometry involved is not widely known in physical communities}'.

The main problem is that it is not possible to make a vector field $X$ acts on forms on $\calP$: one needs a lifting, which is an extra structure on $\calP$. In fact, it is not the case that such a lifting is needed. Indeed, let us show that the present “algebraic setup”, allowing to work directly at the level of $\calP$, is the correct one to deal with this problem, and then to answer this long standing question.

\newcommand{\bfxi}{\bm{\xi}}
\newcommand{\bfomega}{\bm{\omega}}
\newcommand{\bfOmega}{\bm{\Omega}}
\newcommand{\bfmromega}{\bm{\mathring{\omega}}}
\newcommand{\bfmrR}{\bm{\mathring{R}}}
\newcommand{\bfe}{\bm{e}}
\newcommand{\bfmrnabla}{\bm{\mathring{\nabla}}}

We focus on two sets of relations given in \cite{LangSchuStor84a}, that we reproduce here with the original notations but with bold symbols to distinguish them from ours. These relations are written in a local trivialization of $\calP$ and below we refer to our notations as given in Prop.~\ref{prop infinitezimal action}. The first equations are (2a), (2b), (2c), and (2d) in \cite{LangSchuStor84a}:
\begin{equation}
\label{eq 2a-d}
\begin{aligned}
W(\bfxi) \bfomega &= - L_{\bfxi} \bfomega + \dd i_{\bfxi} \bfmromega + [\bfomega, i_{\bfxi} \bfmromega],
&
W(\bfxi) \bfe &= - L_{\bfxi} \bfe - (i_{\bfxi} \bfmromega) \bfe, 
\\
W(\bfOmega) \bfomega &= \dd \bfOmega + [\bfomega, \bfOmega],
&
W(\bfOmega) \bfe &= - \bfOmega \bfe .
\end{aligned}
\end{equation}
They express the actions of $\bfxi$ (our $X$) and $\bfOmega$ (our $\gamma$) on the Cartan connection  $\bfomega + \bfe$ (where $\bfomega$ is our $A$ and $\bfe$ is our $B$), using a fixed background connection $\bfmromega$ (our $i_\frakh \circ \mrA$) on $\calP$. 

The second set of equations, (3a), (3b), and (3c) in \cite{LangSchuStor84a}, but with the correct signs on the RHS (deduced by us from (2a), (2b), (2c), and (2d) in \cite{LangSchuStor84a}), are:
\begin{equation}
\label{eq 3a-c}
\begin{aligned}
[W(\bfOmega), W(\bfOmega')] &= - W([\bfOmega, \bfOmega']),
\\
[W(\bfxi), W(\bfOmega)] &= - W(i_{\bfxi} (\dd \bfOmega + [\bfmromega, \bfOmega])),
\\
[W(\bfxi), W(\bfxi')] &= - W([\bfxi, \bfxi']) + W(i_{\bfxi} i_{\bfxi'} \bfmrR).
\end{aligned}
\end{equation}
where $\bfmrR \defeq \dd \bfmromega + \tfrac{1}{2}[\bfmromega, \bfmromega]$ is the curvature of $\bfmromega$. These relations express the Lie brackets between the $\bfxi$ and the $\bfOmega$ induced by the previous relations. As expected from \eqref{eq 2a-d} where there is, for instance, a minus sign in front of $L_{\bfxi}$, this is an antirepresentation (hence the minus sign in the RHS) of the Lie algebra generated by $\bfxi$ and $\bfOmega$. One could have changed the sign of the action $W$ (\textit{i.e.} all the signs on the RHS of \eqref{eq 2a-d}) to get a representation: we have preferred to revise equations \eqref{eq 3a-c} for the commutators.

In order to connect \cite{LangSchuStor84a} with our results, one has to write in a local trivialization the action of $\frakX$ on $\varpiLie$ given by $L_{\frakX} \varpiLie$ as an action of $X$ and $\gamma$ on $A$ and $B$, as computed in \eqref{eq-lie-derivative-varpiLie-reductive-frakX-loc}. We have to take care of two things. Firstly, notice that $L_{\frakX}$ is a representation, while the convention in \cite{LangSchuStor84a} is to take an antirepresentation: so $W(\bfOmega)$ and $W(\bfxi)$ are to be compared to the \emph{local trivializations} of the action $-L_{\frakX}$ when $\frakX\locto \bfxi\oplus (i_{\bfxi}\bfmromega + \bfOmega)$. 
Secondly, this action has to be computed on $A$ and $B$, so that one has to consider the action on $-\varpiLie \locto A + B - i_\frakh \circ \theta_\frakh = \bfomega + \bfe - i_\frakh \circ \theta_\frakh$ and not on $\varpiLie$ (since $-A=-\bfomega$ is not a connection $1$-form), see \eqref{eq-omegaLie-betaLie-varpiLie-loc}. 

Combining these two subtleties, we consider $(-L_{\frakX}) (-\varpiLie) = L_{\frakX} \varpiLie$ in a local trivialization, which is nothing but \eqref{eq-lie-derivative-varpiLie-reductive-frakX-loc}. So, \textit{in fine}, as we can directly verify, the relations \eqref{eq 2a-d} defining the action $W$ can be exactly compared to the RHS of \eqref{eq-lie-derivative-varpiLie-reductive-frakX-loc}, where here the action of $\frakh$ on $\frakp$ is denoted by a (left) product while we used a bracket: $[i_\frakh(\gamma), B] = \gamma B$ is the action of a matrix $\gamma$ in $\frakso(1,3)$ on a vector (valued $1$-form) $B$ in $\bbR^{1, 3}$ (see details in Sect.~\ref{subsec Field theory and Lagrangian}). 

In the same way, the commutators in \eqref{eq 3a-c} can be deduced from the local trivialization of the commutator $[\frakX, \frakY]$ given in \eqref{eq local bracket} using $[-L_{\frakX}, -L_{\frakY}] = - L_{-[\frakX, \frakY]}$.

Since \eqref{eq-lie-derivative-varpiLie-reductive-frakX-loc} is the local trivialization of $L_{\frakX} \varpiLie$ (for the reductive case), one sees that equations \eqref{eq 2a-d} make sense as a unique global expression, which does not depend on the choice of a background connection. This is achieved by considering infinitesimal inner gauge transformations and infinitesimal diffeomorphisms as a single object $\frakX \in \Gamma_H(\calP)$, which mixes, by construction, algebraic and geometric properties. Notice also that one can recover in a unique way the infinitesimal diffeomorphism $X = \rho_\calP(\frakX)$ but not the infinitesimal inner gauge transformation.

A point raised in \cite{LangSchuStor84a} concerns a change of variables for $(\bfOmega, \bfxi)$ using the background connection $\bfmromega$ which produces the splitting
\begin{equation*}
\begin{tikzcd}[column sep=30pt, row sep=25pt]
\algzero
	\arrow[r]
&\Gamma_H(\calP, \frakh) 
	\arrow[r, "\iota_\calP", dashed] 
& \Gamma_H(\calP) 
	\arrow[l, bend right=25, "\bfmromega"']
	\arrow[r, "\rho_\calP", dashed] 
& \Gamma(T\calM) 
	\arrow[l, bend right=25, "\bfmrnabla"']
	\arrow[r]
& \algzero\,.
\end{tikzcd}
\end{equation*}
It is proposed in \cite{LangSchuStor84a}, equation (7), to consider $\bfOmega - i_{\bfxi}(\bfomega - \bfmromega)$ in place of $\bfOmega$. Translated in our notations, in the decomposition $\frakX = \mrnabla_X + \iota_\calP(v) \locto X \oplus (\mrA(X) + \gamma)$, this corresponds to using the connection $\nabla$ associated to $\omega$ in place of $\mrnabla$, to get $\frakX = \nabla_X + \iota_\calP(v') \locto X \oplus (A(X) + \gamma')$, since then $\gamma' = \gamma - (A(X) - \mrA(X))$. This change of variables is further discussed in \cite{Stor06a} in terms of `\emph{“Field dependent” Lie algebras}' and is described as an unsatisfactory construction: this operation `\emph{is quite strange since it mixes up parameters labeling the transformations with the field variables they act on}' \cite[p.~180]{Stor06a}. Indeed, it is clear that this trivialization of $\frakX$ as $(X, \gamma')$ depends on the (reductive) Cartan connection $\varpi = i_\frakh \circ \omega + i_\frakp \circ \beta$ (see Sect.~\ref{sec Cartan reductive}) on which $\frakX$ acts by $L_{\frakX}$. So, this change of variables is not a direction to explore further.

To conclude this section, let us emphasize once again that the Lie Algebroid framework discussed above naturally solves the long standing technical puzzles mentioned in \cite{LangSchuStor84a, Stor06a}, since it permits to define in a global way all the structures involved. This avoids the introduction of a background connection and clarify the fact that it is only required to get a local trivialization.

\subsection{Generalized Cartan connections}
\label{sec generalized cartan connections}

In \cite{FourLazzMass13a,LazzMass12a}, a notion of generalized Ehresmann connections was introduced and studied starting from a characterization of Ehresmann connections in the framework of Atiyah Lie algebroids. 

In the same spirit, from Theorem~\ref{thm eq cartan iso}, it is natural to introduce a notion of  generalized Cartan connections as isomorphisms of $C^\infty(\calM)$-modules $\varpiLie : \Gamma_H(\calP) \to \Gamma_H(\calP, \frakg)$ such that \eqref{eq-diagram-triangle-upleft} no longer commutes, \textit{i.e.} we do not assume the normalization condition $\varpiLie \circ \iota_\calP = i$, see \eqref{eq-diagram-triangle-upleft}. This notion will not be studied in detail in this paper. The mathematical study and the physical implications of such generalized Cartan connections will be addressed elsewhere.

\subsection{Field theory and Lagrangian}
\label{subsec Field theory and Lagrangian}

It is possible to recast in this Atiyah Lie algebroid language the usual Palatini action (\textit{i.e.} the Einstein-Hilbert action written in terms of tetrads). Here we only sketch the picture.

Let us consider a Cartan connection $\varpiLie$ with a reductive underlying Lie algebra $\frakg = \frakso(1,3) \oplus \mathbb{R}^{1,3}$. Thus, the Cartan connection looks like:
\begin{align*}
\varpiLie = i \circ \omegaLie + \bi \circ \betaLie.
\end{align*}
Let us consider a metric $h$ on the kernel $\Gamma_H(\calP, \frakh)$ which is $\ad$-invariant. The construction will make use of structures introduced in \cite{FourLazzMass13a}, for instance the combined integration along the kernel and the base manifold $\calM$, hereafter denoted $\int_A$.

The connection $\varpiLie$ transforms as an ordinary connection under an infinitesimal gauge transformation $v\in\Gamma_H(P;\frakh)$ as
\begin{equation*}
\delta_v \varpiLie = \hd v + [v,\varpiLie],
\end{equation*}
and its curvature transforms homogeneously
\begin{equation*}
\delta_v \bOmegaLie = [v,\OmegaLie].
\end{equation*}
Since $\frakg$ is reductive, $[\frakh,\frakp]\subset\frakp$, and both parts of $\varpiLie$ transform as
\begin{align*}
\delta_v \omegaLie = \hd v + [v,\omegaLie]
\quad
\text{and}
\quad
\delta_v \betaLie = [v,\betaLie].
\end{align*}
Recall that $v$ takes values in $\frakh$, and $\beta$ in $\frakp=\mathbb{R}^{1,3}$, thus in the matrix representation of the Lie algebra $\frakg$, the commutator $[v,\betaLie]$ reads:
\begin{equation*}
[v,\betaLie]=
\left[
\begin{pmatrix}
v & 0\\
0 & 0
\end{pmatrix}
,
\begin{pmatrix}
0 & \betaLie\\
0 & 0
\end{pmatrix}
\right]
=
\begin{pmatrix}
0 & v \betaLie\\
0 & 0
\end{pmatrix}
\end{equation*}
and it will be written as a left multiplication “$v \betaLie$”. For the same reasons, one has
\begin{align*}
[v,\betaLie]^t = - \betaLie^t v.
\end{align*}
A direct computation shows that the $\frakh$-valued $2$-form $\betaLie\wedge\betaLie^t$ transforms as:
\begin{equation*}
\delta_v (\betaLie\wedge\betaLie^t) = [v,\betaLie\wedge\betaLie^t].
\end{equation*}
Let $\OmegaLie$ be the $\frakh$-part of the curvature $\bOmegaLie$, and set the $\frakp$-part to be zero (torsionless condition). Then, the action
\begin{equation*}
S(\varpiLie)=S(\betaLie)=\int_A h(\OmegaLie,\star(\betaLie\wedge\betaLie^t))
\end{equation*}
is the Palatini action written in the language of Atiyah Lie algebroids. It is invariant due to the $\ad$-invariance of the metric $h$ together with the gauge transformations of the different parts. 

This formulation is a bit artificial, in the sense that $\betaLie\wedge\betaLie^t$ and $\OmegaLie$ have only geometric degrees of freedom as forms: so the integrand has no non trivial “inner” degree of freedom to be integrated along the kernel. However, this construction can directly be adapted to a possible generalized case, where $\varpiLie$ would be, for instance, a “generalized Cartan connection” (a mere isomorphism without assuming the normalized condition) as defined in see Sect.~\ref{sec generalized cartan connections}. In this case, $\varpiLie$ would carry inner degrees of freedom, which could make a non usual contribution to the usual Palatini action. This line of work will be investigated elsewhere.

\section{Conclusions}

In this paper we have investigated a new mathematical framework for Cartan geometry, using Atiyah Lie algebroids in their algebraic description. This led to characterize a Cartan connection as a normalized isomorphism in the commutative and exact diagram~\eqref{eq-diagram} completed by the diagram~\eqref{eq-diagram-cartan}. These two diagrams have been put in the context of the existing mathematical literature \cite{Mack87a, Shar97a, CramSaun16a} (see \textit{e.g.} Sections~\ref{sec exact comm diagram} and \ref{sec comparison Crampin Saunder}), to show to what extent they are new. Within this approach, we have moreover solved a long standing problem regarding the mathematical structures required to get the correct formulation of combined infinitesimal inner gauge transformations and infinitesimal diffeomorphisms in gravitational theories written in terms of Cartan connections \cite{LangSchuStor84a, Stor93a}. Further investigations may touch on a study of a generalized notion of Cartan connections suggested by our construction, and on the exploration of generalized Palatini-like actions.

\section*{Acknowledgments}

We warmly thank the Referee for bringing to our attention some known results on Atiyah Lie algebroids related to the construction of the diagram~\eqref{eq-diagram} in Section~\ref{sec-exact-commutative-diagram} and for pointing out to us useful references.

The project leading to this publication has received funding from Excellence Initiative of Aix-Marseille University -- A*MIDEX, in the framework of the ``Investissements d'Avenir'' French Government program managed by the French National Research Agency (ANR). J.F. was supported by the Fonds de la Recherche Scientifique --- FNRS under the grant PDR no T.0022.19.


\bibliographystyle{plainnat}
\bibliography{Biblio}
\addcontentsline{toc}{section}{References}

\end{document}